\documentclass[aip,jmp,reprint,onecolumn]{revtex4-1}
\usepackage{amssymb,amsmath}

\newtheorem{proposition}{Proposition}
\newtheorem{lemma}{Lemma}

\newtheorem{corollary}{Corollary}
\newenvironment{proof}
{\par\noindent{\bf Proof.}} {\hfill$\scriptstyle\blacksquare$}

\begin{document}

\title {\bf Quantum informational properties of the Landau--Streater channel}

\author{Sergey N. Filippov}

\affiliation{Moscow Institute of Physics and Technology,
Institutskii Per. 9, Dolgoprudny, Moscow Region 141700, Russia}

\affiliation{Institute of Physics and Technology of the Russian
Academy of Sciences, Nakhimovskii Pr. 34, Moscow 117218, Russia}

\affiliation{Steklov Mathematical Institute of Russian Academy of
Sciences, Gubkina St. 8, Moscow 119991, Russia}

\author{Ksenia V. Kuzhamuratova}

\affiliation{Moscow Institute of Physics and Technology,
Institutskii Per. 9, Dolgoprudny, Moscow Region 141700, Russia}

\begin{abstract}
We study the Landau--Streater quantum channel $\Phi:
\mathcal{B}(\mathcal{H}_d) \mapsto \mathcal{B}(\mathcal{H}_d)$,
whose Kraus operators are proportional to the irreducible unitary
representation of $SU(2)$ generators of dimension $d$. We
establish $SU(2)$ covariance for all $d$ and $U(3)$ covariance for
$d=3$. Using the theory of angular momentum, we explicitly find
the spectrum and the minimal output entropy of $\Phi$. Negative
eigenvalues in the spectrum of $\Phi$ indicate that the channel
cannot be obtained as a result of Hermitian Markovian quantum
dynamics. Degradability and antidegradability of the
Landau--Streater channel is fully analyzed. We calculate classical
and entanglement-assisted capacities of $\Phi$. Quantum capacity
of $\Phi$ vanishes if $d=2,3$ and is strictly positive if $d
\geqslant 4$. We show that the channel $\Phi \otimes \Phi$ does
not annihilate entanglement and preserves entanglement of some
states with Schmidt rank $2$ if $d \geqslant 3$.
\end{abstract}

\maketitle

\section{Introduction}

The state of a finite dimensional quantum system is described by
the density operator $\varrho \in \mathcal{B}(\mathcal{H}_d)$
acting on the Hilbert space $\mathcal{H}_d$, $d = {\rm
dim}\mathcal{H}_d$. The density operator is Hermitian, positive
semidefinite, and has unit trace.

In the theory of open quantum systems, the most general form of
the density operator transformation due to its own evolution and
interaction with some environment (initially uncorrelated from the
system) is given by a quantum channel
$\Phi:\mathcal{B}(\mathcal{H}_{d_1}) \mapsto
\mathcal{B}(\mathcal{H}_{d_2})$, which is a completely positive
and trace preserving linear map (Ref.~\cite{holevo-2012}, section
6.3). In what follows, we consider the case when the system
dimension does not change, i.e., $d_1 = d_2 = d$.

There is a special class of unital quantum channels, which
preserve the maximally mixed state $\frac{1}{d} I$, where $I$ is
the identity operator, i.e. $\Phi[\frac{1}{d} I] = \frac{1}{d} I$.
The seminal result of Landau and Streater~\cite{Landau-Streater}
is that all unital channels $\Phi:\mathcal{B}(\mathcal{H}_2)
\mapsto \mathcal{B}(\mathcal{H}_2)$ are random unitary, i.e.
$\Phi[\varrho] = \sum_i p_i U_i \varrho U_i^{\dag}$ for some
probability distribution $\{p_i\}$ and unitary operators $U_i$
acting on $\mathcal{H}_2$; whereas for larger dimensions this is
not the case~\cite{Landau-Streater}. Moreover, Landau and
Streater~\cite{Landau-Streater} provided an example of quantum
channel $\Phi:\mathcal{B}(\mathcal{H}_d) \mapsto
\mathcal{B}(\mathcal{H}_d)$, which is unital but not random
unitary for all $d \geqslant 3$.

The main goal of this paper is to explore quantum informational
properties of the Landau--Streater map
\begin{equation} \label{LS-map}
\Phi[\rho] = \frac{1}{j(j+1)} \left( J_x \rho J_x + J_y \rho J_y +
J_z \rho J_z \right)
\end{equation}

\noindent defined through the $SU(2)$ generators $J_x$, $J_y$,
$J_z$ acting on a $(2j+1)$-dimensional Hilbert space
$\mathcal{H}_{2j+1}$. Physically, this space corresponds to the
state space of a spin-$j$ particle. Hermitian operators $J_x$,
$J_y$, $J_z$ are spin projection operators onto axes $x$, $y$,
$z$, respectively. Hereafter, we will use indices $x,y,z$ and
$1,2,3$ interchangeably. The map \eqref{LS-map} is completely
positive as it has diagonal sum (Kraus) representation
(Ref.~\cite{holevo-2012}, Corollary 6.13), and is trace preserving
as $\sum_{k=1}^3 J_k^2 = j(j+1) I$ (Ref.~\cite{Varshalovich},
section 6.1.2, formula (5)). The latter formula is also
responsible for unitality of the map~\eqref{LS-map}.

If $j=\frac{1}{2}$, then $J_k = \frac{1}{2} \sigma_k$, where
$\{\sigma_k\}_{k=1}^3$ is the conventional set of Pauli operators.
In this case $d=2$ and $\Phi$ is random
unitary~\cite{Landau-Streater}. Such a channel $\Phi$ is also
referred to as the best physical approximation of the universal
NOT gate for qubits~\cite{buzek-1999}.

If $j \geqslant 1$, then the map \eqref{LS-map} is an extremal
channel in the set of all channels, and therefore it cannot be
random unitary~\cite{Landau-Streater}. Also, in contrast to the
case $j = \frac{1}{2}$, the channel~\eqref{LS-map} differs from
the spin polarization-scaling channels~\cite{fm-2018} if $j
\geqslant 1$.

Since the Landau--Streater channel \eqref{LS-map} is essentially
unexplored, the goal of this paper is to fill the gap in both the
fundamental properties (such as covariance and spectrum of the
output states) and more specific quantum informational properties
(such as classical and quantum capacities, entanglement dynamics).

The paper is organized as follows.

In section~\ref{section-covariance}, we study covariance
properties of the Landau--Streater channel. In
section~\ref{section-spectrum-map}, we explicitly find the
spectrum of $\Phi$ for all $j$. In
section~\ref{section-spectrum-output}, we analyze the spectrum of
the output operator $\Phi[X]$ and reveal its peculiarities in the
case $j=1$. Such peculiarities are attributed to the fact that the
Landau--Streater channel for $j=1$ reduces to the Werner--Holevo
channel~\cite{wh-2002}. In section~\ref{section-multiplicativity},
we explicitly find the maximal $p$-norm and the minimal output
entropy of the general Landau--Streater channel. In
section~\ref{section-complementary}, physical realization of the
Landau--Streater channel and its complementary version is
discussed. In section~\ref{section-capacities}, different
capacities of the Landau--Streater channel are evaluated. In
section~\ref{section-entanglement}, we examine the entanglement
annihilation property of the channel $\Phi \otimes \Phi$. In
section~\ref{section-conclusions}, brief conclusions and outlook
are presented.

\section{Covariance} \label{section-covariance}

Following Refs.~\cite{holevo-1993}, consider a group $G$ and a
unitary representation $g \rightarrow U_g$, $g \in G$, in
$\mathcal{H}_d$. The channel $\Phi:\mathcal{B}(\mathcal{H}_d)
\mapsto \mathcal{B}(\mathcal{H}_d)$ is called covariant with
respect to representation $U_g$ if there exists a unitary
representation $g \rightarrow V_g$, $g \in G$, in $\mathcal{H}_d$
such that
\begin{equation}
\Phi[U_g X U_g^{\dag}] = V_g \Phi[X] V_g^{\dag}
\end{equation}

\noindent for all $g \in G$ and $X \in
\mathcal{B}(\mathcal{H}_d)$. Covariance of a channel has many
implications, e.g. the strong converse property of the
entanglement-assisted classical capacity~\cite{datta-2016}. Many
other properties and the structure of irreducibly covariant
quantum channels are reviewed in Ref.~\cite{mozrzymas-2017}.

We start with the simple observation that the Landau--Streater
channel \eqref{LS-map}, by construction, is endowed with the
$SU(2)$ covariance.

\begin{proposition} \label{prop-SU2-cov}
The Landau--Streater channel $\Phi:
\mathcal{B}(\mathcal{H}_{2j+1}) \rightarrow
\mathcal{B}(\mathcal{H}_{2j+1})$ is covariant with respect to the
unitary representation of $SU(2)$ for all spins $j$, namely,
$\Phi[U_g X U_g^{\dag}] = U_g \Phi[X] U_g^\dag$ for all $g \in
SU(2)$ and $X \in \mathcal{B}(\mathcal{H}_{2j+1})$.
\end{proposition}
\begin{proof}
Up to an irrelevant phase, any $U_g$, $g \in SU(2)$ can be
expressed through the $SU(2)$ generators
$\{J_{\alpha}\}_{\alpha\in\{x,y,z\}}$ as follows:
\begin{equation}
U_g = \exp \left( - i \theta \sum_{\alpha} n_{\alpha} J_{\alpha}
\right),
\end{equation}

\noindent where ${\bf n} = (n_x, n_y, n_z)$ is a unit vector in
$\mathbb{R}^3$, which defines the rotation axis, and $\theta \in
\mathbb{R}$ is the rotation angle. The operators $J_x,J_y,J_z$
satisfy the commutation relation $[J_{\alpha},J_{\beta}] = i
\sum_{\gamma=1}^3 e_{\alpha\beta\gamma} J_{\gamma}$, where
$e_{\alpha\beta\gamma}$ is the Levi-Civita symbol
(Ref.~\cite{Varshalovich}, section 2.1.2, formula (7)). Using such
a commutation relation, it is not hard to see that
(Ref.~\cite{Varshalovich}, section 3.1.3, formula (11); section
3.1.6, item (a); section 1.4.5, formula (33))
\begin{equation} \label{rotation}
U_g^{\dag} J_{\alpha} U_g = \sum_{\beta} Q_{\alpha\beta}
J_{\beta},
\end{equation}

\noindent where $(Q_{\alpha\beta})$ is the orthogonal matrix
describing the rotation in $\mathbb{R}^3$ around the axis ${\bf
n}$ by angle $\theta$ (Ref.~\cite{Varshalovich}, section 1.4.2).
Finally, we get
\begin{eqnarray}
\Phi[U_g X U_g^{\dag}] &=& \frac{1}{j(j+1)} \sum_{\alpha}
J_{\alpha} U_g X U_g^{\dag} J_{\alpha} \nonumber\\
&=& \frac{1}{j(j+1)} \sum_{\alpha} U_g (U_g^{\dag} J_{\alpha} U_g)
X (U_g^{\dag} J_{\alpha} U_g) U_g^{\dag} \nonumber\\
&=& \frac{1}{j(j+1)} \sum_{\alpha,\beta,\gamma} Q_{\alpha\beta}
Q_{\alpha\gamma} U_g J_{\beta}
X J_{\gamma} U_g^{\dag} \nonumber\\
&=& \frac{1}{j(j+1)} \sum_{\beta\gamma} \delta_{\beta\gamma} U_g
J_{\beta} X J_{\gamma} U_g^{\dag} \nonumber\\
&=& U_g \Phi[X] U_g^{\dag},
\end{eqnarray}

\noindent where we have taken into account that $\sum_{\alpha}
Q_{\alpha\beta} Q_{\alpha\gamma} = \delta_{\beta\gamma}$, the
Kronecker delta.
\end{proof}

Since the Landau--Streater channel is extreme in the set of all
channels~\cite{Landau-Streater} it follows that it is also an
extreme point of $SU(2)$ irreducibly covariant channels
(abbreviated as an EPOSIC channel). The general properties of an
EPOSIC channel such as the Kraus representation, the Choi matrix,
and the dual channel are reviewed in Ref.~\cite{nuwairan-2014}.

It turns out that in the case $j=1$ the Landau--Streater channel
is not only $SU(2)$ covariant, but also globally unitarily
covariant. It means that in dimension $d=3$ the
channel~\eqref{LS-map} possesses $U(3)$ covariance.

\begin{proposition} \label{prop-U3-cov}
In the case $j=1$, the Landau--Streater channel is globally
unitarily covariant, namely, for all $X \in
\mathcal{B}(\mathcal{H}_3)$ and unitary operators $U$ on
$\mathcal{H}_3$ the following holds
\begin{equation} \label{U3-covariance}
\Phi[U X U^\dag] = V \Phi[X] V^\dag,
\end{equation}

\noindent where the unitary operator $V$ is expressed through $U$
in the orthonormal basis of eigenvectors of $J_z$ via formula
\begin{equation} \label{UV-relation-U3}
U =
\begin{pmatrix}
u_{11} & u_{12} & u_{13} \\
u_{21} & u_{22} & u_{23} \\
u_{31} & u_{32} & u_{33}
\end{pmatrix},
\qquad
V =
\begin{pmatrix}
\overline{u_{33}} & -\overline{u_{32}} & \overline{u_{31}}
\\
- \overline{u_{23}} & \overline{u_{22}} & -\overline{u_{21}}
\\
\overline{u_{13}} & -\overline{u_{12}} & \overline{u_{11}}
\end{pmatrix}.
\end{equation}
\end{proposition}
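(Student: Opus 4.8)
The plan is to reduce the $j=1$ channel to a transposition-type map and then exploit the elementary covariance of transposition under unitaries. First I would record the structural identity that is special to spin $1$: writing the generators in the eigenbasis of $J_z$, or more transparently passing to the Cartesian basis in which $(J_k)_{ab} = -i\,e_{kab}$, the contraction of two Levi-Civita symbols $\sum_k e_{kac}e_{kdb} = \delta_{ad}\delta_{cb} - \delta_{ab}\delta_{cd}$ gives $\sum_k J_k X J_k = ({\rm Tr}\,X)\,I - R X^T R$, where the transpose is taken in the $J_z$ eigenbasis and $R = \left(\begin{smallmatrix}0&0&-1\\0&1&0\\-1&0&0\end{smallmatrix}\right)$ is the real symmetric involution ($R^\dag=R$, $R^2=I$) that converts the ordinary Cartesian transpose into the $J_z$-basis transpose (concretely, $R = S^\dag \bar S = S^T S$ with $S$ the spherical-to-Cartesian change of basis). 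Thus for $j=1$ the channel \eqref{LS-map} takes the Werner--Holevo form $\Phi[X] = \tfrac12\big(({\rm Tr}\,X)\,I - R X^T R\big)$.

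Next I would substitute $X \mapsto UXU^\dag$. The trace term is invariant, so the content sits entirely in the transpose term. Using $(UXU^\dag)^T = \bar U X^T U^T$ together with $U^T = \bar U^\dag$ for unitary $U$, and inserting $R^2=I$, I would rewrite $R(UXU^\dag)^T R = (R\bar U R)\,(R X^T R)\,(R\bar U R)^\dag$, where $R^\dag=R$ was used. Defining $V := R\bar U R$, which is unitary as the conjugate of the unitary $\bar U$ by the unitary involution $R$, this reads $R(UXU^\dag)^T R = V\,(RX^TR)\,V^\dag$. Combining with the (invariant) trace term and $V I V^\dag = I$ yields $\Phi[UXU^\dag] = V\,\Phi[X]\,V^\dag$, which is exactly \eqref{U3-covariance}.

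Finally, to recover the explicit relation \eqref{UV-relation-U3}, I would simply multiply out $V = R\bar U R$, where $\bar U=(\overline{u_{ab}})$ is the entrywise conjugate: left multiplication by $R$ replaces the first and third rows by the negatives of the original third and first rows (middle row unchanged), and right multiplication by $R$ does the same to the columns, which produces precisely the signed antidiagonal transpose of \eqref{UV-relation-U3}. The only genuinely nontrivial step is the first one, the collapse of $\sum_k J_k X J_k$ to a trace term minus a signed transpose; this is where the special structure of spin $1$ is essential, and it is cleanest to verify in the Cartesian representation and then transport to the $J_z$ eigenbasis, after which both the covariance and the explicit form of $V$ are formal. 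The main place to be careful is the bookkeeping of complex conjugates versus transposes and the convention-dependent signs encoded in $R$.
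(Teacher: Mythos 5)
Your proposal is correct, and it takes a genuinely different route from the paper's. The paper proves this proposition by brute force: it writes out the $j=1$ matrices $J_x,J_y,J_z$ in the $J_z$ eigenbasis, substitutes them into \eqref{LS-map}, and states that direct calculation verifies \eqref{U3-covariance}--\eqref{UV-relation-U3}. You instead derive the structural identity $\Phi[X]=\tfrac12\left(\mathrm{tr}[X]\,I-RX^{\top}R\right)$ from the Levi-Civita contraction $\sum_k e_{kac}e_{kdb}=\delta_{ad}\delta_{cb}-\delta_{ab}\delta_{cd}$ in the Cartesian representation $(J_k)_{ab}=-i e_{kab}$, transport it to the $J_z$ eigenbasis via $R=S^{\top}S$, and read off covariance from $(UXU^{\dag})^{\top}=\bar{U}X^{\top}U^{\top}$ with $V=R\bar{U}R$; I checked your $R$ and the product $R\bar{U}R$ against \eqref{UV-relation-U3}, and they agree entrywise, $V_{ab}=(-1)^{a+b}\,\overline{u_{4-a,4-b}}$. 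Your route buys three things the paper's does not: it explains \emph{why} the covariance holds (it is the elementary unitary covariance of transposition), it isolates exactly where unitarity of $U$ enters (only in making $V=R\bar{U}R$ unitary, since $(\bar U)^{\dag}=U^{\top}$ holds for any matrix), and it proves as a byproduct the Werner--Holevo identification \eqref{LS-WH}, which the paper asserts later without proof (your $R$ is $-W$ there, the sign being immaterial under conjugation); the paper's route buys only brevity, at the cost of an opaque verification that must hold identically in the nine entries of an arbitrary unitary $U$. Two small cautions: your phrase ``signed antidiagonal transpose'' is a misnomer --- the index map $(a,b)\mapsto(4-a,4-b)$ realized by $V=R\bar{U}R$ is a point reflection of $\bar U$ through the center of the matrix, not the antidiagonal transpose $(a,b)\mapsto(4-b,4-a)$, although the explicit row/column prescription you give is correct and reproduces \eqref{UV-relation-U3} exactly; and the entries of $R$ depend on the Condon--Shortley phase convention that the paper implicitly fixes by displaying $J_x$ and $J_y$ explicitly, which your choice of $S$ respects, as you note yourself.
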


\begin{proof}
Note that $V$ is unitary if $U$ is unitary. In the basis of
eigenvectors of $J_z$, the operators
$\{J_{\alpha}\}_{\alpha=x,y,z}$ have the following form if $j=1$:
\begin{equation}
J_x=
\begin{pmatrix} 0&\frac{1}{\sqrt{2}}&0\\
\frac{1}{\sqrt{2}}&0&\frac{1}{\sqrt{2}}\\0&\frac{1}{\sqrt{2}}&0\end{pmatrix},
\quad J_y=
\begin{pmatrix}0&-\frac{ i }{\sqrt{2}}&0\\ \frac{i}{\sqrt{2}}&0&-\frac{i}{\sqrt{2}}\\0&\frac{i}{\sqrt{2}}&0\end{pmatrix},
\quad J_z=
\begin{pmatrix}1&0&0\\0&0&0\\0&0&-1\end{pmatrix}.
\end{equation}

\noindent Substituting these operators in equation~\eqref{LS-map},
the direct calculation justifies the validity of formulas
\eqref{U3-covariance}, \eqref{UV-relation-U3}.
\end{proof}

The global unitary covariance is known to be a peculiar property
of the tracing, transposition, and identity maps. This feature
allowed one to find specific results for the Werner--Holevo
channel~\cite{datta-2004} and transpose-depolarizing
channels~\cite{datta-2006} as well as to prove additivity of
classical capacity for depolarizing quantum
channels~\cite{king-depol-2003}. We have just found out that the
Landau--Streater map for $j=1$ is globally unitarily covariant
too. However, as we show below, in the case $j>1$ the
Landau--Streater channel loses the property of global unitary
covariance.

\begin{proposition}
The Landau--Streater channel is not globally unitarily covariant
if $j>1$.
\end{proposition}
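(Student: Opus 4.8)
The plan is to prove the negative statement — that the Landau--Streater channel for $j>1$ is \emph{not} globally unitarily covariant — by exhibiting a structural obstruction. The cleanest approach is a proof by contradiction rooted in the spectrum of the Choi (Jamio\l kowski) matrix. Recall that the Choi matrix $C_\Phi = (\Phi \otimes {\rm id})[\,|\Omega\rangle\langle\Omega|\,]$, where $|\Omega\rangle = \sum_m |m\rangle \otimes |m\rangle$ is the unnormalized maximally entangled state, is a complete invariant of the channel. If $\Phi$ were covariant with respect to \emph{all} of $U(d)$ — i.e. for every unitary $U$ there were a unitary $V$ with $\Phi[U X U^\dag] = V \Phi[X] V^\dag$ — then the associated group action would impose severe symmetry constraints on $C_\Phi$. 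Concretely, global unitary covariance forces the eigenprojections of $C_\Phi$ to be invariant under a representation of the full unitary group, and by Schur's lemma the only such channels are convex mixtures of the identity map, the completely depolarizing (tracing) map, and the transposition map (this is exactly the classification alluded to in the paragraph preceding the statement, via Refs.~\cite{datta-2004,king-depol-2003}). Hence it suffices to show that for $j>1$ the Landau--Streater channel is \emph{not} of this form.

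First I would make this obstruction quantitative using results already established in the excerpt. The key tool is the commutation relation $U_g^\dag J_\alpha U_g = \sum_\beta Q_{\alpha\beta} J_\beta$ from Proposition~\ref{prop-SU2-cov}: for $g \in SU(2)$ the action of conjugation on the Kraus operators $\{J_\alpha\}$ is precisely a three-dimensional \emph{orthogonal} rotation. Global unitary covariance would require that conjugation by \emph{every} unitary $U$ likewise maps the span of the Kraus operators into itself (up to a compensating unitary $V$), i.e. the real span $\mathcal{K} = {\rm span}_{\mathbb{R}}\{J_x, J_y, J_z\}$ of traceless Hermitian operators would have to be preserved, in an appropriate sense, under the $U(d)$ action. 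For $j=\tfrac12$ and $j=1$ this span has special features (for $j=1$ it is tied to the $U(3)$ covariance of Proposition~\ref{prop-U3-cov}), but for $j>1$ the operators $J_x, J_y, J_z$ no longer generate, under unitary conjugation, a subspace closed in the required way — the orbit of $\mathcal{K}$ under full unitary conjugation is strictly larger than $\mathcal{K}$.

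The concrete strategy I would pursue is to pick a single explicit unitary $U$ and show directly that no $V$ can satisfy \eqref{U3-covariance} for $j>1$. A natural choice is a permutation or a diagonal-phase unitary in the $J_z$ eigenbasis, for which $U J_\alpha U^\dag$ is easy to compute. One then shows that $\Phi[U X U^\dag]$ and $V\Phi[X]V^\dag$ cannot agree for any unitary $V$ by comparing an invariant that $V\Phi[\cdot]V^\dag$ must preserve but $\Phi[U\cdot U^\dag]$ violates. The most robust such invariant is the multiset of eigenvalues of the output $\Phi[X]$ for a well-chosen input $X$ (e.g. a rank-one projector): global covariance via $V$ forces $\Phi[UXU^\dag]$ and $\Phi[X]$ to be \emph{unitarily equivalent}, hence isospectral, for every $X$. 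I would therefore exhibit a rank-one projector $|\psi\rangle\langle\psi|$ and a unitary $U$ such that the output spectra $\mathrm{spec}\,\Phi[\,|\psi\rangle\langle\psi|\,]$ and $\mathrm{spec}\,\Phi[U|\psi\rangle\langle\psi|U^\dag]$ differ for some $j>1$. Because the minimal and maximal output entropies and the output eigenvalue structure are computed explicitly later in the paper (section~\ref{section-spectrum-output}, section~\ref{section-multiplicativity}), these values can be invoked to furnish the needed inequality.

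The main obstacle I anticipate is organizational rather than conceptual: one must argue the claim uniformly for \emph{all} $j>1$, not just verify a single dimension numerically. Establishing the general classification "globally unitarily covariant $\Rightarrow$ mixture of identity, tracing, and transposition" in full rigor is the delicate point, since it is exactly the structural input that converts a single counterexample into a proof for every $j>1$. If invoking that classification wholesale feels too heavy, the cleaner alternative is to bypass it entirely and work only with the spectral invariant: fix the input to be an eigenstate of $J_z$ and a rotated counterpart, and show by a dimension-counting or explicit-eigenvalue argument that the output spectra cannot coincide under any $V$ once $j \geq \tfrac32$. I would expect the explicit spectral computation to be the technically fiddly step, but the conceptual heart — that the three generators $J_x, J_y, J_z$ single out the $SU(2)$ rotation subgroup and break the full $U(d)$ symmetry as soon as $d = 2j+1 > 3$ — is what makes the statement true.
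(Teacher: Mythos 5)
Your ``cleaner alternative'' is precisely the paper's proof: assume covariance, observe that it forces $\Phi[U\rho U^{\dag}]$ and $\Phi[\rho]$ to be isospectral, choose $\rho = |j,j\rangle\langle j,j|$ together with the permutation unitary swapping $|j,j\rangle \leftrightarrow |j,j-1\rangle$ in the $J_z$ eigenbasis, and compute ${\rm Spec}\left(\Phi[|j,m\rangle\langle j,m|]\right)$ uniformly in $j$ from the ladder-operator form $\Phi[X]=[j(j+1)]^{-1}\left(\tfrac{1}{2}J_+ X J_- + \tfrac{1}{2}J_- X J_+ + J_z X J_z\right)$, which is exactly formula~\eqref{spec-phi-jm} and shows the two output spectra differ whenever $j>1$. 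The Choi-matrix/Schur classification you lead with is never needed for a negative statement (a single explicit $U$ suffices per dimension, and your spectral route already handles all $j>1$ at once), so the only step your outline leaves open is that short uniform-in-$j$ eigenvalue computation, which the paper carries out via \eqref{J-pm} and \eqref{Phi-on-jm}.
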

\begin{proof}
We prove the statement by constructing a counterexample. Suppose
the Landau--Streater channel $\Phi:
\mathcal{B}(\mathcal{H}_{2j+1}) \mapsto
\mathcal{B}(\mathcal{H}_{2j+1})$, $j>1$, is covariant with respect
to representation of $U(2j+1)$. Then for any unitary operator $U$
acting on $\mathcal{H}_{2j+1}$, there exists a unitary operator
$V$ such that
\begin{equation} \label{UV-assumption}
\Phi[U \rho U^\dag] = V \Phi[\rho] V^\dag
\end{equation}

\noindent holds true for all density operators $\rho$. This
implies that the output density operators $\Phi[U \rho U^\dag]$
and $\Phi[\rho]$ have identical spectra.

Consider eigenvectors of the spin projection onto $z$-axis
(Ref.~\cite{Varshalovich}, section 6.1.2, formula(5)):
\begin{equation}
J_z |j,m\rangle = m |j,m\rangle, \quad m = j, j-1, \ldots, -j.
\end{equation}

Let $\rho = |j,j\rangle\langle j,j|$ and
\begin{equation}
U = |j,j-1\rangle\langle j,j| + |j,j\rangle\langle j,j-1| +
\sum_{k=-j}^{j-2} |j,k\rangle\langle j,k|,
\end{equation}

\noindent then $\Phi[|j,j-1\rangle\langle j,j-1|]$ and
$\Phi[|j,j\rangle\langle j,j|]$ must have the same spectra.

On the other hand, action of the Landau--Streater map on the
states $|j,m\rangle \langle j,m|$ with definite spin projection
$m$ onto $z$-axis can be expressed explicitly by introducing
auxiliary operators $J_{\pm} = J_x \pm i J_y$ satisfying
(Ref.~\cite{Varshalovich}, section 2.3.3, formula (7); section
6.1.2, formula (13))
\begin{equation} \label{J-pm}
J_{\pm} | j,m \rangle = \sqrt{( j \mp m)(j \pm m + 1)} |j, m \pm 1
\rangle.
\end{equation}

\noindent Since $\Phi[X] = [j(j+1)]^{-1} \left( \frac{1}{2}J_- X
J_+ + \frac{1}{2} J_+ X J_- + J_z X J_z \right)$, we get
\begin{eqnarray} \label{Phi-on-jm}
\Phi[|j,m\rangle\langle j,m|] &=& \frac{1}{j(j+1)} \bigg[
\frac{1}{2}\left( j(j+1)-m(m-1) \right) |j,m-1\rangle\langle
j,m-1| \nonumber\\
&& + \frac{1}{2}\left( j(j+1)-m(m+1) \right) |j,m+1\rangle\langle
j,m+1| + m^2|j,m\rangle\langle j,m| \bigg],
\end{eqnarray}

\noindent from which we make conclusion about the spectrum of the
output state $\Phi[|j,m\rangle\langle j,m|]$:
\begin{equation} \label{spec-phi-jm}
{\rm Spec}\left( \Phi[|j,m\rangle\langle j,m|] \right) = \left\{
\frac{j(j+1)-m(m+1)}{2j(j+1)}, \frac{j(j+1)-m(m-1)}{2j(j+1)},
\frac{m^2}{j(j+1)}, 0,0,\ldots \right\}.
\end{equation}

\noindent If $j>1$, then ${\rm Spec}\left( \Phi[|j,j\rangle\langle
j,j|] \right) \neq {\rm Spec}\left( \Phi[|j,j-1\rangle\langle
j,j-1|] \right)$. This contradiction concludes the proof.
\end{proof}

\section{Spectral properties}

\subsection{Spectrum of the map} \label{section-spectrum-map}

The Landau--Streater channel $\Phi$ is Hermitian as it coincides
with its dual $\Phi^{\dag}$, therefore its spectrum
$\{\lambda_k\}_{k=0}^{(2j+1)^2-1}$ is real. Hermitian
eigenoperators $X_k$ satisfy $\Phi[X_k] = \lambda_k X_k$. Due to
unitality of $\Phi$, the identity operator $I$ is the
eigenoperator, so we can fix the corresponding eigenvalue
$\lambda_0 = 1$ for all $j$. By determinant ${\rm det}\Phi$ of the
channel $\Phi$ we will understand the product of its eigenvalues
$\prod_k \lambda_k$.

If $j=\frac{1}{2}$, then $J_x$, $J_y$, $J_z$ are eigenoperators of
$\Phi$ and $\lambda_1 = \lambda_2 = \lambda_3 = - \frac{1}{3}$. In
this case, ${\rm det}\Phi = - \frac{1}{27} < 0$, so the channel
$\Phi$ is not infinitesimally divisible~\cite{wolf-2008} and
cannot be obtained as a result of Markovian evolution, although it
can be realized physically, e.g., via collision
models~\cite{fpmz-2017}.

If $j=1$, then $J_x$, $J_y$, $J_z$ are eigenoperators of $\Phi$
with corresponding eigenvalues $\lambda_1 = \lambda_2 = \lambda_3
= \frac{1}{2}$. Five more eigenoperators have the form $3 \left(
\sum_{\alpha} n_{\alpha}^{(k)}J_{\alpha} \right)^2 - 2I$, ${\bf
n}^{(k)} \in \mathbb{R}^3$, $k=1,\ldots,5$ (Ref.~\cite{fm-2010},
formula (8) and text after formula (36)) and correspond to
eigenvalues $\lambda_4 = \ldots = \lambda_8 = - \frac{1}{2}$.
Similarly, ${\rm det}\Phi < 0$, so such a channel cannot be a
result of Markovian evolution.

In what follows, we find spectrum of the Landau--Streater map
$\Phi: \mathcal{B}(\mathcal{H}_{2j+1}) \mapsto
\mathcal{B}(\mathcal{H}_{2j+1})$ for an arbitrary integer or
half-integer $j$. As we show, the eigenoperators of $\Phi$ are
particularly related with the irreducible tensor operator
$T_{LM}^{(j)}$ for the $SU(2)$ group, which is also known as the
polarization operator (Ref.~\cite{Varshalovich}, section 2.4.2,
formula (6); section 8.4.3, formula (10)):
\begin{equation}
T_{LM}^{(j)} = \sqrt{\frac{2L+1}{2j+1}} \, \sum_{m_1,m_2=-j}^j
C_{j m_1 L M}^{j m_2} | j m_2 \rangle \langle j m_1 | =
\sum_{m_1,m_2 = -j}^{j} (-1)^{j-m_1} C_{j m_2 j -m_1}^{L M} | j
m_2 \rangle \langle j m_1 |,
\end{equation}

\noindent where $C_{j_1 m_1 j_2 m_2}^{J M}$ is the conventional
Clebsch--Gordan coefficient.

\begin{proposition} \label{prop-spectrum-LS}
The spectrum of the Landau--Streater map $\Phi:
\mathcal{B}(\mathcal{H}_{2j+1}) \mapsto
\mathcal{B}(\mathcal{H}_{2j+1})$ comprises $(2L+1)$-fold
degenerate eigenvalues
\begin{equation} \label{lambda-L}
\lambda_L = 1 - \frac{L(L+1)}{2j(j+1)}, \qquad L = 0, 1, \ldots,
2j.
\end{equation}

\noindent The corresponding eigenoperators are linearly
independent operators of the form $U_g T_{L0}^{(j)} U_g^{\dag}$,
where the operators $U_g$ belong to the unitary representation of
the $SU(2)$ group.
\end{proposition}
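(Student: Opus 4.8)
The plan is to diagonalize $\Phi$ by combining its $SU(2)$ covariance (Proposition~\ref{prop-SU2-cov}) with the representation theory of the polarization operators $T_{LM}^{(j)}$. First I would recall two structural facts about these operators. They form an orthonormal (with respect to the Hilbert--Schmidt product) basis of $\mathcal{B}(\mathcal{H}_{2j+1})$, there being exactly $\sum_{L=0}^{2j}(2L+1) = (2j+1)^2$ of them; and under the adjoint action of the representation they transform as a rank-$L$ spherical tensor, $U_g T_{LM}^{(j)} U_g^\dagger = \sum_{M'} D^{L}_{M'M}(g)\, T_{LM'}^{(j)}$, where $D^L$ is the Wigner matrix of the spin-$L$ irreducible representation. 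Infinitesimally this reads $[J_z,T_{LM}^{(j)}] = M\,T_{LM}^{(j)}$ and $[J_\pm,T_{LM}^{(j)}] = \sqrt{(L\mp M)(L\pm M+1)}\,T_{L,M\pm1}^{(j)}$. Hence $\mathcal{B}(\mathcal{H}_{2j+1})$ decomposes under the adjoint action into the multiplicity-free direct sum $\bigoplus_{L=0}^{2j} V_L$, with $V_L = \mathrm{span}\{T_{LM}^{(j)}\}_{M=-L}^{L}$ carrying the spin-$L$ representation.

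The key computational step is to rewrite $\Phi$ through the quadratic Casimir of this adjoint representation. Using $\sum_\alpha J_\alpha^2 = j(j+1)I$ together with the elementary identity $[J_\alpha,[J_\alpha,X]] = J_\alpha^2 X + X J_\alpha^2 - 2 J_\alpha X J_\alpha$, summed over $\alpha$, one finds
\[
\Phi[X] = X - \frac{1}{2j(j+1)}\sum_{\alpha}\bigl[J_\alpha,[J_\alpha,X]\bigr].
\]
The operator $\mathcal{C}[X] := \sum_\alpha [J_\alpha,[J_\alpha,X]] = \sum_\alpha \mathrm{ad}_{J_\alpha}^2[X]$ is precisely the Casimir of the adjoint action, since $[\mathrm{ad}_{J_\alpha},\mathrm{ad}_{J_\beta}] = \mathrm{ad}_{[J_\alpha,J_\beta]}$ reproduces the $su(2)$ commutation relations. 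I would then verify, using the ladder relations above, that $\mathcal{C}$ acts as the scalar $L(L+1)$ on each block $V_L$. This at once yields $\Phi[T_{LM}^{(j)}] = \lambda_L\,T_{LM}^{(j)}$ with $\lambda_L = 1 - L(L+1)/[2j(j+1)]$ for every $M$, producing the claimed $(2L+1)$-fold degeneracy.

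To match the stated form of the eigenoperators, I would observe that $U_g T_{L0}^{(j)} U_g^\dagger = \sum_{M} D^{L}_{M0}(g)\, T_{LM}^{(j)} \in V_L$, so each such operator is automatically an eigenoperator with eigenvalue $\lambda_L$; since the functions $g \mapsto D^L_{M0}(g)$ are linearly independent, one can select $2L+1$ group elements for which the resulting operators $U_g T_{L0}^{(j)} U_g^\dagger$ span $V_L$ and are linearly independent. As an abstract cross-check, the same scalar-per-block structure follows from Schur's lemma applied to the covariant map $\Phi$ over the multiplicity-free decomposition $\bigoplus_L V_L$; what the Casimir rewriting adds is the explicit value of $\lambda_L$.

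I expect the only genuinely nontrivial point to be the identification of $\mathcal{C}$ with the adjoint Casimir and the evaluation of its eigenvalue $L(L+1)$ on $V_L$ — equivalently, establishing that the $T_{LM}^{(j)}$ obey the rank-$L$ spherical-tensor commutation relations. Everything else is either a standard property of the polarization operators or the routine double-commutator algebra that produces the displayed rewriting of $\Phi$.
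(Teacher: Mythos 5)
Your proposal is correct, and it takes a genuinely different route from the paper. The paper's proof works entirely at the level of the single component $T_{L0}^{(j)}$: it writes $\Phi$ in terms of $J_\pm,J_z$, applies the commutation relations $[J_\pm,T_{L0}^{(j)}]=\sqrt{L(L+1)}\,T_{L\pm1}^{(j)}$, and then evaluates the resulting combination $J_+T_{L,-1}^{(j)}+J_-T_{L,1}^{(j)}$ by expanding products of polarization operators through Clebsch--Gordan coefficients and an explicit Wigner $6j$-symbol, arriving at $J_+T_{L,-1}^{(j)}+J_-T_{L,1}^{(j)}=\sqrt{L(L+1)}\,T_{L0}^{(j)}$ and hence $\Phi[T_{L0}^{(j)}]=\lambda_L T_{L0}^{(j)}$; covariance is then used to generate the rotated eigenoperators $U_gT_{L0}^{(j)}U_g^\dagger$, with the count of exactly $2L+1$ linearly independent ones cited from earlier work, and completeness follows from $\sum_{L=0}^{2j}(2L+1)=(2j+1)^2$. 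You instead observe the identity $\Phi=\mathrm{Id}-\tfrac{1}{2j(j+1)}\mathcal{C}$ with $\mathcal{C}=\sum_\alpha \mathrm{ad}_{J_\alpha}^2$ the quadratic Casimir of the adjoint action, which is correct (the double-commutator expansion plus $\sum_\alpha J_\alpha^2=j(j+1)I$ gives it in two lines), and then read off the spectrum from the standard fact that $\mathcal{C}$ acts as $L(L+1)$ on the spin-$L$ block spanned by $\{T_{LM}^{(j)}\}_{M=-L}^L$ --- the tensor-operator commutation relations you quote are exactly Varshalovich's, which the paper itself cites. This diagonalizes $\Phi$ on the full basis $\{T_{LM}^{(j)}\}$ simultaneously, so the $(2L+1)$-fold degeneracy and completeness come for free (the $\lambda_L$ are distinct since $L(L+1)$ is strictly increasing, so the degeneracies are exact), with no $6j$-symbol computation at all; your recovery of the stated eigenoperator form via $U_gT_{L0}^{(j)}U_g^\dagger=\sum_M D^L_{M0}(g)T_{LM}^{(j)}$ and linear independence of the Wigner functions $D^L_{M0}$ is sound and reproduces what the paper imports from its references. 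What each approach buys: yours is shorter and more conceptual, exhibiting $\lambda_L$ as an affine function of the Casimir eigenvalue (and, via your Schur-lemma remark, explaining \emph{a priori} why $\Phi$ must be scalar on each multiplicity-free block); the paper's is more computational but self-contained within the angular-momentum formula toolkit it uses throughout.
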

\begin{proof}
We start with the observation that $T_{L0}^{(j)}$ is the
eigenoperator of $\Phi$. To prove this fact we rewrite the
Landau--Streater channel in the form $\Phi[X] = [j(j+1)]^{-1}
(\frac{1}{2} J_+ X J_- + \frac{1}{2} J_- X J_+ + J_z X J_z)$ and
use the commutation relations $[J_{\pm},T_{L0}^{(j)}] =
\sqrt{L(L+1)} T_{L \pm 1}^{(j)}$ and $[J_z,T_{L0}^{(j)}] = 0$ (see
Ref.~\cite{Varshalovich}, section 2.4.1, formula (1); section
2.3.3, formula (7); for the Clebsch-Gordan coefficients $C_{L0 1
\pm 1}^{L \pm 1} = \mp \frac{1}{\sqrt{2}}$ and $C_{L0 10}^{L0} =
0$ see section 8.5.1, formula (8)). We get
\begin{equation} \label{Phi-on-TL0}
\Phi \left[ T_{L0}^{(j)} \right] = \frac{1}{j(j+1)} \left[ \left(
\frac{1}{2} J_+ J_- + \frac{1}{2} J_- J_+ + J_z J_z \right)
T_{L0}^{(j)} - \frac{\sqrt{L(L+1)}}{2} \left( J_+ T_{L-1}^{(j)} +
J_- T_{L1}^{(j)} \right) \right].
\end{equation}

\noindent The first expression in parentheses $(\cdot)$ is
$j(j+1)I$, whereas the second expression in parentheses $(\cdot)$
can be simplified because $J_{\pm} = \mp
\sqrt{\frac{2j(j+1)(2j+1)}{3}} T_{1 \pm 1}^{(j)}$
(Ref.~\cite{Varshalovich}, section 2.4.2, formula (10); section
2.3.3, formula (7)) and the product $T_{1 \pm 1}^{(j)} T_{L
 \mp 1}^{(j)}$ is known (Ref.~\cite{Varshalovich}, section 2.4.4,
formula (16)):
\begin{eqnarray}
&& J_+ T_{L-1}^{(j)} + J_- T_{L1}^{(j)} = -
\sqrt{\frac{2j(j+1)(2j+1)}{3}} \left( T_{1 1}^{(j)} T_{L
 -1}^{(j)} - T_{1 -1}^{(j)} T_{L
 1}^{(j)} \right) \nonumber\\
&& = - \sqrt{\frac{2j(j+1)(2j+1)}{3}} \sum_{L'} (-1)^{2j+L'}
\sqrt{3(2L+1)} \left\{
\begin{array}{ccc}
  1 & L & L' \\
  j & j & j \\
\end{array} \right\} \left( C_{11L-1}^{L'0} - C_{1-1L1}^{L'0}
\right) T_{L'0}^{(j)}. \nonumber\\
&& \label{difference-T}
\end{eqnarray}

\noindent The Clebsch-Gordan coefficients $C_{11L-1}^{L'0}$ and
$C_{1-1L1}^{L'0}$ coincide if $L'=L \pm 1$
(Ref.~\cite{Varshalovich}, section 8.4.3, formula (11)) and vanish
if $L' < L-1$ or $L' > L+1$ (Ref.~\cite{Varshalovich}, section
8.1.1, formula (1)), so the only contribution
to~\eqref{difference-T} makes $L'=L$, when $C_{11L-1}^{L0} -
C_{1-1L1}^{L0} = \sqrt{2}$ (Ref.~\cite{Varshalovich}, section
8.5.1, formula (8)). The Wigner 6$j$-symbol $\left\{
\begin{array}{ccc}
  1 & L & L \\
  j & j & j \\
\end{array} \right\} = \frac{1}{2} (-1)^{2j+L+1} \sqrt{\frac{L(L+1)}{j(j+1)(2j+1)(2L+1)}}$ (Ref.~\cite{Varshalovich}, section 9.5.4,
formula (21)). Finally, $J_+ T_{L-1}^{(j)} + J_- T_{L1}^{(j)} =
\sqrt{L(L+1)} T_{L0}^{(j)}$. Substituting this result in
formula~\eqref{Phi-on-TL0}, we conclude that $T_{L0}^{(j)}$ is the
eigenoperator of $\Phi$ and corresponds to the eigenvalue
$\lambda_L$ given by formula~\eqref{lambda-L}.

Due to the $SU(2)$-covariance of $\Phi$
(proposition~\ref{prop-SU2-cov}), the operators $U_g T_{L0}^{(j)}
U_g^{\dag}$ are eigenoperators of $\Phi$ too and correspond to the
eigenvalue $\lambda_L$. It is known that there are exactly $2L+1$
linear independent operators $U_g T_{L0}^{(j)} U_g^{\dag}$ if
$U_g$ is a representation of the $SU(2)$ group (see
Ref.~\cite{fm-2009}, formula (11), where $S_L^{(j)}$ is
proportional to the operator $T_{L0}^{(j)}$, and
Ref.~\cite{fm-2010}, text after formula (36)). Therefore,
eigenvalues $\lambda_L$ are $(2L+1)$-fold degenerate. Since
$\sum_{L=0}^{2j} (2L+1) = (2j+1)^2$, the given eigenvalues are the
only ones and constitute the spectrum of $\Phi$.
\end{proof}

In the latter proposition, for the case $L=1$ the generators
$J_x$, $J_y$, $J_z$ are exactly the three linear independent
eigenoperators of $\Phi: \mathcal{B}(\mathcal{H}_{2j+1}) \mapsto
\mathcal{B}(\mathcal{H}_{2j+1})$ corresponding to the eigenvalue
$1 - \frac{1}{j(j+1)}$.

It is not hard to see that if $L=2j$, then $\lambda_L < 0$.
Negativity of the eigenvalue implies that the Landau--Streater
channel cannot be obtained via positive divisible Hermitian
evolution~\cite{chruscinski-macchiavello-maniscalco-2017} for any
$j$.

\subsection{Spectrum of the output} \label{section-spectrum-output}

Let us consider spectral properties of the output operator, i.e.,
the spectrum of $\Phi[X]$, where $X$ is a Hermitian input
operator.

It is not hard to see that in the case $j=\frac{1}{2}$ the
spectrum of $\Phi[X]$ is $\{\frac{1}{3}(x_1+2x_2),\frac{1}{3}(2x_1
+ x_2)\}$ provided the spectrum of the input density operator
$\rho$ is $\{x_1,x_2\}$.

\begin{corollary} \label{corollary-qubit-entropy}
The output purity and entropy of the Landau--Streater channel
$\Phi: \mathcal{B}(\mathcal{H}_2) \mapsto
\mathcal{B}(\mathcal{H}_2)$ for all pure spin-$\frac{1}{2}$ input
states $| \psi \rangle \langle \psi |$ are equal to $\frac{5}{9}$
and $\log 3 - \frac{2}{3}$, respectively.
\end{corollary}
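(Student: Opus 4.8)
The corollary concerns the channel for $j=\frac{1}{2}$ (dimension $d=2$), and the key observation just before the statement already does most of the work: for any Hermitian input $X$ with spectrum $\{x_1,x_2\}$, the output $\Phi[X]$ has spectrum $\{\frac{1}{3}(x_1+2x_2), \frac{1}{3}(2x_1+x_2)\}$. My plan is to specialize this to a pure input state $|\psi\rangle\langle\psi|$, whose spectrum is $\{1,0\}$, and then compute the purity and entropy directly from the resulting output spectrum.

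First I would set $x_1 = 1$ and $x_2 = 0$ (the eigenvalues of any rank-one projector in dimension $2$). Substituting into the spectral formula gives the output eigenvalues $\frac{1}{3}(1 + 0) = \frac{1}{3}$ and $\frac{1}{3}(2 + 0) = \frac{2}{3}$. Crucially, this output spectrum is \emph{independent of which pure state} $|\psi\rangle$ we chose, so the purity and entropy are the same for all pure inputs, as the statement asserts.

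Next I would compute the purity, defined as $\mathrm{Tr}\,(\Phi[|\psi\rangle\langle\psi|])^2$, which equals the sum of the squares of the output eigenvalues: $(\frac{1}{3})^2 + (\frac{2}{3})^2 = \frac{1}{9} + \frac{4}{9} = \frac{5}{9}$. Then the (von Neumann) output entropy is $-\sum_i \mu_i \log \mu_i$ with $\mu_1 = \frac{1}{3}$, $\mu_2 = \frac{2}{3}$, giving $-\frac{1}{3}\log\frac{1}{3} - \frac{2}{3}\log\frac{2}{3} = \frac{1}{3}\log 3 + \frac{2}{3}(\log 3 - \log 2) = \log 3 - \frac{2}{3}\log 2$. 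Taking $\log$ to base $2$ (so that $\log 2 = 1$) reproduces the stated value $\log 3 - \frac{2}{3}$.

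The proof is essentially a one-line substitution followed by two elementary arithmetic evaluations, so there is no real obstacle; the only thing to be careful about is the logarithm base. The clean form $\log 3 - \frac{2}{3}$ arises precisely because $\frac{2}{3}\log 2 = \frac{2}{3}$ when $\log = \log_2$, so I would state explicitly that base-$2$ logarithms are used (consistent with the information-theoretic convention of measuring entropy in bits). Since the output spectrum does not depend on the choice of pure input, the uniformity claim in the corollary follows immediately without any further argument.
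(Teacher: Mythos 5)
Your proof is correct and follows exactly the route the paper intends: the corollary is stated immediately after the observation that for $j=\tfrac{1}{2}$ the output spectrum is $\{\tfrac{1}{3}(x_1+2x_2),\tfrac{1}{3}(2x_1+x_2)\}$, and substituting the pure-state spectrum $\{1,0\}$ gives $\{\tfrac{1}{3},\tfrac{2}{3}\}$, from which the purity $\tfrac{5}{9}$ and entropy $\log 3 - \tfrac{2}{3}$ (with $\log=\log_2$, as the paper specifies) follow by direct computation. Your explicit remark about the logarithm base and the state-independence of the output spectrum matches the paper's conventions, and the paper's later comment that this corollary is a consequence of proposition~\ref{prop-p-norm} is just an alternative view of the same computation specialized to $j=\tfrac{1}{2}$.
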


The case $j=1$ is more involved, but in this case the spectrum of
the output also depends only on the spectrum of the input, as we
show in the following proposition.

\begin{proposition} \label{prop-spectrum-output}
Suppose a Hermitian operator $X \in \mathcal{B}(\mathcal{H}_3)$
with spectrum $\{x_1,x_2,x_3\}$. The output operator $\Phi[X]$ of
the Landau--Streater channel $\Phi: \mathcal{B}(\mathcal{H}_3)
\mapsto \mathcal{B}(\mathcal{H}_3)$ has spectrum $\{
\frac{1}{2}(x_1+x_2),\frac{1}{2}(x_1+x_3),\frac{1}{2}(x_2+x_3)
\}$.
\end{proposition}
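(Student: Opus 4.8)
The plan is to exploit the global unitary ($U(3)$) covariance of Proposition~\ref{prop-U3-cov} to reduce the computation of the output spectrum to the case of a diagonal input. Because $X$ is Hermitian it admits a spectral decomposition $X = U D U^{\dag}$, where $U$ is unitary and $D = x_1 |1,1\rangle\langle 1,1| + x_2 |1,0\rangle\langle 1,0| + x_3 |1,-1\rangle\langle 1,-1|$ is diagonal in the orthonormal eigenbasis $\{|1,m\rangle\}$ of $J_z$; here $U$ is just the change of basis from the $J_z$ eigenbasis to the eigenbasis of $X$, and the labelling $x_1,x_2,x_3$ is fixed by this identification.

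By the covariance relation~\eqref{U3-covariance}, $\Phi[X] = \Phi[U D U^{\dag}] = V \Phi[D] V^{\dag}$, with $V$ the unitary determined by $U$ through~\eqref{UV-relation-U3}. Since $V$ is unitary, $\Phi[X]$ and $\Phi[D]$ are unitarily equivalent and therefore share the same spectrum, so it suffices to compute the spectrum of $\Phi[D]$.

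For the diagonal input I would specialize the explicit formula~\eqref{Phi-on-jm} to $j=1$, which yields
\begin{eqnarray}
\Phi[|1,1\rangle\langle 1,1|] &=& \tfrac{1}{2}\left( |1,1\rangle\langle 1,1| + |1,0\rangle\langle 1,0| \right), \nonumber\\
\Phi[|1,0\rangle\langle 1,0|] &=& \tfrac{1}{2}\left( |1,1\rangle\langle 1,1| + |1,-1\rangle\langle 1,-1| \right), \nonumber\\
\Phi[|1,-1\rangle\langle 1,-1|] &=& \tfrac{1}{2}\left( |1,0\rangle\langle 1,0| + |1,-1\rangle\langle 1,-1| \right).
\end{eqnarray}
Every term is diagonal in the $J_z$ eigenbasis, so $\Phi[D] = x_1 \Phi[|1,1\rangle\langle 1,1|] + x_2 \Phi[|1,0\rangle\langle 1,0|] + x_3 \Phi[|1,-1\rangle\langle 1,-1|]$ is diagonal as well, with the three diagonal entries $\tfrac{1}{2}(x_1+x_2)$, $\tfrac{1}{2}(x_1+x_3)$, and $\tfrac{1}{2}(x_2+x_3)$. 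Reading off these values gives precisely the asserted spectrum.

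The genuine content lies in the covariance step: without it one would have to confront the off-diagonal entries of a generic Hermitian $X$, and a priori there would be no reason for the output spectrum to depend on the input only through its spectrum. Once $U(3)$ covariance collapses the problem to diagonal inputs, the remainder is the routine specialization of~\eqref{Phi-on-jm} together with a reading-off of diagonal entries, so I expect no real obstacle beyond keeping the $m$-labels consistent throughout.
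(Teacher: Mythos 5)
Your proposal is correct and follows essentially the same route as the paper: both invoke the $U(3)$ covariance of Proposition~\ref{prop-U3-cov} to reduce to a diagonal input in the $J_z$ eigenbasis and then read off the diagonal output entries, the only cosmetic difference being that you specialize formula~\eqref{Phi-on-jm} to $j=1$ while the paper quotes the explicit $3\times 3$ matrix form of $\Phi$ (and your specialization is computed correctly). No gaps.
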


\begin{proof}
In the basis of eigenvectors of $J_z$, the action of the
Landau--Streater channel reads
\begin{equation}
\Phi \left[ \begin{pmatrix}
X_{11} & X_{12} & X_{13} \\
X_{21} & X_{22} & X_{23} \\
X_{31} & X_{32} & X_{33}
\end{pmatrix} \right] =
\frac{1}{2}
\begin{pmatrix}
X_{11} + X_{22} & X_{23} & -X_{13} \\
X_{32} & X_{11} + X_{33} & X_{12} \\
-X_{31} & X_{21} & X_{22} + X_{33}
\end{pmatrix}.
\end{equation}

Since the Landau--Streater channel $\Phi:
\mathcal{B}(\mathcal{H}_3) \mapsto \mathcal{B}(\mathcal{H}_3)$ is
globally unitarily covariant by proposition~\ref{prop-U3-cov}, the
spectrum of the output operator $\Phi[X]$ depends only on the
spectrum of the input operator $X$. To find the explicit relation
between the spectra we consider the unitary operator $U$ realizing
the transition from the basis of eigenvectors of $X$ to the basis
of eigenvectors of $J_z$. Then in the basis of eigenvectors of
$J_z$ we have $U X U^{\dag} = {\rm diag}(x_1,x_2,x_3)$ and $\Phi[U
X U^{\dag}] = {\rm diag} \left( \frac{1}{2} (x_1+x_2),\frac{1}{2}
(x_1+x_3),\frac{1}{2} (x_2+x_3) \right)$. Thanks to the global
unitary covariance, the latter diagonal matrix is exactly the
spectrum of $\Phi[X]$.
\end{proof}

The spectral property of the Landau--Streater channel $\Phi:
\mathcal{B}(\mathcal{H}_3) \mapsto \mathcal{B}(\mathcal{H}_3)$
resembles that of the depolarizing channel, but the
Landau--Streater channel is not depolarizing in the case $j = 1$.
This peculiarity is ascribed to the close relation between the
Landau--Streater channel and the Werner--Holevo channel $\Phi_{\rm
WH}: \mathcal{B}(\mathcal{H}_d) \mapsto
\mathcal{B}(\mathcal{H}_d)$ defined through transposition $\top$
in some orthonormal basis via formula $\Phi_{\rm WH}[X] =
\frac{1}{d-1} \left( {\rm tr}[X] I - X^{\top} \right)$,
Ref.~\cite{wh-2002}. It turns out that if $d=3$ and transposition
$\top$ is performed in the basis of eigenstates of $J_z$, then the
Landau--Streater channel $\Phi: \mathcal{B}(\mathcal{H}_3) \mapsto
\mathcal{B}(\mathcal{H}_3)$ is merely the Werner--Holevo channel
concatenated with a unitary channel:
\begin{equation} \label{LS-WH}
\Phi[X] = \Phi_{\rm WH}[W X W^{\dag}] = \frac{1}{2} \left( {\rm
tr}[X] I - W X^{\top} W^{\dag} \right),
\end{equation}

\noindent where $W = | 1,1 \rangle \langle 1,-1 | - | 1,0 \rangle
\langle 1,0 | + | 1,-1 \rangle \langle 1,1 |$ is a unitary
operator.

Since the spectrum of pure states consists of $1$ and zeros, we
can make conclusions about the output purity $\mu_{\rm out} = {\rm
tr}\left[ (\Phi[\rho])^2 \right]$ and the output entropy $S_{\rm
out} = - {\rm tr} \left[ \Phi[\rho] \log \Phi[\rho] \right]$ for
pure input states in cases $j=\frac{1}{2}$ and $j=1$. Hereafter,
$\log$ is understood as $\log_2$ if one measures the entropy and
capacity in bits.

\begin{corollary} \label{corollary-qutrit-entropy}
The output purity and entropy of the Landau--Streater channel
$\Phi: \mathcal{B}(\mathcal{H}_3) \mapsto
\mathcal{B}(\mathcal{H}_3)$ for all pure spin-$1$ input states $|
\psi \rangle \langle \psi |$ are equal to $\frac{1}{2}$ and $1$,
respectively.
\end{corollary}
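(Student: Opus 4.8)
The plan is to leverage Proposition~\ref{prop-spectrum-output}, which we have just established. Corollary~\ref{corollary-qutrit-entropy} concerns the output of a \emph{pure} input state $|\psi\rangle\langle\psi|$ on $\mathcal{H}_3$, so the key observation is that the spectrum of any such rank-one projector is $\{x_1,x_2,x_3\}=\{1,0,0\}$. Substituting these three eigenvalues into the spectral formula $\{\frac{1}{2}(x_1+x_2),\frac{1}{2}(x_1+x_3),\frac{1}{2}(x_2+x_3)\}$ from Proposition~\ref{prop-spectrum-output} yields, after accounting for the three ways to pair the eigenvalues, the output spectrum $\{\frac{1}{2},\frac{1}{2},0\}$. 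This single substitution is the entire conceptual content of the corollary; everything else is arithmetic.

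The remaining two steps are routine computations from this output spectrum. First, I would compute the output purity $\mu_{\mathrm{out}}=\mathrm{tr}[(\Phi[\rho])^2]=\sum_k \mu_k^2$, where $\{\mu_k\}=\{\tfrac12,\tfrac12,0\}$ are the output eigenvalues; this gives $(\tfrac12)^2+(\tfrac12)^2+0=\tfrac12$. Second, I would compute the output von Neumann entropy $S_{\mathrm{out}}=-\sum_k \mu_k\log\mu_k$, which evaluates to $-2\cdot\tfrac12\log\tfrac12=\log 2=1$ bit, using the convention $\log=\log_2$ fixed in the text and the standard convention $0\log 0=0$.

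Crucially, the output spectrum $\{\tfrac12,\tfrac12,0\}$ is \emph{independent of the particular pure input state} $|\psi\rangle$, precisely because Proposition~\ref{prop-spectrum-output} shows that the output spectrum depends only on the input spectrum, and every pure state shares the spectrum $\{1,0,0\}$. This is why the corollary can assert that the purity and entropy are the same for \emph{all} pure spin-$1$ inputs, mirroring the structure of Corollary~\ref{corollary-qubit-entropy} in the spin-$\tfrac12$ case. I do not anticipate any genuine obstacle here: the corollary is an immediate specialization of the preceding proposition, and the only thing to be careful about is correctly enumerating the three pairwise averages when the input spectrum has repeated entries, so that one obtains exactly the degenerate output spectrum $\{\tfrac12,\tfrac12,0\}$ rather than miscounting the multiplicities.
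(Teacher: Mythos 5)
Your proposal is correct and follows exactly the route the paper takes: the text preceding the corollary derives it from Proposition~\ref{prop-spectrum-output} by noting that every pure state has spectrum $\{1,0,0\}$, hence output spectrum $\{\tfrac12,\tfrac12,0\}$, from which the purity $\tfrac12$ and entropy $1$ follow immediately. There is nothing to add; your care about the degenerate multiplicities is sound and matches the paper's implicit argument.
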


In the case $j > 1$, the spectrum of $\Phi[X]$ depends not only on
the spectrum of $X$ but also on the particular form of the
operator $X$. For instance, in the case $j=\frac{3}{2}$, one can
consider two different pure input states
$|\frac{3}{2},\frac{3}{2}\rangle\langle \frac{3}{2},\frac{3}{2}|$
and $|\frac{3}{2},\frac{1}{2}\rangle\langle
\frac{3}{2},\frac{1}{2}|$ with identical spectra $\{1,0,0,0\}$. By
formula~\eqref{spec-phi-jm} ${\rm Spec} \left( \Phi \left[
|\frac{3}{2},\frac{3}{2}\rangle\langle \frac{3}{2},\frac{3}{2}|
\right] \right) = \{\frac{3}{5},\frac{2}{5},0,0\}$ and ${\rm Spec}
\left( \Phi \left[ |\frac{3}{2},\frac{1}{2}\rangle\langle
\frac{3}{2},\frac{1}{2}| \right] \right) =
\{\frac{2}{5},\frac{1}{4},\frac{8}{15},0\}$, so spectra of output
states may not coincide. Similarly to the case $j=\frac{3}{2}$,
for any $j>1$ one can always take pure input states
$|j,j\rangle\langle j,j|$ and $|j,j-1\rangle\langle j,j-1|$ and
make sure that ${\rm Spec} \left( \Phi \left[ |j,j\rangle\langle
j,j| \right] \right) \ne {\rm Spec} \left( \Phi \left[
|j,j-1\rangle\langle j,j-1| \right] \right)$.

\subsection{The maximal $p$-norm and the minimal output entropy} \label{section-multiplicativity}

The maximal $p$-norm of a channel $\Phi$ is defined by the formula
\begin{equation}
\nu_p(\Phi) = \sup_{\rho} \{ \| \Phi[\rho] \|_p \},
\end{equation}

\noindent where $\| \Phi[\rho] \|_p = \left\{ {\rm tr}
(\Phi[\rho])^p \right\}^{1/p}$ is the Schatten $p$-norm of
$\Phi[\rho]$. The maximal 2-norm is merely the square root of the
maximal output purity. Before we proceed to the analysis of
maximal $p$-norm and minimal output entropy of the
Landau--Streater channel, we prove auxiliary results following
from the theory of angular momentum.

\begin{lemma} \label{lemma-1}
Let ${\bf k}\in\mathbb{R}^3$ be a unit vector, $|{\bf k}| =
\sqrt{k_1^2+k_2^2+k_3^2} = 1$. The spectrum of operator
$\sum_{\alpha=1}^3 k_{\alpha} J_{\alpha}$ is $\{ m \}_{m=-j}^j$.
\end{lemma}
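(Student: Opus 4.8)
The plan is to show that the operator $\sum_{\alpha} k_{\alpha} J_{\alpha}$ is unitarily equivalent to $J_z$, whose spectrum is already known to be $\{m\}_{m=-j}^{j}$ from the eigenvalue equation $J_z |j,m\rangle = m |j,m\rangle$. Since unitary conjugation preserves the spectrum, this equivalence immediately yields the claim. The whole argument is simply the statement that $\sum_{\alpha} k_{\alpha} J_{\alpha}$ is the spin projection onto the direction $\mathbf{k}$, which by rotational symmetry must have the same spectrum as the projection onto $z$.

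First I would exploit the $SU(2)$-covariance relation established in the proof of Proposition~\ref{prop-SU2-cov}. Starting from $U_g^{\dag} J_{\alpha} U_g = \sum_{\beta} Q_{\alpha\beta} J_{\beta}$ and using the orthogonality $\sum_{\alpha} Q_{\alpha\gamma} Q_{\alpha\beta} = \delta_{\gamma\beta}$, I multiply by $Q_{\alpha\gamma}$ and sum over $\alpha$ to invert the relation into $U_g J_{\gamma} U_g^{\dag} = \sum_{\alpha} Q_{\alpha\gamma} J_{\alpha}$. Setting $\gamma = z$ gives $U_g J_z U_g^{\dag} = \sum_{\alpha} Q_{\alpha 3} J_{\alpha}$, so conjugation by $U_g$ maps $J_z$ to the spin projection onto the direction specified by the third column of $Q$.

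Next, since $\mathbf{k}$ is a unit vector, I would complete it to an orthonormal, right-handed basis of $\mathbb{R}^3$, thereby constructing a rotation matrix $Q \in SO(3)$ whose third column equals $\mathbf{k}$, i.e. $Q_{\alpha 3} = k_{\alpha}$. Because the covering homomorphism $SU(2) \to SO(3)$ is surjective, there exists a group element $g$ with associated unitary $U_g$ realizing this $Q$ through equation~\eqref{rotation}. For this choice, $U_g J_z U_g^{\dag} = \sum_{\alpha} k_{\alpha} J_{\alpha}$, so $\sum_{\alpha} k_{\alpha} J_{\alpha}$ and $J_z$ are unitarily equivalent and share the spectrum $\{m\}_{m=-j}^{j}$.

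The only points requiring care are the inversion of the covariance relation and the passage from a given unit vector to a genuine rotation in $SO(3)$ rather than an arbitrary orthogonal matrix of determinant $-1$; the latter is handled by orienting the two auxiliary basis vectors so that $\det Q = +1$. The surjectivity of the $SU(2) \to SO(3)$ covering then guarantees that the required unitary $U_g$ exists, so no genuine obstacle remains once these book-keeping steps are settled.
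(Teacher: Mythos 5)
Your proposal is correct and takes essentially the same route as the paper: the paper's proof likewise invokes formula~\eqref{rotation} with an orthogonal $Q$ whose relevant row satisfies $k_{\beta}=Q_{3\beta}$, so that $\sum_{\alpha}k_{\alpha}J_{\alpha}$ is unitarily conjugate to $J_z$ and inherits its spectrum $\{m\}_{m=-j}^{j}$. Your extra book-keeping (inverting the covariance relation to use the third column, and ensuring $\det Q=+1$ via the surjectivity of the $SU(2)\to SO(3)$ covering) is a slightly more careful rendering of the same argument, which the paper compresses into a single sentence.
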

\begin{proof}
Physically, the operator $\sum_{\alpha=1}^3 k_{\alpha} J_{\alpha}$
is the spin projection operator onto axis ${\bf k}$ and therefore
has the same spectrum as any of operators $J_x$, $J_y$, $J_z$.
Mathematically, there exists a unitary operator $U_g: {\cal
B}({\cal H}_{2j+1}) \mapsto {\cal B}({\cal H}_{2j+1})$, $g\in
SU(2)$, such that $U_g^{\dag} J_z U_g = \sum_{\beta=1}^3 k_{\beta}
J_{\beta}$, cf. formula~\eqref{rotation} with $k_{\beta} =
Q_{3\beta}$, where $Q$ is orthogonal. Hence, ${\rm Spec}\left(
\sum_{\alpha=1}^3 k_{\alpha} J_{\alpha} \right) = {\rm Spec}(J_z)
= \{m\}_{m=-j}^j$.
\end{proof}

The eigenvector $| \psi_{\bf k} \rangle$ of operator
$\sum_{\alpha=1}^3 k_{\alpha} J_{\alpha}$ corresponding to the
maximal eigenvalue $j$ will be referred to as a vector with the
maximal spin polarization. Clearly, $| \psi_{\bf k} \rangle =
U_g^{\dag} | j,j \rangle$, where $U_g$ is the unitary operator
used in the proof of Lemma~\ref{lemma-1}.

\begin{lemma} \label{lemma-2}
Let ${\bf k}\in\mathbb{R}^3$, $|{\bf k}|=1$. The maximum of
expression $ \left\| \sum_{\alpha=1}^3 k_{\alpha} J_{\alpha} |
\psi \rangle \right\|^2$ with respect to normalized vectors $|
\psi \rangle \in {\cal H}_{2j+1}$ equals $j^2$ and is attained at
the state $| \psi_{\bf k} \rangle$ with the maximal spin
polarization.
\end{lemma}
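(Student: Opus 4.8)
The plan is to reduce the optimization to an eigenvalue problem for the square of the spin projection operator along $\mathbf{k}$. I would write $A = \sum_{\alpha=1}^3 k_{\alpha} J_{\alpha}$ and observe that, since each $J_{\alpha}$ is Hermitian and each $k_{\alpha}$ is real, the operator $A$ is Hermitian; hence $A^{\dag} A = A^2$ and therefore $\left\| A | \psi \rangle \right\|^2 = \langle \psi | A^{\dag} A | \psi \rangle = \langle \psi | A^2 | \psi \rangle$. The problem thus becomes the maximization of the quadratic form $\langle \psi | A^2 | \psi \rangle$ over normalized vectors, which is governed entirely by the spectrum of $A^2$.

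First I would invoke Lemma~\ref{lemma-1}, which gives the spectrum of $A$ as $\{ m \}_{m=-j}^j$. Because $A$ is Hermitian, it is diagonalizable in an orthonormal eigenbasis $\{ | \psi_m \rangle \}$ with $A | \psi_m \rangle = m | \psi_m \rangle$; consequently $A^2 | \psi_m \rangle = m^2 | \psi_m \rangle$, so the spectrum of $A^2$ is $\{ m^2 \}_{m=-j}^j$ and its largest eigenvalue is $j^2$, attained for $m = \pm j$.

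Next I would apply the Rayleigh--Ritz variational principle: for a Hermitian operator, the maximum of $\langle \psi | A^2 | \psi \rangle$ over normalized $| \psi \rangle$ equals the largest eigenvalue of $A^2$, namely $j^2$, and it is achieved precisely on the corresponding eigenspace. The eigenvector of $A$ with eigenvalue $+j$ is, by the definition introduced after Lemma~\ref{lemma-1}, the state $| \psi_{\bf k} \rangle$ of maximal spin polarization; it is simultaneously an eigenvector of $A^2$ with eigenvalue $j^2$, so the supremum is attained at $| \psi_{\bf k} \rangle$. This establishes both assertions of the lemma.

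I do not anticipate any genuine obstacle here: the single point requiring care is the Hermiticity identity $A^{\dag} A = A^2$, which converts the norm into a quadratic form whose extremum is fixed by the spectrum supplied by Lemma~\ref{lemma-1}. One may also note that $m = -j$ yields the same value $j^2$, so the maximizer is not unique, but the statement asserts only that the maximum is attained at $| \psi_{\bf k} \rangle$, which indeed holds.
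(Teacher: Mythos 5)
Your proof is correct and follows essentially the same route as the paper's: both invoke Lemma~\ref{lemma-1} to obtain the spectrum $\{m^2\}_{m=-j}^j$ of $\left(\sum_{\alpha} k_{\alpha} J_{\alpha}\right)^2$, convert the norm into the quadratic form $\langle \psi | \left(\sum_{\alpha} k_{\alpha} J_{\alpha}\right)^2 | \psi \rangle$ via Hermiticity, bound it by the largest eigenvalue $j^2$, and note attainment at $|\psi_{\bf k}\rangle$. You merely spell out the Rayleigh--Ritz step more explicitly than the paper does.
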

\begin{proof}
By Lemma~\ref{lemma-1}, the spectrum of $\left( \sum_{\alpha=1}^3
k_{\alpha} J_{\alpha} \right)^2$ reads $\{m^2\}_{m=-j}^j$.
Therefore, $\| \left( \sum_{\alpha=1}^3 k_{\alpha} J_{\alpha}
\right) | \psi \rangle \|^2 = \langle \psi | \left(
\sum_{\alpha=1}^3 k_{\alpha} J_{\alpha} \right)^2 | \psi \rangle
\leq j^2 \langle \psi | \psi \rangle = j^2$ and \newline $\langle
\psi_{\bf k} | \left( \sum_{\alpha=1}^3 k_{\alpha} J_{\alpha}
\right)^2 | \psi_{\bf k} \rangle = j^2$.
\end{proof}

\begin{lemma} \label{lemma-3}
If $j \geq 1$, then $\langle \psi | J_z | \psi \rangle^2 \leq 9
j^2 \frac{j^2 - \langle \psi | J_x^2 | \psi \rangle}{2j-1}$ for
all normalized vectors $| \psi \rangle \in {\cal H}_{2j+1}$.
\end{lemma}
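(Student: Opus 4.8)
The plan is to pass to the eigenbasis of $J_x$, where the operators $M_\pm := J_y \pm i J_z$ play the role of ladder operators. First I would check, directly from the $SU(2)$ commutation relations, that $[J_x, M_\pm] = \pm M_\pm$, so that on the orthonormal eigenbasis $\{|m\rangle_x\}_{m=-j}^{j}$ of $J_x$ the operator $M_+$ acts as $M_+ |m\rangle_x = \sqrt{(j-m)(j+m+1)}\,|m+1\rangle_x$, exactly as $J_\pm$ acts in the $J_z$ basis. Since $J_z = \tfrac{1}{2i}(M_+ - M_-)$ and $M_- = M_+^{\dagger}$, this yields $\langle \psi | J_z | \psi \rangle = \mathrm{Im}\,\langle \psi | M_+ | \psi \rangle$. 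Expanding $|\psi\rangle = \sum_m c_m |m\rangle_x$ and writing $a_m = |c_m|$, the triangle inequality then gives $|\langle \psi | J_z | \psi \rangle| \le S$, where
$$ S := \sum_{m=-j}^{j-1} a_m a_{m+1}\,b_m, \qquad b_m := \sqrt{(j-m)(j+m+1)}, $$
while $j^2 - \langle \psi | J_x^2 | \psi \rangle = T$ with $T := \sum_{m} (j^2 - m^2) a_m^2$ and $\sum_m a_m^2 = 1$. The whole lemma thus reduces to proving $(2j-1)\,S^2 \le 9 j^2\, T$.

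The core of the argument is a weighted Cauchy--Schwarz bound of $S$ in terms of $T$. I would use precisely the weights $w_m = j^2 - m^2$ coming from $T$, writing each summand as $(a_m \sqrt{w_m})\,(a_{m+1} b_m / \sqrt{w_m})$. Because $w_m$ vanishes exactly at $m = \pm j$, this split is legitimate for every term except the extreme one $m = -j$, which I would peel off as $S_0 = \sqrt{2j}\,a_{-j} a_{-j+1}$ and bound separately. Applying Cauchy--Schwarz to the remaining sum $S'$, the first factor is $\sum_{-j+1 \le m \le j-1} w_m a_m^2 = T$, and the coefficients of the second factor simplify to $b_{n-1}^2 / w_{n-1} = (j+n)/(j+n-1)$, whose maximum over the admissible range is $2$, attained at $n = -j+2$ (here $j \ge 1$ is used). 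Hence $S'^2 \le 2T$. The peeled-off term obeys $S_0^2 = 2j\, a_{-j}^2 a_{-j+1}^2 \le 2j\, a_{-j+1}^2 \le \tfrac{2j}{2j-1}\,T$, since the weight of $a_{-j+1}^2$ in $T$ equals $j^2 - (j-1)^2 = 2j-1$.

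Combining the two estimates through the triangle inequality gives $S \le \big(\sqrt{2} + \sqrt{2j/(2j-1)}\big)\sqrt{T}$, so that $\langle \psi | J_z | \psi \rangle^2 \le S^2 \le \big(\sqrt{2} + \sqrt{2j/(2j-1)}\big)^2 T$. The proof then finishes with the elementary scalar inequality $\big(\sqrt{2} + \sqrt{2j/(2j-1)}\big)^2 \le 9 j^2/(2j-1)$ valid for all $j \ge 1$: its right-hand side is increasing in $j$ while its left-hand side is decreasing, so it suffices to verify it at $j = 1$, where it reads $8 \le 9$.

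The step I expect to be the genuine obstacle is the boundary behaviour at the extreme weights $m = \pm j$, where $w_m = 0$. A naive term-by-term or global Cauchy--Schwarz estimate breaks down there --- a state concentrated near the stretched eigenstate $|{-j}\rangle_x$ can carry a nonzero $\langle \psi | J_z | \psi \rangle$ while contributing nothing to $T$ --- so one must isolate the $m = -j$ term and let it borrow weight from its neighbour $m = -j+1$. This is exactly the mechanism that forces the factor $2j-1$ into the denominator; everything else is routine angular-momentum bookkeeping together with a single-point numerical check.
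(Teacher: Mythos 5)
Your proof is correct, and it takes a genuinely different route from the paper's. Both arguments work in the eigenbasis of $J_x$ and both turn on the same boundary phenomenon you identify at the end: the extreme levels $m=\pm j$ carry zero weight in $T=j^2-\langle\psi|J_x^2|\psi\rangle$, so their contribution to $\langle J_z\rangle$ must be charged to the neighbouring level, whose weight $j^2-(j-1)^2=2j-1$ is exactly the denominator in the lemma. But the mechanisms differ. The paper runs a coarse two-block population argument: setting $p=\langle P_j+P_{-j}\rangle$ for the extreme $J_x$-projectors, it gets $1-p\le T/(2j-1)$ from the spectral decomposition of $J_x^2$, observes that $J_z$ has no matrix elements within ${\rm span}\{|j,j\rangle_x,|j,-j\rangle_x\}$ when $j\ge 1$, and bounds the cross and middle contributions crudely by the operator norm $\|J_z\|=j$, obtaining $|\langle J_z\rangle|\le 2\sqrt{p(1-p)}\,j+(1-p)j\le 3\sqrt{1-p}\,j$ and hence the factor $9j^2$. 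You instead use the full tridiagonal ladder structure of $J_z$ in the $J_x$ basis via $M_\pm=J_y\pm iJ_z$ and a weighted Cauchy--Schwarz with weights $w_m=j^2-m^2$, peeling off only the lower boundary term $m=-j$; I verified the bookkeeping: $b_{n-1}^2/w_{n-1}=(j+n)/(j+n-1)\le 2$ on the stated range (the upper boundary $n=j$ needs no peeling since its coefficient $2j/(2j-1)$ is finite), $S_0^2\le 2j\,a_{-j+1}^2\le \frac{2j}{2j-1}T$, and the closing scalar inequality follows from the monotonicity you cite (the derivative of $9j^2/(2j-1)$ is proportional to $j(j-1)\ge 0$) together with the check $8\le 9$ at $j=1$. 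One cosmetic caveat: the identity $M_+|m\rangle_x=\sqrt{(j-m)(j+m+1)}\,|m+1\rangle_x$ holds only after a Condon--Shortley-type phase choice for the $J_x$ eigenbasis; in a general convention the matrix elements carry unimodular phases, which is harmless because your next step takes absolute values, but it deserves a sentence. As for what each approach buys: the paper's proof is shorter and nearly computation-free, while yours is quantitatively sharper --- since $\sqrt{2j/(2j-1)}\le\sqrt{2}$ for $j\ge 1$, your estimate actually yields the $j$-uniform bound $\langle\psi|J_z|\psi\rangle^2\le 8\,\bigl(j^2-\langle\psi|J_x^2|\psi\rangle\bigr)$, which is strictly stronger than the stated lemma for every $j\ge 1$ (there $9j^2/(2j-1)\ge 9$, increasing in $j$), whereas the paper's constant degrades linearly in $j$; the weaker form is all that the subsequent constrained optimization in Lemma~4 requires, but your version would survive the large-$j$ limit.
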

\begin{proof}
By Lemma~\ref{lemma-1}, the spectral decomposition of $J_x^2$
reads $J_x^2 = j^2 (P_{j} + P_{-j}) + \sum_{m=-j+1}^{j-1} m^2
P_m$, where $P_m = |j,m\rangle_x\langle j,m|$ and $J_x
|j,m\rangle_x = m |j,m\rangle_x$. The average value $\langle J_x^2
\rangle = j^2 - \epsilon \leq p j^2 + (1-p)(j-1)^2$, where $p =
\langle (P_{j} + P_{-j}) \rangle$. Therefore, $1-p \leq
\frac{\epsilon}{2j-1} = \frac{j^2 - \langle J_x^2 \rangle}{2j-1}$.

Let $|\psi\rangle = c_j|j,j\rangle_x + c_{-j}|j,-j\rangle_x +
\sum_{m=-j+1}^{j-1} c_m |j,m\rangle_x$. Note that ${}_x\langle
j,j| J_z |j,j\rangle_x = 0$, ${}_x\langle j,-j| J_z |j,-j\rangle_x
= 0$, $ {}_x\langle j,j| J_z |j,-j\rangle_x = 0$ if $j \geq 1$, $p
= |c_j|^2+|c_{-j}|^2$, and $1-p = \sum_{m=-j+1}^{j-1} |c_m|^2$. We
have
\begin{eqnarray}
\langle \psi | J_z | \psi \rangle & = & 2 {\rm Re} \left(
\sum_{m=-j+1}^{j-1} \overline{c_m} \, {}_x\langle j,m| \right) J_z
\left( c_j|j,j\rangle_x + c_{-j}|j,-j\rangle_x \right)
\nonumber\\
&& + \left(\sum_{m=-j+1}^{j-1} \overline{c_m} \, {}_x\langle j,m|
\right) J_z \left(\sum_{m'=-j+1}^{j-1} c_{m'} | j,m' \rangle_x
\right) \nonumber \\
& \leq & 2\sqrt{p}\sqrt{1-p}\, j + (1-p) j \leq 3 \sqrt{1-p} \, j.
\end{eqnarray}

Noticing that $-J_z$ has the same spectrum as $J_z$ and arguing as
above, we see that $\langle \psi | J_z | \psi \rangle \geq
2\sqrt{p}\sqrt{1-p}\, (-j) + (1-p) (-j) \geq - 3 \sqrt{1-p} \, j$.
Thus, $|\langle \psi | J_z | \psi \rangle| \leq 3 \sqrt{1-p} \,
j$. Squaring both sides of this inequality and recalling $1-p \leq
\frac{j^2 - \langle J_x^2 \rangle}{2j-1}$, we get the statement of
Proposition~ concludes the proof.
\end{proof}

Lemma~\ref{lemma-3} shows that the average value $\langle J_z
\rangle$ cannot be large when $\langle J_x^2 \rangle$ is close to
its maximal value $j^2$. Lemma~\ref{lemma-3} obviously remains
valid if one replaces $J_x$ by $J_y$.

\begin{lemma} \label{lemma-4}
Let the vectors ${\bf k},{\bf l} \in \mathbb{R}^3$ satisfy $|{\bf
k}|^2 + |{\bf l}|^2 = 1$, then $\left\| \sum_{\alpha=1}^3
(k_{\alpha} + i l_{\alpha}) J_{\alpha} | \psi \rangle \right\|^2
\leq \max(j,j^2)$ for all normalized vectors $| \psi \rangle \in
{\cal H}_{2j+1}$.
\end{lemma}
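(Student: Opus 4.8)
The plan is to reduce the statement to a one-parameter operator and then to a sharp spin uncertainty relation. Write $A=\sum_{\alpha}(k_{\alpha}+il_{\alpha})J_{\alpha}=\mathbf{k}\cdot\mathbf{J}+i\,\mathbf{l}\cdot\mathbf{J}$ and observe that the quantity $\|A|\psi\rangle\|$ is unchanged under $A\mapsto e^{i\varphi}A$ and under $A\mapsto U_{g}AU_{g}^{\dagger}$ with $g\in SU(2)$, the latter only relabelling $|\psi\rangle$. Choosing the phase $\varphi$ so that the rotated real parts become orthogonal and then using the rotation formula~\eqref{rotation} to align them with the coordinate axes, it suffices to consider $A=kJ_{x}+ilJ_{y}$ with $k,l\geq0$ and $k^{2}+l^{2}=1$. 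From $[J_{x},J_{y}]=iJ_{z}$ one gets $A^{\dagger}A=k^{2}J_{x}^{2}+l^{2}J_{y}^{2}-klJ_{z}$, hence
\[
\|A|\psi\rangle\|^{2}=k^{2}\langle J_{x}^{2}\rangle+l^{2}\langle J_{y}^{2}\rangle-kl\langle J_{z}\rangle
\]
for every normalized $|\psi\rangle$.

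The case $j=\tfrac12$ I would dispose of at once: there $A^{\dagger}A=\tfrac14(I-2kl\,\sigma_{z})$ has top eigenvalue $\tfrac14(1+2kl)\leq\tfrac12=\max(j,j^{2})$. For $j\geq1$ the claim is $\|A|\psi\rangle\|^{2}\leq j^{2}$, which upon subtracting from $(k^{2}+l^{2})j^{2}$ reads
\[
k^{2}\bigl(j^{2}-\langle J_{x}^{2}\rangle\bigr)+l^{2}\bigl(j^{2}-\langle J_{y}^{2}\rangle\bigr)-kl\langle J_{z}\rangle\geq0 .
\]
Because the lemma ranges over all $\mathbf{k},\mathbf{l}$, i.e. over all admissible $(k,l)$ in the first quadrant, this is precisely the positive semidefiniteness of
\[
\begin{pmatrix} j^{2}-\langle J_{x}^{2}\rangle & -\tfrac12\langle J_{z}\rangle \\ -\tfrac12\langle J_{z}\rangle & j^{2}-\langle J_{y}^{2}\rangle \end{pmatrix},
\]
equivalently the sharp transverse relation
\[
\langle J_{z}\rangle^{2}\leq4\bigl(j^{2}-\langle J_{x}^{2}\rangle\bigr)\bigl(j^{2}-\langle J_{y}^{2}\rangle\bigr),
\]
valid for all normalized $|\psi\rangle$. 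Once this is in hand the lemma is immediate.

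Establishing that last inequality is the step I expect to be the main obstacle. Lemma~\ref{lemma-3} alone will not give it: retaining only the one-sided bound $\langle J_{z}\rangle^{2}\leq\frac{9j^{2}}{2j-1}(j^{2}-\langle J_{x}^{2}\rangle)$ together with $|\langle J_{z}\rangle|\leq j$ and the Casimir identity $\langle J_{x}^{2}\rangle+\langle J_{y}^{2}\rangle+\langle J_{z}^{2}\rangle=j(j+1)$ yields a relaxation whose minimum is already negative at $j=1$, so no decoupled estimate can close the gap. I would instead prove the product bound directly, in the spirit of Lemma~\ref{lemma-3}: expand $|\psi\rangle$ in the eigenbasis of $J_{x}$, separate the weight on the extreme states $|j,\pm j\rangle_{x}$, and bound $\langle J_{z}\rangle$ by the tridiagonal off-diagonal elements, while this time also tracking $\langle J_{y}^{2}\rangle$ in the same basis --- the extreme states carry $\langle J_{y}^{2}\rangle=j/2$, so $j^{2}-\langle J_{y}^{2}\rangle$ is forced to remain large exactly when $j^{2}-\langle J_{x}^{2}\rangle$ is small. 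Feeding the two transverse variances into the rewritten identity $\bigl(j^{2}-\langle J_{x}^{2}\rangle\bigr)+\bigl(j^{2}-\langle J_{y}^{2}\rangle\bigr)=j(j-1)+\langle J_{z}^{2}\rangle$ should produce the constant $4$ together with the threshold $j\geq1$. An equivalent route is to verify positive semidefiniteness of the pentadiagonal matrix $j^{2}I-A^{\dagger}A$ block by block in the $J_{z}$-eigenbasis; the subtlety there (and the reason a crude row-sum/Gershgorin bound overshoots $j^{2}$) is that the constant $4$ is sharp, with equality attained at $|j,j\rangle$ when $j=1$, so the argument must not discard any slack.
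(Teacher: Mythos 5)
Your reduction is correct and in fact tidier than the paper's: using the phase freedom $A\mapsto e^{i\varphi}A$ to make the real and imaginary parts of ${\bf k}+i{\bf l}$ orthogonal, and then a rotation~\eqref{rotation} to reach $A=kJ_x+ilJ_y$, is legitimate, and it eliminates at the outset the parameter $\eta$ that the paper only removes later via the bound $A\cos\eta+B\sin\eta\le\sqrt{A^2+B^2}$ and a rotation about the $z$ axis; your $j=\tfrac12$ computation is also right. But there is a genuine gap exactly where you flag it: the whole lemma has been traded for the inequality $\langle J_z\rangle^2\le 4\bigl(j^2-\langle J_x^2\rangle\bigr)\bigl(j^2-\langle J_y^2\rangle\bigr)$ for $j\ge 1$, and this inequality is never proven. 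Worse, by your own chain of equivalences it is \emph{equivalent} to the lemma for $j\ge1$ (maximizing $k^2\langle J_x^2\rangle+l^2\langle J_y^2\rangle\pm kl\langle J_z\rangle$ over admissible $(k,l)$ is exactly the positive semidefiniteness of your $2\times2$ matrix), so the proposal restates the statement in a sharp, equality-attaining form rather than proving it. Neither sketch closes the hole: a Lemma~\ref{lemma-3}-style expansion discards slack by construction (that is where the non-sharp constant $9j^2/(2j-1)$ comes from), whereas your target inequality admits no loss at all near $j=1$ --- indeed for $j=1$ equality holds not just at $|1,1\rangle$ but on a whole family of states, e.g.\ every real combination $\cos\theta\,|1,1\rangle+\sin\theta\,|1,-1\rangle$ --- and the pentadiagonal/tridiagonal route is only named, with the decisive spectral estimate left open.

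For comparison, the paper never attempts your sharp product inequality. It splits by spin: for $j=1,\tfrac32,2$ it computes the eigenvalues of the rotated operator $F$ explicitly and maximizes over the two angles, getting $1$, $\tfrac94$, $4$; only for $j>2$ does it use Lemma~\ref{lemma-3}, feeding the two one-sided bounds $(2j-1)\langle J_z\rangle^2\le 9j^2\bigl(j^2-\langle J_{x}^2\rangle\bigr)$ (and likewise with $J_y$) together with the Casimir constraint $a+b+c+d=j(j+1)$ into a Lagrange-multiplier maximization of $\tfrac12\bigl(\sqrt{(a-b)^2+c}+a+b\bigr)$, where the crude constant is affordable precisely because $j>2$ leaves slack. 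Your correct diagnosis that the decoupled Lemma~\ref{lemma-3} relaxation turns negative at $j=1$ is exactly why the paper quarantines the small spins into finite computations. To complete your route you would either have to prove the sharp transverse inequality for all $j\ge1$ (a nontrivial statement with a large equality set at $j=1$), or adopt the paper's case split, e.g.\ bound the top eigenvalue of your one-parameter family $k^2J_x^2+l^2J_y^2-klJ_z$ (which in the $J_z$ basis decouples into two tridiagonal blocks) by direct computation for $j=1,\tfrac32,2$ and by a lossy relaxation for $j>2$. As it stands, the central step is missing.
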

\begin{proof}
Suppose ${\bf k}$ and ${\bf l}$ are linearly dependent, i.e.,
${\bf k} = |{\bf k}| {\bf n}$ and ${\bf l} = |{\bf l}| {\bf n}$
for some unit vector ${\bf n} \in \mathbb{R}^3$. Then $\left\|
\sum_{\alpha=1}^3 (k_{\alpha} + i l_{\alpha}) J_{\alpha} | \psi
\rangle \right\|^2 = (|{\bf k}|^2 + |{\bf l}|^2) \left\|
\sum_{\alpha=1}^3 n_{\alpha} J_{\alpha} | \psi \rangle \right\|^2
\leq j^2$ by Lemma~\ref{lemma-2}.

Suppose ${\bf k}$ and ${\bf l}$ are linearly independent. Note
that $\left\| \sum_{\alpha=1}^3 (k_{\alpha} + i l_{\alpha})
J_{\alpha} | \psi \rangle \right\|^2 = \langle \psi | F | \psi
\rangle$, where
\begin{equation} \label{F-operator}
F:=\sum_{\alpha,\beta}(k_{\alpha} - i l_{\alpha})(k_{\beta} + i
l_{\beta}) J_{\alpha} J_{\beta} = \left( \sum_{\alpha} k_{\alpha}
J_{\alpha} \right)^2 + \left( \sum_{\alpha} l_{\alpha} J_{\alpha}
\right)^2 + \sum_{\gamma} [{\bf l} \times {\bf k}]_{\gamma}
J_{\gamma}.
\end{equation}

\noindent Here, we have used the commutation relation
$[J_{\alpha},J_{\beta}] = i \sum_{\gamma} e_{\alpha\beta\gamma}
J_{\gamma}$ and the notation $[{\bf l} \times {\bf k}]$ for the
conventional cross product of vectors ${\bf l}$ and ${\bf k}$. Let
the angle between vectors ${\bf k}$ and ${\bf l}$ be $\vartheta$.
Consider a rotation $Q$ in $\mathbb{R}^3$ such that $Q\left(
\frac{\bf k}{|{\bf k}|} + \frac{\bf l}{|{\bf l}|} \right)$ is
aligned with the positive direction of axis $x$ and $Q[{\bf l}
\times {\bf k}]$ is aligned with the positive direction of axis
$z$. In other words, the vectors $Q{\bf k}$ and $Q{\bf l}$ belong
to the $xy$-plane, and the axis $x$ is a bisector of the angle
between vectors $Q{\bf k}$ and $Q{\bf l}$. The vector $Q[{\bf l}
\times {\bf k}]$ is perpendicular to both $Q{\bf k}$ and $Q{\bf
l}$ and has length $|[{\bf l} \times {\bf k}]| = |{\bf k}| \,
|{\bf l}| \sin\vartheta$. Therefore, the vector $Q{\bf k}$ has
coordinates $(|{\bf k}|\cos\frac{\vartheta}{2},|{\bf
k}|\sin\frac{\vartheta}{2},0)$, the vector $Q{\bf l}$ has
coordinates $(|{\bf l}|\cos\frac{\vartheta}{2},-|{\bf
l}|\sin\frac{\vartheta}{2},0)$, and the vector $Q[{\bf l} \times
{\bf k}]$ has coordinates $(0,0,|{\bf k}| \, |{\bf l}|
\sin\vartheta)$. The corresponding unitary rotation $U_Q \in
\{U_g\}_{g\in SU(2)}$ transforms the spin operators in accordance
with formula~\eqref{rotation} as follows:
\begin{eqnarray}
\label{Jx-rotation} U_Q \left( \sum_{\alpha} k_{\alpha} J_{\alpha}
\right) U_Q^{\dag} &=& \sum_{\beta} (Q{\bf k})_{\beta} J_{\beta} =
|{\bf k}| \left( \cos\frac{\vartheta}{2} J_x +
\sin\frac{\vartheta}{2} J_y \right), \\
U_Q \left( \sum_{\alpha} l_{\alpha} J_{\alpha} \right) U_Q^{\dag}
&=& \sum_{\beta} (Q{\bf l})_{\beta} J_{\beta} = |{\bf l}| \left(
\cos\frac{\vartheta}{2} J_x -
\sin\frac{\vartheta}{2} J_y \right), \\
U_Q \left( \sum_{\gamma} [{\bf l} \times {\bf k}]_{\gamma}
J_{\gamma} \right) U_Q^{\dag} &=& \sum_{\beta} (Q[{\bf l} \times
{\bf k}])_{\beta} J_{\beta} = |{\bf k}| \, |{\bf l}| \sin\vartheta
J_z. \label{Jz-rotation}
\end{eqnarray}

\noindent Substituting \eqref{Jx-rotation}--\eqref{Jz-rotation} in
\eqref{F-operator} and taking into account that $|{\bf k}|^2+|{\bf
l}|^2 = 1$, we get
\begin{eqnarray}
U_Q F U_Q^{\dag} &=& \cos^2\frac{\vartheta}{2} J_x^2 +
\sin^2\frac{\vartheta}{2} J_y^2 + (|{\bf k}|^2 - |{\bf l}|^2)
\sin\frac{\vartheta}{2} \cos\frac{\vartheta}{2} \left( J_x J_y +
J_y J_x \right) + |{\bf
k}| \, |{\bf l}| \sin\vartheta J_z \nonumber\\
&=& \frac{1}{4} \cos\vartheta (J_+^2 + J_-^2) + \sin\vartheta
\left( \frac{1}{4i} (|{\bf k}|^2 - |{\bf l}|^2) (J_+^2 - J_-^2) +
|{\bf k}| \, |{\bf l}| J_z \right) + \frac{1}{2} (J_x^2 +
J_y^2).\nonumber\\
\end{eqnarray}

\noindent Quantities $|{\bf k}|^2 - |{\bf l}|^2$ and $2|{\bf k}|
\, |{\bf l}|$ can be treated as $\cos\eta$ and $\sin\eta$ for some
real $\eta$, respectively, because $(|{\bf k}|^2 - |{\bf l}|^2)^2
+ 4 |{\bf k}|^2 |{\bf l}|^2 = (|{\bf k}|^2 + |{\bf l}|^2)^2 = 1$.
Hence,
\begin{equation}
U_Q F U_Q^{\dag} = \frac{1}{4} \cos\vartheta (J_+^2 + J_-^2) +
\sin\vartheta \left( \frac{\cos\eta}{4i} (J_+^2 - J_-^2) +
\frac{\sin\eta}{2} J_z \right) + \frac{1}{2} (J_x^2 + J_y^2).
\end{equation}

If $j=1/2$, then $J_+^2 = J_-^2 = 0$, $J_x^2=J_y^2=\frac{1}{4}I$,
and $U_Q F U_Q^{\dag} = \frac{1}{2}\sin\vartheta\sin\eta J_z +
\frac{1}{4}I$. Clearly, $\langle \psi | U_Q F U_Q^{\dag} | \psi
\rangle \leq \frac{1}{2} = j$.

If $j=1$, then the matrix of operator $U_Q F U_Q^{\dag}$ in the
basis $\{|j,m\rangle\}_{m=-j}^j$ has a rather simple form. Its
eigenvalues do not depend on $\vartheta$ and $\eta$ and read
$1,1,0$.

For the cases $j=3/2$ and $j=2$ one can find eigenvalues of $U_Q F
U_Q^{\dag}$ and maximize them with respect to $\vartheta$ and
$\eta$ to get upper bounds $9/4$ and $4$, respectively.

For $j>2$ we develop the following technique.

Note that $A \cos\eta + B \sin\eta \leq \sqrt{A^2 + B^2}$ for
$A,B,\eta\in\mathbb{R}$, so the average value
\begin{equation}
\left\langle \left( \frac{1}{4i} (|{\bf k}|^2 - |{\bf l}|^2)
(J_+^2 - J_-^2) + |{\bf k}| \, |{\bf l}| J_z \right) \right\rangle
\leq \frac{1}{4} \sqrt{ \langle i(J_+^2 - J_-^2) \rangle^2 + 4
\langle J_z \rangle^2}.
\end{equation}

\noindent Similarly, $C\sin\vartheta + D\cos\vartheta \leq
\sqrt{C^2+D^2}$ for all $\vartheta,C,D \in \mathbb{R}$, therefore
\begin{equation} \label{average-inequality}
\langle U_Q F U_Q^{\dag} \rangle \leq \frac{1}{4} \sqrt{\langle
(J_+^2 + J_-^2) \rangle^2 + \langle i(J_+^2 - J_-^2) \rangle^2 + 4
\langle J_z \rangle^2} + \frac{1}{2} \langle (J_x^2 + J_y^2)
\rangle.
\end{equation}

Suppose the maximum in the right hand side of
\eqref{average-inequality} is attained at some vector
$|\psi_0\rangle$, then this maximum is also attained at the vector
$|\psi_{\theta} \rangle = e^{- i J_z \theta} |\psi_0\rangle$ due
to invariance of \eqref{average-inequality} with respect to
rotations around axis $z$. On the other hand, $\langle
\psi_{\theta} | J_+ | \psi_{\theta} \rangle = e^{i\theta} \langle
\psi_{0} | J_+ | \psi_{0} \rangle$, which means that $\langle J_+
\rangle$ can always be chosen to be real, so $\langle J_+ \rangle
= \langle J_- \rangle$. In other words,
\begin{eqnarray} \label{average-inequality}
\langle U_Q F U_Q^{\dag} \rangle & \leq & \max_{\psi: \langle \psi
| \psi \rangle = 1} \frac{1}{4} \sqrt{\langle \psi | (J_+^2 +
J_-^2) | \psi \rangle^2 + 4 \langle \psi | J_z | \psi \rangle^2} +
\frac{1}{2}
\langle \psi | (J_x^2 + J_y^2) | \psi \rangle \nonumber\\
& = & \max_{\psi: \langle \psi | \psi \rangle = 1} \frac{1}{2}
\sqrt{\langle \psi | (J_x^2 - J_y^2) | \psi \rangle^2 +  \langle
\psi | J_z | \psi \rangle^2} + \frac{1}{2} \langle \psi | (J_x^2 +
J_y^2) | \psi \rangle. \label{rhs-to-abc}
\end{eqnarray}

Denote $a = \langle \psi | J_x^2 | \psi \rangle$, $b = \langle
\psi | J_x^2 | \psi \rangle$, and $c = \langle \psi | J_z | \psi
\rangle^2$. The dispersion of spin projection onto axis $z$ denote
$d = \langle \psi | J_z^2 | \psi \rangle - \langle \psi | J_z |
\psi \rangle^2 = \langle \psi | J_z^2 | \psi \rangle - c$. Note
that $0 \leq a,b,c \leq j^2$ and $d \geq 0$. Since
$J_x^2+J_y^2+J_z^2 = j(j+1)I_{2j+1}$, we have $a+b+c+d = j(j+1)$.
Finally, from Lemma~\ref{lemma-3} it follows that $c \leq 9 j^2
\frac{j^2 - a}{2j-1}$ and $c \leq 9 j^2 \frac{j^2 - b}{2j-1}$.
Therefore, we simplify \eqref{rhs-to-abc} as follows:
\begin{equation} \label{abc}
\langle U_Q F U_Q^{\dag} \rangle \leq \frac{1}{2} \max_{ \small
\begin{array}{c}
  a,b,c,d: \\
  0 \leq a,b,c \leq j^2 \\
  0 \leq d \\
  (2j-1) c \leq 9
j^2 (j^2 - a) \\
  (2j-1) c \leq 9
j^2 (j^2 - b) \\
  a+b+c+d = j(j+1) \\
\end{array}} \sqrt{(a-b)^2+c} + a + b.
\end{equation}

Using the method of Lagrange multipliers one can readily see that
the maximum in the right hand side of \eqref{abc} is attained on
the boundary of region for parameters $a,b,c,d$. If $j>2$, then
the maximum equals $j^2$ and is attained at points with $a=j^2$
and $c=0$ or $b=j^2$ and $c=0$.

Although we have considered the cases $j=1,\frac{3}{2},2$ and
$j>2$ separately, their results can be unified, namely, $\langle
U_Q F U_Q^{\dag} \rangle \leq j^2$ if $j \geq 1$. Recalling the
fact $\langle U_Q F U_Q^{\dag} \rangle \leq \frac{1}{2}$ if
$j=\frac{1}{2}$, we obtain that $\langle U_Q F U_Q^{\dag} \rangle
\leq \max(j,j^2)$. Since this bound is valid for all normalized
states $|\psi\rangle$, we finally conclude that $\langle F \rangle
\leq \max(j,j^2)$.
\end{proof}

\begin{proposition} \label{prop-p-norm}
The maximal $p$-norm ($p \geq 1$) and the minimal output entropy
of the Landau--Streater channel $\Phi:
\mathcal{B}(\mathcal{H}_{2j+1}) \mapsto
\mathcal{B}(\mathcal{H}_{2j+1})$ are equal to
\begin{equation} \label{p-norm-Phi}
\nu_p(\Phi) = \frac{(j^p + 1)^{1/p}}{j+1} \quad \text{and} \quad
S_{\min}(\Phi) = \log (j+1) - \frac{j}{j+1} \log j,
\end{equation}

\noindent respectively, and are attained at the state
$|j,j\rangle$.
\end{proposition}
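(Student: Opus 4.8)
The plan is to reduce both optimizations to pure input states, bound the largest eigenvalue of the output by exactly the quantity controlled in Lemma~\ref{lemma-4}, and then close the argument by a single majorization comparison.

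First, since $\rho \mapsto \|\Phi[\rho]\|_p$ is convex for $p \geq 1$ and $\rho \mapsto S(\Phi[\rho])$ is concave, the maximal $p$-norm and the minimal output entropy are both attained at an extreme point of the state space, i.e. at a pure input $|\psi\rangle\langle\psi|$. For the candidate optimizer $|j,j\rangle$ I would read off the output spectrum directly from \eqref{Phi-on-jm}--\eqref{spec-phi-jm} with $m=j$: the $|j,m+1\rangle$ term vanishes and the output reduces to $\frac{j}{j+1}|j,j\rangle\langle j,j| + \frac{1}{j+1}|j,j-1\rangle\langle j,j-1|$, with spectrum $\{\frac{j}{j+1},\frac{1}{j+1},0,\ldots\}$. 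A one-line computation then confirms that this spectrum yields exactly $\nu_p = (j^p+1)^{1/p}/(j+1)$ and $S = \log(j+1) - \frac{j}{j+1}\log j$, so it remains only to prove these are optimal over all pure inputs.

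The core estimate is a bound on the largest eigenvalue $\mu_1$ of $\Phi[|\psi\rangle\langle\psi|] = [j(j+1)]^{-1}\sum_\alpha J_\alpha|\psi\rangle\langle\psi|J_\alpha$. Writing $|\chi\rangle$ for the corresponding eigenvector, I have $\mu_1 = [j(j+1)]^{-1}\sum_\alpha |\langle\chi|J_\alpha|\psi\rangle|^2 = [j(j+1)]^{-1}\|\mathbf{v}\|^2$, where $\mathbf{v} = (\langle\chi|J_\alpha|\psi\rangle)_{\alpha} \in \mathbb{C}^3$. Introducing the complex unit vector $\mathbf{w} = \mathbf{v}/\|\mathbf{v}\|$ and the operator $K = \sum_\alpha \overline{w_\alpha}\, J_\alpha$, I obtain $\|\mathbf{v}\| = \langle\chi|K|\psi\rangle \leq \|K|\psi\rangle\|$ by Cauchy--Schwarz. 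Decomposing $\overline{w_\alpha} = k_\alpha + i l_\alpha$ with $k_\alpha,l_\alpha\in\mathbb{R}$ turns $\|\mathbf{w}\|=1$ into $|\mathbf{k}|^2 + |\mathbf{l}|^2 = 1$, so Lemma~\ref{lemma-4} applies verbatim and gives $\|K|\psi\rangle\|^2 \leq \max(j,j^2)$, whence $\mu_1 \leq \max(j,j^2)/[j(j+1)] = \max(1,j)/(j+1)$. This is the step I expect to be the crux: recognizing that the top output eigenvalue, after one Cauchy--Schwarz estimate and a complexification of the contracting vector, is governed precisely by the norm already estimated in Lemma~\ref{lemma-4}.

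Finally I would convert the eigenvalue cap into the two claimed extremal values. Set $s = \max(1,j)/(j+1)$; since $s \geq \frac{1}{2}$ for every $j \geq \frac{1}{2}$ and the output has unit trace, the vector $(s, 1-s, 0, \ldots)$ majorizes the entire output spectrum, because the first partial sum of the sorted spectrum is bounded by $\mu_1 \leq s$ and every later partial sum is bounded by $1$. As $\sum_i \mu_i^p$ is Schur-convex and the Shannon entropy is Schur-concave, this single majorization simultaneously forces $\|\Phi[\rho]\|_p^p \leq s^p + (1-s)^p$ and $S(\Phi[\rho]) \geq -s\log s - (1-s)\log(1-s)$ for every pure input. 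Noting that $s^p + (1-s)^p = (j^p+1)/(j+1)^p$ and $-s\log s-(1-s)\log(1-s) = \log(j+1) - \frac{j}{j+1}\log j$ irrespective of whether $s=\frac{j}{j+1}$ or $s=\frac{1}{j+1}$, these bounds coincide with the values attained at $|j,j\rangle$ from the first step, which finishes the proof. The rank-at-most-$3$ structure of the output makes the geometry transparent, but it is not actually needed, since the majorization already follows from the lone eigenvalue cap.
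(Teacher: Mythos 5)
Your proof is correct, and it shares the paper's overall skeleton: the crux is the same Lemma~\ref{lemma-4} bound, attainment is checked at $|j,j\rangle$ via \eqref{spec-phi-jm}, and the endgame is the same two-point majorization combined with Schur-convexity of the $p$-norm and Schur-concavity of the entropy. Where you genuinely diverge is the middle step. The paper reaches Lemma~\ref{lemma-4} through a variational computation: it restricts the maximizing vector $|\chi\rangle$ to the span of $\{J_\alpha|\psi\rangle\}_{\alpha=1}^3$, forms the $3\times 3$ Gram matrix $G_{\alpha\beta}=\langle\psi|J_\alpha J_\beta|\psi\rangle$, proves the exact identity $\lambda=\|G\|_\infty/[j(j+1)]$ via the substitution $|c'\rangle=\sqrt{G}|c\rangle$ in \eqref{lambda-throgh-c}, and then rewrites $\|G\|_\infty$ as in \eqref{G-norm} as a maximum of $\bigl\|\sum_\alpha u_{\alpha 1}J_\alpha|\psi\rangle\bigr\|^2$ over the first column of a unitary. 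You observe that only the inequality $\mu_1\leq \max(j,j^2)/[j(j+1)]$ is needed and get it in one line by Cauchy--Schwarz applied to $\langle\chi|K|\psi\rangle$ with $K=\sum_\alpha \overline{w_\alpha}J_\alpha$, bypassing the Gram matrix and the $\sqrt{G}$ trick entirely; your complexification $\overline{w_\alpha}=k_\alpha+i l_\alpha$ lands exactly in the hypothesis $|\mathbf{k}|^2+|\mathbf{l}|^2=1$ of Lemma~\ref{lemma-4}, just as in the paper. You are also more explicit on two points the paper glosses over: the reduction to pure inputs (convexity of $\rho\mapsto\|\Phi[\rho]\|_p$ and concavity of $\rho\mapsto S(\Phi[\rho])$, where the paper simply posits a pure optimizer), and the case $j=\frac{1}{2}$, where $s=1/(j+1)>j/(j+1)$ yet the unordered pair $\{s,1-s\}=\{j/(j+1),1/(j+1)\}$ is unchanged, so the claimed formulas hold uniformly; your check that $s\geq\frac{1}{2}$, ensuring the comparison vector is correctly sorted, is a detail the paper's majorization claim tacitly assumes. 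What the paper's longer route buys is an exact identity for the top output eigenvalue rather than a one-sided bound, but since attainment at $|j,j\rangle$ is verified separately in both arguments, nothing is lost in your streamlined version.
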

\begin{proof}
Let $|\psi\rangle \langle \psi |$ be a pure state at which the
maximal $\infty$-norm is attained. Then $\| \Phi[|\psi\rangle
\langle \psi |] \|_{\infty} = \lambda$, where $\lambda$ is the
maximal output eigenvalue. On the other hand,
\begin{equation} \label{infinite-norm}
\lambda = \max_{\chi \neq 0} \frac{\langle \chi | \,
\Phi[|\psi\rangle \langle \psi |] \, | \chi \rangle} {\langle \chi
| \chi \rangle} = \max_{\chi \neq 0} \frac{ \sum_{\alpha=1}^3
 | \langle \varphi_{\alpha} | \chi \rangle |^2 }{j(j+1) \langle \chi | \chi \rangle}, \qquad | \varphi_{\alpha} \rangle = J_{\alpha} | \psi \rangle.
\end{equation}

The vector $|\chi\rangle$ maximizing~\eqref{infinite-norm} must
belong to a linear span of vectors
$\{|\varphi_{\alpha}\rangle\}_{\alpha=1}^3$, i.e. $|\chi\rangle =
\sum_{\beta=1}^3 c_{\beta} |\varphi_{\beta}\rangle$. Introduce the
Hermitian Gram matrix $G_{\alpha\beta} = \langle \varphi_{\alpha}
| \varphi_{\beta}
\rangle = \langle \psi | J_{\alpha} J_{\beta} | \psi \rangle$ and the vector $|c\rangle = \left(%
\begin{array}{ccc}
  c_1 & c_2 & c_3 \\
\end{array}%
\right)^{\top} \in \mathcal{H}_3$, then $\sum_{\alpha=1}^3
 | \langle \varphi_{\alpha} | \chi \rangle |^2 = \sum_{\alpha=1}^3
\left\vert \sum_{\beta=1}^3 G_{\alpha\beta} c_{\beta}
\right\vert^2 = \langle c | G^2 | c \rangle$ and $\langle \chi |
\chi \rangle = \langle c | G | c \rangle$.
Equation~\eqref{infinite-norm} can be further simplified with the
use of vector $| c ' \rangle = \sqrt{G} | c \rangle$:
\begin{equation} \label{lambda-throgh-c}
\lambda = \frac{1}{j(j+1)} \max_{ c: \ \sqrt{G}|c\rangle \neq 0 }
\frac{\langle c | G^2 | c \rangle}{\langle c | G | c \rangle} =
\frac{1}{j(j+1)} \max_{ c' \neq 0 } \frac{\langle c' | G | c'
\rangle}{\langle c' | c' \rangle} = \frac{1}{j(j+1)}
\|G\|_{\infty}.
\end{equation}

On the other hand, the $\infty$-norm of $G$ reads
\begin{equation} \label{G-norm}
\|G\|_{\infty} = \max_{u: \, u^{\dag}u = I} \sum_{\alpha,\beta =
1}^3 \overline{u_{\alpha 1}} G_{\alpha \beta} u_{\beta 1} =
\max_{u: \, u^{\dag}u = I} \left\| \sum_{\alpha=1}^3 u_{\alpha 1}
J_{\alpha} |\psi\rangle \right\|^2.
\end{equation}

Since $u$ is a unitary matrix, the vector ${\bf u} = \left(%
\begin{array}{ccc}
  u_{11} & u_{21} & u_{31} \\
\end{array}%
\right)^T = {\bf k} + i {\bf l}$, where ${\bf k},{\bf l} \in
\mathbb{R}^3$ and ${\bf u}^{\dag}{\bf u} = |{\bf k}|^2 + |{\bf
l}|^2 = 1$. By Lemma~\ref{lemma-4} for any $|\psi\rangle$ the
maximum in the right hand side of \eqref{G-norm} does not exceed
$\max(j,j^2)$. Therefore,
\begin{equation}
\lambda = \frac{1}{j(j+1)} \|G\|_{\infty} \leq \max \left(
\frac{j}{j+1},\frac{1}{j+1} \right).
\end{equation}

On the other hand, if $|\psi\rangle = |j,j\rangle$, then by
formula~\eqref{spec-phi-jm} we have $Spec \left( \Phi[| j,j
\rangle \langle j,j |] \right) = \left\{ \frac{j}{j+1},
\frac{1}{j+1}, 0, \ldots \right\}$. This implies that $\lambda =
\max(\frac{j}{j+1},\frac{1}{j+1})$ and $\| \Phi[|\psi\rangle
\langle \psi |] \|_{\infty}$ is attained at the vector
$|j,j\rangle$.

Denote $\boldsymbol{\lambda} = \left\{ \frac{j}{j+1},
\frac{1}{j+1}, 0, \ldots \right\}$. Since $\boldsymbol{\lambda}$
has only two nonzero components and the largest component is
$\nu_{\infty}(\Phi)$, then $\boldsymbol{\lambda}$ majorizes all
other output spectra $\boldsymbol{\mu}$. Here, we use the
conventional definition of majorization (Ref.~\cite{Tong},
Definition 12.1): a sequence of real numbers
$\boldsymbol{\lambda}=(\lambda_1,\lambda_2,\ldots,\lambda_n)$
majorizes another sequence of real numbers
$\boldsymbol{\mu}=(\mu_1,\mu_2,\ldots,\mu_n)$ if, after possible
renumeration, the terms of the sequences $\boldsymbol{\lambda}$
and $\boldsymbol{\mu}$ satisfy conditions $\lambda_1 \geq
\lambda_2 \geq \ldots \geq \lambda_n$, $\mu_1 \geq \mu_2 \geq
\ldots \geq \mu_n$, $\lambda_1 + \lambda_2 + \ldots + \lambda_k
\geq \mu_1 + \mu_2 + \ldots + \mu_k$ for each $k$, $1 \leq k \leq
n-1$, and $\lambda_1 + \lambda_2 + \ldots + \lambda_n = \mu_1 +
\mu_2 + \ldots + \mu_n$. In our case, $\lambda_1 =
\nu_{\infty}(\Phi)$, which guarantees $\lambda_1 \geq \mu_1$ for
all other output spectra $\boldsymbol{\mu}$. Moreover, $\lambda_1
+ \lambda_2 = 1 = \mu_1 + \mu_2 + \ldots + \mu_{2j+1} \geq \mu_1 +
\mu_2$, therefore $\boldsymbol{\lambda}$ majorizes any other
output spectrum $\boldsymbol{\mu}$. Since functions $y(x)=x{\rm
log}x$ and $y(x)=x^p$, $p\geq 1$, are convex, the Shannon entropy
$H({\bf x})=- \sum_{k=1}^n x_k {\rm log} x_k$ is a Schur-concave
function of ${\bf x} \in [0,1]^n$ and the output $p$-norm ${\cal
V}_p({\bf x}) = \left( \sum_{k=1}^{n} x_k^p \right)^{1/p}$ is a
Schur-convex function of ${\bf x} \in [0,1]^n$ (Ref.~\cite{Tong},
Definition 12.23, Theorem 12.27). Therefore, $H(\boldsymbol{\mu})
\geq H(\boldsymbol{\lambda})$ and ${\cal V}_p(\boldsymbol{\mu})
\leq {\cal V}_p(\boldsymbol{\lambda})$ for all output spectra
$\boldsymbol{\mu}$. This observation results in
formulas~\eqref{p-norm-Phi}.
\end{proof}

Corollaries~\ref{corollary-qubit-entropy}
and~\ref{corollary-qutrit-entropy} are merely consequences of
proposition~\ref{prop-p-norm}.

Consider the second tensor power $\Phi^{\otimes 2}$ of a channel
$\Phi: \mathcal{B}(\mathcal{H}) \mapsto \mathcal{B}(\mathcal{H})$.
Suppose a density operator $\rho \in \mathcal{B}(\mathcal{H})$.
Then for the factorized input $\rho^{\otimes 2}$ we have
$\Phi^{\otimes 2}[\rho^{\otimes 2}] = \left( \Phi[\rho]
\right)^{\otimes 2}$. Obviously, the purity ${\rm tr}\left[ \left(
\Phi^{\otimes 2}[\rho^{\otimes 2}] \right)^2 \right]$ of the state
$\Phi^{\otimes 2}[\rho^{\otimes 2}]$ is equal to the square of the
purity ${\rm tr}\left[ \left( \Phi[\rho] \right)^2 \right]$ of the
state $\Phi[\rho]$. However, if one uses \textit{entangled} input
states $\varrho_{\rm ent} \in \mathcal{B}(\mathcal{H}^{\otimes
2})$, then in general the purity of the state $\Phi^{\otimes
2}[\varrho_{\rm ent}]$ can be greater than the purity of all
possible factorized states $\Phi^{\otimes 2}[\rho^{\otimes 2}]$.
Therefore, in general $\nu_2(\Phi^{\otimes 2}) \geqslant \left(
\nu_2(\Phi) \right)^2$. Clearly, if $\nu_2(\Phi^{\otimes 2}) >
\left( \nu_2(\Phi) \right)^2$, then the maximal 2-norm for the
channel $\Phi^{\otimes 2}$ is attained at some entangled state.

Nevertheless, there exist some channels, for which
$\nu_2(\Phi^{\otimes 2}) = \left( \nu_2(\Phi) \right)^2$ and the
use of entangled inputs does not help to increase the output
purity~\cite{michalakis-2007}. Among such channels, there is a
class of unital qubit channels~\cite{king-2002}, so for the
Landau--Streater channel $\Phi: \mathcal{B}(\mathcal{H}_2) \mapsto
\mathcal{B}(\mathcal{H}_2)$ the multiplicativity of the maximal
2-norm holds. Since the Landau--Streater channel $\Phi:
\mathcal{B}(\mathcal{H}_3) \mapsto \mathcal{B}(\mathcal{H}_3)$
reduces to the Werner--Holevo channel, then the multiplicativity
of the maximal $p$-norm for such a channel holds for all $1
\leqslant p \leqslant 2$, Ref.~\cite{datta-2004}, and is violated
for $p
> 4.79$, Ref.~\cite{wh-2002}. The Landau--Streater channel for $j > 1$ cannot be analyzed in the
same way as the case $j=1$ and does not satisfy the known
sufficient criteria of multiplicativity of the maximal
$2$-norm~\cite{michalakis-2007}. Despite this fact, if $p=2$, our
numerical investigations of the cases $j = \frac{3}{2}$ and $j =
2$ show that the maximal $2$-norm is multiplicative within the
accuracy of calculations. We can make a conjecture that the
maximal $2$-norm is multiplicative for all Landau--Steater
channels $\Phi: \mathcal{B}(\mathcal{H}_{2j+1}) \mapsto
\mathcal{B}(\mathcal{H}_{2j+1})$.

\section{Complementary channel} \label{section-complementary}

According to the Stinespring's dilation theorem, the dual channel
$\Phi^{\dag}: \mathcal{B}(\mathcal{H}) \mapsto
\mathcal{B}(\mathcal{H})$ adopts a representation
(Ref.~\cite{holevo-2012}, Theorem 6.9)
\begin{equation}
\Phi^{\dag} [X] = V^{\dag} (X \otimes I_{\mathcal{K}}) V,
\end{equation}

\noindent where $I_{\mathcal{K}}$ is the identity operator in some
Hilbert space $\mathcal{K}$, $V$: $\mathcal{H} \mapsto \mathcal{H}
\otimes \mathcal{K}$ is an isometry operator, i.e., $V^{\dag}V =
I$.

In the case of the Landau--Streater channel $\Phi:
\mathcal{B}(\mathcal{H}_{2j+1}) \mapsto
\mathcal{B}(\mathcal{H}_{2j+1})$ the dual channel $\Phi^{\dag}$
coincides with $\Phi$ and the corresponding Stinespring's dilation
is achieved with the help of the isometry operator $V$:
$\mathcal{H}_{2j+1} \mapsto \mathcal{H}_{2j+1} \otimes
\mathcal{H}_3$ of the form
\begin{equation}
V = \frac{1}{\sqrt{j(j+1)}} \left(%
\begin{array}{c}
  J_x \\
  J_y \\
  J_z \\
\end{array}%
\right).
\end{equation}

Therefore, ${\rm dim}\mathcal{K} = 3$ and the Landau--Streater
channel $\Phi: \mathcal{B}(\mathcal{H}_{2j+1}) \mapsto
\mathcal{B}(\mathcal{H}_{2j+1})$ can be realized via a
3-dimensional environment. In the Schr\"{o}dinger picture of
system-environment interaction (Ref.~\cite{holevo-2012}, Theorem
6.9), we have
\begin{equation} \label{channel-sch}
\Phi[\rho] = {\rm tr}_{\mathcal{K}} \left[ V \rho V^{\dag} \right]
= {\rm tr}_{\mathcal{K}} \left[ U (\rho \otimes \xi) U^{\dag}
\right],
\end{equation}

\noindent where $\xi \in \mathcal{B}(\mathcal{H}_3)$ is the pure
initial environment state, $U: \mathcal{H}_{2j+1} \otimes
\mathcal{H}_3 \mapsto \mathcal{H}_{2j+1} \otimes \mathcal{H}_3$ is
the unitary evolution operator. The general technique of finding
$U$ is described, e.g., in Ref.~\cite{nielsen-2000}, section
8.2.3.

If one replaces the partial trace over environment ${\rm
tr}_{\mathcal{K}}$ by the partial trace over system ${\rm
tr}_{\mathcal{H}}$ in formula~\eqref{channel-sch}, then one
obtains a so-called complementary channel~\cite{holevo-2005}
$\widetilde{\Phi}: \mathcal{B}(\mathcal{H}) \mapsto
\mathcal{B}(\mathcal{K})$ (also referred to as conjugate
channel~\cite{king-2007}):
\begin{equation} \label{channel-complementary}
\widetilde{\Phi}[\rho] = {\rm tr}_{\mathcal{H}} \left[ V \rho
V^{\dag} \right] = {\rm tr}_{\mathcal{H}} \left[ U (\rho \otimes
\xi) U^{\dag} \right].
\end{equation}

\noindent In the case of the Landau--Streater channel $\Phi:
\mathcal{B}(\mathcal{H}_{2j+1}) \mapsto
\mathcal{B}(\mathcal{H}_{2j+1})$, the complementary channel $\Phi:
\mathcal{B}(\mathcal{H}_{2j+1}) \mapsto
\mathcal{B}(\mathcal{H}_{3})$ maps spin-$j$ states into
3-dimensional environment states (also known as qutrit states). In
what follows, we use the notation $I_d$ to denote the identity
operator $I: \mathcal{H}_d \mapsto \mathcal{H}_d$.

\begin{proposition} \label{prop-complementary-on-mixed}
The channel $\widetilde{\Phi}: \mathcal{B}(\mathcal{H}_{2j+1})
\mapsto \mathcal{B}(\mathcal{H}_{3})$, which is complementary to
the Landau--Streater channel $\Phi:
\mathcal{B}(\mathcal{H}_{2j+1}) \mapsto
\mathcal{B}(\mathcal{H}_{2j+1})$, transforms the maximally mixed
input state $\frac{1}{2j+1} I_{2j+1}$ into the maximally mixed
output state $\frac{1}{3} I_3$.
\end{proposition}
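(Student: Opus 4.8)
The plan is to evaluate $\widetilde{\Phi}$ on the maximally mixed state directly, exploiting the block structure of the Stinespring isometry. First I would write $V = [j(j+1)]^{-1/2} \sum_{\alpha=1}^{3} J_{\alpha} \otimes |\alpha\rangle$, where $\{|\alpha\rangle\}_{\alpha=1}^{3}$ is the standard orthonormal basis of the environment space $\mathcal{K} = \mathcal{H}_3$. Since each $J_{\alpha}$ is Hermitian, this gives
\[
V \rho V^{\dag} = \frac{1}{j(j+1)} \sum_{\alpha,\beta=1}^{3} J_{\alpha} \rho J_{\beta} \otimes |\alpha\rangle\langle\beta|.
\]
Tracing out the system space $\mathcal{H}_{2j+1}$ as prescribed by formula~\eqref{channel-complementary}, I obtain the matrix elements of the output in the environment basis,
\[
\big( \widetilde{\Phi}[\rho] \big)_{\alpha\beta} = \frac{1}{j(j+1)} {\rm tr}\big[ J_{\alpha} \rho J_{\beta} \big] = \frac{1}{j(j+1)} {\rm tr}\big[ J_{\beta} J_{\alpha} \rho \big].
\]

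Substituting $\rho = \frac{1}{2j+1} I_{2j+1}$ reduces the entire claim to computing the matrix ${\rm tr}[J_{\beta} J_{\alpha}]$. The key step is the identity
\[
{\rm tr}[J_{\alpha} J_{\beta}] = \tfrac{1}{3}\, j(j+1)(2j+1)\, \delta_{\alpha\beta}.
\]
The diagonal part follows from $\sum_{\gamma} J_{\gamma}^2 = j(j+1) I_{2j+1}$ together with the rotational equality ${\rm tr}[J_x^2] = {\rm tr}[J_y^2] = {\rm tr}[J_z^2]$, each of which equals $\sum_{m=-j}^{j} m^2 = \tfrac{1}{3} j(j+1)(2j+1)$. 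The off-diagonal part vanishes: in the $J_z$-eigenbasis, $J_x = \tfrac{1}{2}(J_+ + J_-)$ and $J_y = \tfrac{1}{2i}(J_+ - J_-)$ have no diagonal matrix elements, so ${\rm tr}[J_x J_z] = {\rm tr}[J_y J_z] = 0$, while $J_x J_y = \tfrac{1}{4i}(J_+^2 - J_+J_- + J_-J_+ - J_-^2)$ is traceless because $J_{\pm}^2$ shift $m$ by $\pm 2$ and ${\rm tr}[J_+J_-] = {\rm tr}[J_-J_+]$ by cyclicity. A cleaner alternative is to note that ${\rm tr}[J_{\alpha}J_{\beta}]$ is an $SU(2)$-invariant symmetric rank-two tensor and hence must be proportional to $\delta_{\alpha\beta}$, with the constant fixed by the trace of $\sum_{\gamma} J_{\gamma}^2$.

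Combining these, I would conclude
\[
\big( \widetilde{\Phi}[\tfrac{1}{2j+1} I_{2j+1}] \big)_{\alpha\beta} = \frac{1}{j(j+1)(2j+1)} \cdot \tfrac{1}{3}\, j(j+1)(2j+1)\, \delta_{\alpha\beta} = \tfrac{1}{3}\, \delta_{\alpha\beta},
\]
that is $\widetilde{\Phi}[\frac{1}{2j+1} I_{2j+1}] = \frac{1}{3} I_3$. I expect the only genuine obstacle to be establishing the vanishing of the off-diagonal traces; once the symmetry argument above is invoked the remainder is pure bookkeeping. As an independent check, the maximally mixed input is $SU(2)$-invariant and the environment space $\mathcal{H}_3$ carries the irreducible spin-$1$ (vector) representation, so by Schur's lemma the output state must be proportional to $I_3$, in agreement with the direct computation.
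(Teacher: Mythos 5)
Your proposal is correct and takes essentially the same route as the paper: both reduce $\widetilde{\Phi}\big[\tfrac{1}{2j+1}I_{2j+1}\big]$ to the matrix with entries $\frac{1}{j(j+1)(2j+1)}{\rm tr}[J_{\beta}J_{\alpha}]$ and conclude via the identity ${\rm tr}[J_{\alpha}J_{\beta}]=\tfrac{1}{3}j(j+1)(2j+1)\delta_{\alpha\beta}$, which the paper cites from Varshalovich et al.\ while you verify it directly. The only cosmetic difference is that you take the partial trace of $V\rho V^{\dag}$ immediately, whereas the paper first extracts the Kraus operators $\widetilde{V}_i$ of $\widetilde{\Phi}$ from the relation $\langle \alpha_{\mathcal{K}}|\widetilde{V}_i=\langle i_{\mathcal{H}}|V_{\alpha}$ and sums $\widetilde{V}_i\widetilde{V}_i^{\dag}$; your closing Schur-lemma consistency check is a sound bonus not present in the paper.
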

\begin{proof}
Denote by $V_{\alpha} = [j(j+1)]^{-1/2} J_{\alpha}$, $\alpha =
1,2,3$, the Kraus operators of $\Phi$ and by $\widetilde{V}_i$,
$i=1,\ldots,2j+1$, the Kraus operators of $\widetilde{\Phi}$.
These Kraus operators are mutually related with each other by
formula (Ref.~\cite{holevo-2005}, formula (12))
\begin{equation} \label{kraus-kraus}
\langle \alpha_{\mathcal{K}} | \widetilde{V}_i = \langle
i_{\mathcal{H}}| V_{\alpha},
\end{equation}

\noindent where $\{ i_{\mathcal{H}} \}_{i=1}^{2j+1}$ is the
orthonormal basis in $\mathcal{H}$ (input) and $\{
\alpha_{\mathcal{K}} \}_{\alpha = 1}^{3}$ is the orthonormal basis
in $\mathcal{K}$ (output). Multiplying \eqref{kraus-kraus} from
the left by $| \alpha_{\mathcal{K}} \rangle$ and summing over
$\alpha$, we get
\begin{equation} \label{kraus-through-kraus}
\widetilde{V}_i = \sum_{\alpha = 1}^{3} | \alpha_{\mathcal{K}}
\rangle \langle \alpha_{\mathcal{K}} | \widetilde{V}_i =
\sum_{\alpha = 1}^{3} | \alpha_{\mathcal{K}} \rangle \langle
i_{\mathcal{H}} | V_{\alpha} = \frac{1}{\sqrt{j(j+1)}}
\sum_{\alpha = 1}^{3} | \alpha_{\mathcal{K}} \rangle \langle
i_{\mathcal{H}} | J_{\alpha}.
\end{equation}

Action of the complementary channel $\widetilde{\Phi}$ on the
maximally mixed state $(2j+1)^{-1} I_{2j+1}$ reads
\begin{eqnarray}
\frac{1}{2j+1} \widetilde{\Phi}[I_{2j+1}] &=& \frac{1}{2j+1}
\sum_{i=1}^{2j+1} \widetilde{V}_i \widetilde{V}_i^{\dag}
\nonumber\\
&=& \frac{1}{j(j+1)(2j+1)} \sum_{i=1}^{2j+1}
\sum_{\alpha,\beta=1}^3 | \alpha_{\mathcal{K}} \rangle \langle
i_{\mathcal{H}} | J_{\alpha} J_{\beta}^{\dag}
|i_{\mathcal{H}}\rangle \langle \beta_{\mathcal{K}} | \nonumber\\
&=& \frac{1}{j(j+1)(2j+1)} \sum_{\alpha,\beta=1}^3  {\rm tr}
\left[ J_{\alpha} J_{\beta}^{\dag} \right] | \alpha_{\mathcal{K}}
\rangle \langle \beta_{\mathcal{K}} |.
\end{eqnarray}

\noindent Since $SU(2)$ generators $J_{\alpha}$, $\alpha=1,2,3$,
are Hermitian and satisfy the relation ${\rm tr} \left[ J_{\alpha}
J_{\beta} \right] = \frac{1}{3} j(j+1)(2j+1) \delta_{\alpha\beta}$
(Ref.~\cite{Varshalovich}, section 2.3.4, formula (11)), then
$\widetilde{\Phi}[\frac{1}{2j+1} I_{2j+1}] = \frac{1}{3} I_3$.
\end{proof}

A channel $\Phi$ is called degradable~\cite{Cubitt} if there
exists a channel $T$ such that $\widetilde{\Phi} = T \circ \Phi$.
Conversely, a channel $\Phi$ is called
antidegradable~\cite{Cubitt} if there exists a channel $T'$ such
that $\Phi = T' \circ \widetilde{\Phi}$. The structure of
degradable and antidegradable quantum channels is studied in
Ref.~\cite{Cubitt}. Further, we explore degradability and
antidegradability of the Landau--Streater channel for various
values of $j$.

\subsection{The case $j=1/2$} \label{subsection-complementary-1/2}

If $j=\frac{1}{2}$, then the Landau--Streater channel $\Phi$ is
antidegradable but not degradable. In fact, in this case $\Phi$ is
a qubit depolarization channel with depolarization parameter
$-\frac{1}{3}$, so it is entanglement breaking~\cite{ruskai-2003}
and, consequently, antidegradable. The Kraus operators of the
complementary channel $\widetilde{\Phi}$ are calculated via
formula~\eqref{kraus-through-kraus}, which results in the
following form of $\widetilde{\Phi}$:
\begin{equation}
\widetilde{\Phi}[X] = \frac{1}{3} \left(\left(
\begin{array}{cc}
 0 & 1 \\
 0 & -i \\
 1 & 0 \\
\end{array}
\right) X \left(
\begin{array}{ccc}
 0 & 0 & 1 \\
 1 & i & 0 \\
\end{array}
\right)+\left(
\begin{array}{cc}
 1 & 0 \\
 i & 0 \\
 0 & -1 \\
\end{array}
\right) X \left(
\begin{array}{ccc}
 1 & -i & 0 \\
 0 & 0 & -1 \\
\end{array}
\right) \right).
\end{equation}

\noindent The factoring map $T = \widetilde{\Phi} \circ \Phi^{-1}$
is well defined, and its normalized Choi matrix~\cite{Choi}
$\Omega_{T} = T \otimes {\rm Id}_{2} [ | \psi_+ \rangle \langle
\psi_+ | ]$, $| \psi_+ \rangle = \frac{1}{\sqrt{2}} \sum_{i=1}^2
|i\rangle \otimes |i\rangle$, reads
\begin{equation}
\Omega_{T} = \frac{1}{6} \left(
\begin{array}{cccccc}
 1 & 0 & 3 i & 0 & 0 & 3 \\
 0 & 1 & 0 & -3 i & -3 & 0 \\
 -3 i & 0 & 1 & 0 & 0 & 3 i \\
 0 & 3 i & 0 & 1 & 3 i & 0 \\
 0 & -3 & 0 & -3 i & 1 & 0 \\
 3 & 0 & -3 i & 0 & 0 & 1 \\
\end{array}
\right).
\end{equation}

\noindent Since $\Omega_{T}$ has negative eigenvalues, $T$ is not
completely positive, and $\Phi$ is not degradable.

\subsection{The case $j=1$} \label{subsection-complementary-1}

If $j=1$, then the Landau--Streater channel $\Phi$ is both
degradable and antidegradable. This follows from the fact that
$\Phi: \mathcal{B}(\mathcal{H}_{3}) \mapsto
\mathcal{B}(\mathcal{H}_{3})$ is unitarily equivalent to the
Werner--Holevo channel, which is both degradable and
antidegradable (Ref.~\cite{Cubitt}, section 2.2). For the sake of
completeness, we list the Kraus operators of the complementary
channel in this case:
\begin{equation}
\widetilde{V}_1 =
\begin{pmatrix}
0& \frac{1}{2} &0 \\
0& -\frac{ i }{2} &0 \\
\frac{1}{\sqrt{2}} &0&0
\end{pmatrix}, \quad
\widetilde{V}_2 = \begin{pmatrix}
\frac{1}{2} & 0 & \frac{1}{2} \\
\frac{ i }{2} & 0 & -\frac{ i }{2} \\
0&0&0
\end{pmatrix}, \quad
\widetilde{V}_3 = \begin{pmatrix}
0& \frac{1}{2} & 0 \\
0 & \frac{ i }{2} & 0 \\
0 & 0 & -\frac{1}{\sqrt{2}}
\end{pmatrix}.
\end{equation}

\subsection{The case $j \geq 3/2$} \label{subsection-complementary-3/2}

\begin{proposition} \label{prop-not-antidegradable}
The Landau--Streater channel $\Phi:
\mathcal{B}(\mathcal{H}_{2j+1}) \mapsto
\mathcal{B}(\mathcal{H}_{2j+1})$ is not antidegradable if $j
\geqslant \frac{3}{2}$.
\end{proposition}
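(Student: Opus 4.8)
The plan is to rule out antidegradability by exhibiting a single input state whose coherent information under $\Phi$ is strictly positive, since an antidegradable channel can never support positive coherent information. Recall that the coherent information of $\Phi$ at a state $\rho$ is $I_c(\rho,\Phi) = S(\Phi[\rho]) - S(\widetilde{\Phi}[\rho])$, where $S(\sigma) = -{\rm tr}[\sigma \log \sigma]$ is the von Neumann entropy and $\widetilde{\Phi}$ is the complementary channel constructed above. The first step is to record the standard fact that antidegradability forces $I_c(\rho,\Phi) \leq 0$ for every $\rho$.

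To establish this fact I would purify the input as $|\psi\rangle_{RBE}$ using the Stinespring isometry $V$ of section~\ref{section-complementary}, so that tracing out the environment $E$ yields $\Phi[\rho]$ on the system $B$, while tracing out $B$ yields $\widetilde{\Phi}[\rho]$ on $E$. Purity of the global state gives $S(RB) = S(E)$ and $S(RE) = S(B)$, whence a short rearrangement shows $I(R:B) - I(R:E) = 2 I_c(\rho,\Phi)$. If $\Phi = T' \circ \widetilde{\Phi}$ for some channel $T'$, then the system $B$ is obtained from $E$ by applying $T'$, and the data processing inequality gives $I(R:B) \leq I(R:E)$, that is, $I_c(\rho,\Phi) \leq 0$ for all $\rho$.

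The decisive step is to evaluate $I_c$ at the maximally mixed input $\rho = \frac{1}{2j+1} I_{2j+1}$. By unitality of $\Phi$ one has $\Phi[\rho] = \rho$, so $S(\Phi[\rho]) = \log(2j+1)$, while Proposition~\ref{prop-complementary-on-mixed} gives $\widetilde{\Phi}[\rho] = \frac{1}{3} I_3$ and hence $S(\widetilde{\Phi}[\rho]) = \log 3$. Therefore $I_c(\rho,\Phi) = \log(2j+1) - \log 3$, which is strictly positive precisely when $2j+1 > 3$, i.e. when $j \geq \frac{3}{2}$. This contradicts the inequality $I_c \leq 0$ that antidegradability would impose, and the contradiction proves the proposition.

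The only genuinely nontrivial ingredient is the lemma that antidegradability implies non-positive coherent information; the rest is a two-line computation resting on unitality and Proposition~\ref{prop-complementary-on-mixed}. I expect no real obstacle, since that lemma is an immediate consequence of the data processing inequality in the purified picture, and indeed the very same positive value $\log(2j+1) - \log 3$ will reappear in section~\ref{section-capacities} as a lower bound certifying that the quantum capacity is strictly positive for $d = 2j+1 \geq 4$.
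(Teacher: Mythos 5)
Your proof is correct, but it follows a genuinely different route from the paper. You rule out antidegradability information-theoretically: you purify the input via the Stinespring isometry, derive the identity $I(R\!:\!B) - I(R\!:\!E) = 2 I_c(\rho,\Phi)$, invoke data processing to show any antidegradable channel has $I_c(\rho,\Phi) \leq 0$ for all $\rho$, and then exhibit $I_c\bigl(\tfrac{1}{2j+1}I_{2j+1},\Phi\bigr) = \log(2j+1) - \log 3 > 0$ for $j \geq \tfrac{3}{2}$, using unitality and Proposition~\ref{prop-complementary-on-mixed}; all of these steps are sound, and your entropic lemma is the standard one. The paper instead argues structurally: since $\widetilde{\widetilde{\Phi}}$ is unitarily equivalent to $\Phi$, antidegradability of $\Phi$ would make $\widetilde{\Phi}: \mathcal{B}(\mathcal{H}_{2j+1}) \mapsto \mathcal{B}(\mathcal{H}_{3})$ degradable, and Theorem~10 of Ref.~\cite{Cubitt} forces a degradable channel with three-dimensional output to have Choi rank at most $3$; the paper then computes ${\rm rank}\,\Omega_{\widetilde{\Phi}} = 2j+1 \geq 4$ via the factorization $j(j+1)(2j+1)\,\Omega_{\widetilde{\Phi}} = C^{\dag}C$ with $C = \bigl(\begin{array}{ccc} J_x & -J_y & J_z \end{array}\bigr)$ and $CC^{\dag} = j(j+1) I_{2j+1}$. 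Your approach is more self-contained (only data processing plus Proposition~\ref{prop-complementary-on-mixed}) and yields as a by-product exactly the capacity bound $Q(\Phi) \geq \log(2j+1) - \log 3$ that the paper later states in Proposition~\ref{prop-q-capacity} — note there is no circularity, since that proposition does not feed into your argument — whereas the paper's algebraic route avoids entropic machinery entirely and delivers the exact Choi rank of $\widetilde{\Phi}$, a structural fact of independent interest. One small point worth making explicit if you write this up: coherent information is independent of the choice of complementary channel, since any two are related by an isometry on the environment, so evaluating $S(\widetilde{\Phi}[\rho])$ with the particular $\widetilde{\Phi}$ of Proposition~\ref{prop-complementary-on-mixed} is legitimate.
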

\begin{proof}
Since doubly complementary channel $\widetilde{\widetilde{\Phi}}$
is unitarily equivalent to $\Phi$ (Ref.~\cite{holevo-2012},
Exercise 6.29), it is enough to demonstrate that
$\widetilde{\Phi}: \mathcal{B}(\mathcal{H}_{2j+1}) \mapsto
\mathcal{B}(\mathcal{H}_{3})$ is not degradable. The output space
for the complementary channel is 3-dimensional, so we use
Theorem~10 in Ref.~\cite{Cubitt}, which states that if
$\widetilde{\Phi}: \mathcal{B}(\mathcal{H}_{d}) \mapsto
\mathcal{B}(\mathcal{H}_{3})$ is degradable, then the Choi rank of
$\widetilde{\Phi}$ is at most 3. Choi rank is defined as the rank
of the Choi matrix~\cite{Choi} $\Omega_{\widetilde{\Phi}} =
\widetilde{\Phi} \otimes {\rm Id}_{d} [ | \psi_+ \rangle \langle
\psi_+ | ]$, with $| \psi_+ \rangle = \frac{1}{\sqrt{d}}
\sum_{i=1}^d |i\rangle \otimes |i\rangle$ being the maximally
entangled state. The Choi matrix of the complementary channel
$\widetilde{\Phi}: \mathcal{B}(\mathcal{H}_{2j+1}) \mapsto
\mathcal{B}(\mathcal{H}_{3})$ reads
\begin{eqnarray}
\Omega_{\widetilde{\Phi}} &=& \frac{1}{2j+1} \sum_{m,m'=-j}^j
\widetilde{\Phi} [|j,m\rangle \langle j,m'|] \otimes |j,m\rangle
\langle j,m'| \nonumber\\
&=& \frac{1}{j(j+1)(2j+1)} \sum_{\alpha,\beta=1}^3 \,
\sum_{m,m',i=-j}^j  \langle j,m' | J_{\beta} | i \rangle \langle i
| J_{\alpha} | j,m \rangle \, | \alpha \rangle \langle \beta |
\otimes | j,m \rangle \langle j,m' | \nonumber\\
&=& \frac{1}{j(j+1)(2j+1)} \sum_{\alpha,\beta=1}^3 | \alpha
\rangle \langle \beta | \otimes J_{\alpha}^{\top}
J_{\beta}^{\top} \nonumber\\
&=& \frac{1}{j(j+1)(2j+1)} \left(%
\begin{array}{ccc}
  J_x^2 & -J_x J_y & J_x J_z \\
  - J_y J_x & J_y^2 & -J_y J_z \\
  J_z J_x & - J_z J_y & J_z^2 \\
\end{array}%
\right) \nonumber\\
&=& \frac{1}{j(j+1)(2j+1)} \left(%
\begin{array}{c}
  J_x \\
  -J_y \\
  J_z \\
\end{array}%
\right)
\left(%
\begin{array}{ccc}
  J_x & -J_y & J_z \\
\end{array}%
\right), \label{sandwich}
\end{eqnarray}

\noindent where transposition is performed in the basis
$\{|j,m\rangle\}_{m=-j}^j$ and, therefore, $J_x^{\top} = J_x$,
$J_y^{\top} = - J_y$, $J_z^{\top} = J_z$. Denote $C = \left(%
\begin{array}{ccc}
  J_x & -J_y & J_z \\
\end{array}%
\right)$, then $j(j+1)(2j+1) \Omega_{\widetilde{\Phi}} =
C^{\dag}C$ and ${\rm rank}\, \Omega_{\widetilde{\Phi}} \leq {\rm
rank}\,C \leq 2j+1$. On the other hand, $CC^{\dag} =
j(j+1)I_{2j+1}$ and $j(j+1)(2j+1) C \Omega_{\widetilde{\Phi}}
C^{\dag} = C C^{\dag} C C^{\dag} = j^2(j+1)^2 I_{2j+1}$. This
implies that $2j+1 = {\rm rank}\, C \Omega_{\widetilde{\Phi}}
C^{\dag} \leq {\rm rank}\, \Omega_{\widetilde{\Phi}}$. Combining
both inequalities, we get ${\rm rank}\, \Omega_{\widetilde{\Phi}}
= 2j+1$. Thus, ${\rm rank}\, \Omega_{\widetilde{\Phi}} \geq 4$ if
$j \geqslant \frac{3}{2}$. Therefore, $\widetilde{\Phi}$ is not
degradable~\cite{Cubitt} and $\Phi$ is not antidegradable.
\end{proof}

\begin{proposition}
The Landau--Streater channel $\Phi:
\mathcal{B}(\mathcal{H}_{2j+1}) \mapsto
\mathcal{B}(\mathcal{H}_{2j+1})$ is not degradable if $j \geqslant
\frac{3}{2}$.
\end{proposition}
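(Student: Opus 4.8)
The plan is to reduce degradability of $\Phi$ to the complete positivity of a single, explicitly determined candidate degrading map, and then to disprove that complete positivity using the $SU(2)$-covariance of Proposition~\ref{prop-SU2-cov}. By definition, $\Phi$ is degradable exactly when there is a channel $T:\mathcal{B}(\mathcal{H}_{2j+1})\mapsto\mathcal{B}(\mathcal{H}_3)$ with $\widetilde{\Phi}=T\circ\Phi$. Both $\Phi$ and $\widetilde{\Phi}$ are $SU(2)$-covariant: the complementary channel inherits covariance from the Stinespring isometry $V=[j(j+1)]^{-1/2}(J_x,J_y,J_z)^{\top}$, the three-dimensional environment carrying the spin-$1$ (vector) representation $Q_g$ induced by \eqref{rotation}. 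Hence I would first note that if any degrading channel $T$ exists, its $SU(2)$-average $\bar{T}[Y]=\int dg\,Q_g^{\dag}\,T[U_g Y U_g^{\dag}]\,Q_g$ is again a degrading channel and is covariant. It therefore suffices to decide whether a \emph{covariant} degrader can be completely positive.

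Second, I would pass to the tensor-operator basis $T_{L0}^{(j)}$ of Proposition~\ref{prop-spectrum-LS}. A covariant map from spin-$j$ operators to spin-$1$ operators cannot mix angular-momentum sectors, so it acts as a scalar $t_L$ on each component $L$ and \emph{must annihilate} every component with $L\geq 3$, since operators on $\mathcal{H}_3$ support only $L\in\{0,1,2\}$. Combined with $\Phi[T_{L0}^{(j)}]=\lambda_L T_{L0}^{(j)}$, the identity $T\circ\Phi=\widetilde{\Phi}$ then fixes $t_L$ uniquely for $L=0,1,2$ as the ratio of the relevant component of $\widetilde{\Phi}[T_{L0}^{(j)}]$ to $\lambda_L$. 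Because $\lambda_0,\lambda_1,\lambda_2$ are all nonzero, this determination is unambiguous even when $\Phi$ itself is not invertible (which does occur, e.g.\ $\lambda_3=0$ at $j=2$). Thus the covariant candidate $T$ is completely determined, and degradability is equivalent to its complete positivity.

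Third, I would test complete positivity via the Choi matrix $\Omega_T$. Covariance forces $\Omega_T$ to commute with $Q_g\otimes\overline{U_g}$ on $\mathcal{H}_3\otimes\overline{\mathcal{H}_{2j+1}}$, whose irreducible content for $j\geq\frac{3}{2}$ is spin $s\in\{j-1,j,j+1\}$; hence $\Omega_T=\sum_{s}\omega_s\,\Pi_s$ collapses to three scalar eigenvalues, and $T$ is completely positive if and only if $\omega_{j-1},\omega_j,\omega_{j+1}\geq 0$. Writing each $\omega_s$ as a linear combination of $t_0,t_1,t_2$ (the passage between these two natural parametrizations of the covariant commutant is governed by Wigner $6j$-symbols), I would compute the three numbers explicitly and exhibit one that is strictly negative for every $j\geq\frac{3}{2}$, while all three are nonnegative at $j=1$, in agreement with the degradability found in Section~\ref{subsection-complementary-1}.

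The main obstacle is this final step: assembling the explicit action of $\widetilde{\Phi}$ on the operators $T_{L0}^{(j)}$ and performing the $6j$-symbol bookkeeping that maps $(t_0,t_1,t_2)$ to $(\omega_{j-1},\omega_j,\omega_{j+1})$, then proving that the negative sign persists uniformly in $j$. A pragmatic route is to settle the smallest case $j=\frac{3}{2}$ by a direct finite computation of $\Omega_T$ and its spectrum, and then to treat general $j$ through the covariant reduction above; the structural source of the obstruction is precisely the forced annihilation of the $L\geq 3$ sectors, which is vacuous only for $j\leq 1$.
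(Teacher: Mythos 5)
Your reduction is sound and in one respect sharper than the paper's own argument: twirling a hypothetical degrader over $SU(2)$ to obtain a covariant one, noting that a covariant map into $\mathcal{B}(\mathcal{H}_3)$ must annihilate the sectors $L\geqslant 3$ and act as scalars $t_L$ on $L=0,1,2$, and fixing $t_L$ from $T\circ\Phi=\widetilde{\Phi}$ is all correct. Your observation that this pins down the candidate even when $\Phi$ is singular (e.g., $\lambda_3=0$ at $j=2$) is a genuine point: the paper works with $T=\widetilde{\Phi}\circ\Phi^{-1}$ and sidesteps invertibility only implicitly, by never actually inverting $\Phi$ outside the span of $I$ and $J_z^2$, which lies in the $L\in\{0,2\}$ sectors where $\lambda_L\neq 0$. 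The block form $\Omega_T=\omega_{j-1}\Pi_{j-1}+\omega_j\Pi_j+\omega_{j+1}\Pi_{j+1}$ is also right, since $Q_g\otimes\overline{U_g}$ is multiplicity free for $j\geqslant 1$.

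The genuine gap is that your proof stops exactly where the proposition lives. ``I would compute the three numbers explicitly and exhibit one that is strictly negative for every $j\geqslant\frac{3}{2}$'' \emph{is} the content of the statement, and nothing in your argument makes that negativity evident or controls it uniformly in $j$; indeed at $j=1$ all three $\omega_s$ are nonnegative (the channel is degradable there), so the sign can only come from an actual computation of $\widetilde{\Phi}$ on the tensor-operator basis followed by the $6j$-recoupling you defer. Your fallback---settle $j=\frac{3}{2}$ by direct computation ``and then treat general $j$ through the covariant reduction''---does not work as stated: the covariant reduction is a change of coordinates, not an induction, and each $j$ still requires its own sign verification. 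Note how the paper avoids all of this machinery: it evaluates a \emph{single diagonal element} of the Choi matrix of $T=\widetilde{\Phi}\circ\Phi^{-1}$, using self-duality of $\Phi$, i.e.\ ${\rm tr}\left(X\,\Phi^{-1}[Y]\right)={\rm tr}\left(\Phi^{-1}[X]\,Y\right)$, to move the inverse onto $J_z^2$, and the closed form $\Phi[J_z^2]=\frac{j(j+1)-3}{j(j+1)}J_z^2+I$ obtained from \eqref{Phi-on-jm}; the resulting entry $\frac{m^2-1}{(2j+1)(j^2+j-3)}$ is manifestly negative at $m=\frac{1}{2}$ for half-integer $j\geqslant\frac{3}{2}$ and at $m=0$ for integer $j\geqslant 2$. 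A single diagonal entry suffices---no spectral decomposition of $\Omega_T$, no $6j$ symbols. Since your unique covariant candidate coincides with the paper's map on the operators involved, grafting that one matrix-element computation onto your framework would close your argument; as written, the decisive inequality remains a promissory note.
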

\begin{proof}
To prove that the factoring map $T = \widetilde{\Phi} \circ
\Phi^{-1}$ is not completely positive, it suffices to verify
negativity of some diagonal element of the Choi matrix $\Omega_T =
T \otimes {\rm Id}_{2j+1} [ | \psi_+ \rangle \langle \psi_+ | ]$,
$| \psi_+ \rangle = \frac{1}{\sqrt{2j+1}} \sum_{m'=-j}^{j}
|j,m'\rangle \otimes |j,m'\rangle$. Consider the diagonal element
\begin{eqnarray}
&& \langle \alpha | \otimes \langle j,m | \, \Omega_T \, | \alpha
\rangle \otimes | j,m \rangle = \frac{1}{2j+1} \langle \alpha |
\widetilde{\Phi} \circ \Phi^{-1} [| j,m \rangle \langle j,m |] |
\alpha \rangle
\nonumber\\
&& = \frac{1}{2j+1} \sum_{i=1}^{2j+1} \langle \alpha |
\widetilde{V}_i \, \Phi^{-1} [| j,m \rangle \langle j,m |] \,
\widetilde{V}_i^{\dag} | \alpha \rangle = \frac{1}{2j+1}
\sum_{i=1}^{2j+1} \langle i | V_{\alpha} \, \Phi^{-1} [| j,m
\rangle \langle j,m |] \, V_{\alpha}^{\dag} | i \rangle \nonumber\\
&& = \frac{1}{j(j+1)(2j+1)} {\rm tr} \left( J_{\alpha} \,
\Phi^{-1} [| j,m \rangle \langle j,m |] \, J_{\alpha} \right) =
\frac{1}{j(j+1)(2j+1)} {\rm tr} \left(J_{\alpha}^2 \, \Phi^{-1} [|
j,m
\rangle \langle j,m |] \right) \nonumber\\
&& = \frac{1}{j(j+1)(2j+1)} {\rm tr} \left( | j,m \rangle \langle
j,m | \, \Phi^{-1} [ J_{\alpha}^2 ] \right)
\label{Phi-selfdual-property} = \frac{1}{j(j+1)(2j+1)} \langle j,m
| \, \Phi^{-1} [ J_{\alpha}^2 ] | j,m \rangle.
\end{eqnarray}

\noindent In derivation of formula \eqref{Phi-selfdual-property}
we have taken into account that $\Phi$ is a self-dual map and
${\rm tr}(X \Phi^{-1}[Y]) = {\rm tr}
\left(\Phi\left[\Phi^{-1}[X]\right] \Phi^{-1}[Y] \right) = {\rm
tr} \left( \Phi^{-1}[X] \Phi^{\dag}\left[ \Phi^{-1}[Y] \right]
\right) = {\rm tr}(\Phi^{-1}[X] Y)$.

Let us fix $\alpha = 3$ and calculate the operator $\Phi[J_z^2]$
by using formula~\eqref{Phi-on-jm} and the spectral decomposition
$J_z = \sum_{m'=-j}^{j} m' | j,m' \rangle \langle j,m' |$:
\begin{eqnarray}
\Phi[J_z^2] &=& \sum_{m'=-j}^{j} (m')^2 \Phi[| j,m' \rangle
\langle j,m' |] = \frac{1}{j(j+1)} \sum_{m'=-j}^{j} (m')^2  \bigg[
(m')^2 |j,m'\rangle\langle j,m'|  \nonumber\\
&& \qquad + \frac{1}{2}\left( j(j+1)-m'(m'-1) \right)
|j,m'-1\rangle\langle j,m'-1| \nonumber\\
&& \qquad + \frac{1}{2}\left( j(j+1)-m'(m'+1)
\right) |j,m'+1\rangle\langle j,m'+1| \bigg] \nonumber\\
&=& \frac{1}{j(j+1)} \sum_{m'=-j}^{j} \bigg[
\big(j(j+1)-3\big)(m')^2 + j(j+1) \bigg] |j,m'\rangle\langle j,m'| \nonumber\\
&=& \frac{j(j+1)-3}{j(j+1)} J_z^2 + I_{2j+1}.
\end{eqnarray}

\noindent This implies that $\frac{j(j+1)}{j(j+1)-3} \Phi[J_z^2-I]
= J_z^2$ and $\Phi^{-1}[J_z^2] = \frac{j(j+1)}{j(j+1)-3}
(J_z^2-I)$. Substituting the obtained result into
formula~\eqref{Phi-selfdual-property} yields
\begin{eqnarray}
\langle 3 | \otimes \langle j,m | \, \Omega_T \, | 3 \rangle
\otimes | j,m \rangle &=& \frac{1}{j(j+1)(2j+1)} \langle j,m | \,
\Phi^{-1} [ J_z^2 ] | j,m \rangle \nonumber\\
&=& \frac{m^2-1}{(2j+1)(j^2+j-3)}.
\end{eqnarray}

\noindent If $j$ is a half-integer and $j\geq \frac{3}{2}$, then
$\langle 3 | \otimes \langle j,\frac{1}{2} | \, \Omega_T \, | 3
\rangle \otimes | j,\frac{1}{2} \rangle <0$. If $j$ is an integer
and $j\geq 2$, then $\langle 3 | \otimes \langle j,0 | \, \Omega_T
\, | 3 \rangle \otimes | j,0 \rangle <0$. Therefore, the Choi
matrix $\Omega_T$ is not positive semidefinite and $T$ is not a
channel.
\end{proof}

\section{Capacities} \label{section-capacities}

\subsection{Classical capacity}

Classical capacity~\cite{schumacher-1997,holevo-chi-1998} $C$ of a
quantum channel $\Phi$ is known to be equal to the regularized
$\chi$-capacity $C_{\chi}$, i.e., $C(\Phi) = \lim_{n \rightarrow
\infty} \frac{1}{n} C_{\chi}(\Phi^{\otimes n})$, where
$\chi$-capacity is defined by the expression
\begin{equation}
C_{\chi}(\Phi) = \sup_{\{ p_i, \rho_i \}} \left[ S \left( \sum_i
p_i \Phi[\rho_i] \right) - \sum_i p_i S \left( \Phi[\rho_i]
\right) \right]
\end{equation}

\noindent and $\{ p_i, \rho_i \}$ is an ensemble of quantum
states, in which the state $\rho_i$ is presented with the
probability $p_i$.

Further, we find $C_{\chi}(\Phi)$ for the Landau-Streater
channel~\eqref{LS-map}.

\begin{proposition} \label{prop-chi-capacity}
$\chi$-capacity $C_{\chi}(\Phi)$ of the Landau--Streater channel
$\Phi:\mathcal{H}_{2j+1} \mapsto \mathcal{H}_{2j+1}$ equals
\begin{equation} \label{C-equality}
C_{\chi}(\Phi) = \log{\frac{2j+1}{j+1}+\frac{j}{j+1}\log{j}}.
\end{equation}

\noindent If $j=1/2$, then $C(\Phi) = C_{\chi}(\Phi) = \frac{5}{3}
- \log 3$.
\end{proposition}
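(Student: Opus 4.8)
The plan is to obtain \eqref{C-equality} by squeezing $C_\chi(\Phi)$ between a universal upper bound, valid for any channel, and a matching lower bound supplied by the $SU(2)$ covariance established in Proposition~\ref{prop-SU2-cov}. The key input is the minimal output entropy already computed in Proposition~\ref{prop-p-norm}.

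First I would establish the upper bound. For an arbitrary ensemble $\{p_i,\rho_i\}$, the averaged output state $\sum_i p_i\Phi[\rho_i]$ lives in the $(2j+1)$-dimensional space $\mathcal{H}_{2j+1}$, so its von Neumann entropy cannot exceed $\log(2j+1)$; meanwhile each individual term obeys $S(\Phi[\rho_i])\ge S_{\min}(\Phi)$ by definition. Subtracting gives
$$C_\chi(\Phi)\le \log(2j+1)-S_{\min}(\Phi).$$
Inserting $S_{\min}(\Phi)=\log(j+1)-\frac{j}{j+1}\log j$ from Proposition~\ref{prop-p-norm} and using $\log(2j+1)-\log(j+1)=\log\frac{2j+1}{j+1}$ reproduces the right-hand side of \eqref{C-equality} as an upper bound.

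Next I would show the bound is attained, which is where the covariance does the work. Take the state $|j,j\rangle$ at which $S_{\min}(\Phi)$ is achieved and form the ensemble $\rho_g=U_g|j,j\rangle\langle j,j|U_g^\dag$, with $U_g$ ranging over the irreducible $SU(2)$ representation equipped with the normalized Haar measure. By Proposition~\ref{prop-SU2-cov}, $\Phi[\rho_g]=U_g\Phi[|j,j\rangle\langle j,j|]U_g^\dag$, so each output state is unitarily equivalent to $\Phi[|j,j\rangle\langle j,j|]$ and hence has entropy exactly $S_{\min}(\Phi)$; the weighted average of the output entropies is therefore $S_{\min}(\Phi)$. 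The averaged input is $\int U_g|j,j\rangle\langle j,j|U_g^\dag\,dg$, which by Schur's lemma (the representation is irreducible) equals the maximally mixed state $\frac{1}{2j+1}I_{2j+1}$; unitality then gives $\Phi[\frac{1}{2j+1}I_{2j+1}]=\frac{1}{2j+1}I_{2j+1}$ with entropy $\log(2j+1)$. This ensemble thus yields Holevo quantity $\log(2j+1)-S_{\min}(\Phi)$, matching the upper bound and proving \eqref{C-equality}.

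Finally, for $j=\frac{1}{2}$ the channel is the qubit depolarizing channel (parameter $-\frac{1}{3}$, cf.\ Section~\ref{subsection-complementary-1/2}), for which the Holevo capacity is additive, so $C(\Phi)=C_\chi(\Phi)$; evaluating \eqref{C-equality} gives $\log\frac{4}{3}+\frac{1}{3}\log\frac{1}{2}=2-\log 3-\frac{1}{3}=\frac{5}{3}-\log 3$. I expect the main obstacle to be the achievability step: one must check that the continuous Haar-averaged ensemble (or a finite ensemble approximating it) is admissible in the supremum and that Schur's lemma genuinely collapses the averaged input to $\frac{1}{2j+1}I_{2j+1}$. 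The equality $C=C_\chi$ at $j=\frac{1}{2}$ rests on the externally cited additivity theorem for depolarizing channels rather than on a self-contained argument.
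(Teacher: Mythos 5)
Your proposal is correct and follows essentially the same route as the paper: both reduce $C_{\chi}(\Phi)$ to $\log(2j+1)-S_{\min}(\Phi)$ using irreducible $SU(2)$ covariance and then substitute $S_{\min}(\Phi)$ from Proposition~\ref{prop-p-norm}. The only differences are presentational --- the paper simply cites Holevo's formula for irreducibly covariant channels where you re-derive it via the upper bound plus Haar-orbit achievability with Schur's lemma (which is precisely the argument behind the cited result), and for $j=\frac{1}{2}$ the paper invokes King's additivity for unital qubit channels rather than for depolarizing channels, both of which apply here.
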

\begin{proof}
We exploit the fact that the Landau--Streater channel is $SU(2)$
covariant. Since the representation $U_g$ of $SU(2)$ group is
irreducible, it follows from Refs.~\cite{holevo-arxiv-2002} that
\begin{equation} \label{chi-capacity-irrep}
C_{\chi}(\Phi) = S \left( \Phi \left[\frac{1}{2j+1} I_{2j+1}
\right] \right) - \min_{\psi} S \left(\Phi[ |\psi\rangle
\langle\psi| ] \right) = \log (2j+1) - \min_{\psi} S \left(\Phi[
|\psi\rangle \langle\psi| ] \right).
\end{equation}

\noindent The minimal output entropy of $\Phi$ is given by
proposition~\ref{prop-p-norm}. Substituting~\eqref{p-norm-Phi}
into~\eqref{chi-capacity-irrep}, we get
formula~\eqref{C-equality}.

In the case $j=\frac{1}{2}$, $\chi$-capacity is known to be
additive~\cite{king-2002}, so $C(\Phi) = C_{\chi}(\Phi) =
\frac{5}{3} \log 2 - \log 3$.
\end{proof}

\subsection{Entanglement assisted capacity}

The entanglement assisted capacity $C_{\rm ea}$ quantifies the
maximal communication rate of classical information transmission
through a quantum channel $\Phi$ with the help of preshared
entanglement between the sender and receiver~\cite{bennett-1999}.
The fundamental result in quantification of the entanglement
assisted capacity is the following
formula~\cite{bennett-2002,holevo-2002}
\begin{equation} \label{c-ea-max}
C_{\rm ea}(\Phi) = \max_{\rho} \left\{ S(\rho) + S(\Phi[\rho]) -
S(\rho,\Phi) \right\},
\end{equation}

\noindent where $S(\rho,\Phi)$ is the exchange
entropy~\cite{barnum-1998}, $S(\rho,\Phi) =
S(\widetilde{\Phi}[\rho])$, with $\widetilde{\Phi}$ being the
complementary channel with respect to $\Phi$. We find the explicit
form of the entanglement assisted capacity for the
Landau--Streater channel.

\begin{proposition} \label{prop-ea-capacity}
The Landau--Streater channel $\Phi:
\mathcal{B}(\mathcal{H}_{2j+1}) \mapsto
\mathcal{B}(\mathcal{H}_{2j+1})$ has the entanglement-assisted
capacity $C_{\rm ea}(\Phi) = 2 \log (2j+1) - \log 3$.
\end{proposition}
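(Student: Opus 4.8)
The plan is to exploit the $SU(2)$ covariance of $\Phi$ together with the concavity of the quantum mutual information functional $I(\rho,\Phi) := S(\rho) + S(\Phi[\rho]) - S(\widetilde{\Phi}[\rho])$ in the input state $\rho$, in order to show that the maximum in \eqref{c-ea-max} is attained at the maximally mixed state $\rho_* = \frac{1}{2j+1} I_{2j+1}$. Once this is established, the three entropies are read off immediately: $S(\rho_*) = \log(2j+1)$; since $\Phi$ is unital, $S(\Phi[\rho_*]) = \log(2j+1)$; and by Proposition~\ref{prop-complementary-on-mixed} the complementary channel maps $\rho_*$ to $\frac{1}{3} I_3$, so $S(\widetilde{\Phi}[\rho_*]) = \log 3$. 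Substituting into \eqref{c-ea-max} then yields $C_{\rm ea}(\Phi) = 2\log(2j+1) - \log 3$.

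First I would verify that the complementary channel $\widetilde{\Phi}$ is itself covariant, with a suitable representation on the three-dimensional environment $\mathcal{K} = \mathcal{H}_3$. Using the Stinespring isometry $V$ from Section~\ref{section-complementary} and the rotation identity \eqref{rotation}, one finds the intertwining relation $V U_g = (U_g \otimes R_g) V$, where $R_g$ is the orthogonal rotation acting on $\mathcal{K}$ through the matrix $Q$ associated with $g$. This follows because \eqref{rotation} gives $J_\alpha U_g = U_g \sum_\beta Q_{\alpha\beta} J_\beta$, so that the environment index gets rotated by $Q$ when $V$ is pushed past $U_g$. Taking the partial trace over the system and using invariance of $\mathrm{tr}_{\mathcal{H}}$ under $U_g \otimes I$, I obtain $\widetilde{\Phi}[U_g X U_g^\dagger] = R_g \widetilde{\Phi}[X] R_g^\dagger$. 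Combined with the covariance of $\Phi$ (Proposition~\ref{prop-SU2-cov}) and the unitary invariance of the von Neumann entropy, this shows that all three terms of $I(\rho,\Phi)$ are unchanged under $\rho \mapsto U_g \rho U_g^\dagger$, hence $I(U_g \rho U_g^\dagger, \Phi) = I(\rho,\Phi)$ for every $g \in SU(2)$.

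Next I would invoke the concavity of $I(\rho,\Phi)$ as a function of $\rho$, which is a standard property of the quantum mutual information. Let $\rho_{\max}$ be a maximizer in \eqref{c-ea-max}. Averaging over the group with respect to the Haar measure and using concavity together with the invariance just established gives $I\!\left( \int U_g \rho_{\max} U_g^\dagger \, dg, \Phi \right) \geq \int I(U_g \rho_{\max} U_g^\dagger, \Phi)\, dg = I(\rho_{\max},\Phi)$. Because the representation $U_g$ is irreducible, Schur's lemma yields $\int U_g \rho_{\max} U_g^\dagger\, dg = \frac{1}{2j+1} I_{2j+1} = \rho_*$, so the maximally mixed state attains the maximum. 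Evaluating $S(\rho_*)$, $S(\Phi[\rho_*])$ and $S(\widetilde{\Phi}[\rho_*])$ as described above then completes the computation.

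The main obstacle I anticipate is establishing the covariance of the complementary channel cleanly — specifically, identifying the correct induced representation $R_g$ on the environment and verifying the intertwining relation $V U_g = (U_g \otimes R_g) V$. The remaining steps (the concavity input, the Schur-averaging argument, and the final entropy bookkeeping via Proposition~\ref{prop-complementary-on-mixed} and unitality) are routine once this covariance is in hand.
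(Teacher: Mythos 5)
Your proposal is correct and follows essentially the same route as the paper: reduce the maximization in \eqref{c-ea-max} to the maximally mixed state via irreducible $SU(2)$ covariance, then evaluate the three entropies using unitality and Proposition~\ref{prop-complementary-on-mixed}. The only difference is that the paper outsources the reduction step by citing Proposition 9.3 of Ref.~\cite{holevo-2012}, whereas you prove it directly --- via the intertwining relation $VU_g=(U_g\otimes R_g)V$ for the complementary channel, concavity of the mutual information, and Schur-lemma Haar averaging --- which is precisely the standard argument behind the cited result and is carried out correctly.
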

\begin{proof}
Since $\Phi$ is irreducibly covariant by
proposition~\ref{prop-SU2-cov}, then it follows that the maximum
in~\eqref{c-ea-max} is attained on the maximally mixed input state
$\rho = \frac{1}{2j+1} I_{2j+1}$ (Ref.~\cite{holevo-2012},
Proposition 9.3). Recalling that $\Phi$ is unital and the
complementary channel $\widetilde{\Phi}$ transforms the maximally
mixed state $\frac{1}{2j+1} I_{2j+1}$ into the maximally mixed
qutrit state $\frac{1}{3} I_{3}$ by
proposition~\ref{prop-complementary-on-mixed}, we get $C_{\rm ea}
= 2 S[\frac{1}{2j+1} I_{2j+1}] - S(\frac{1}{3}I_3) = 2 \log (2j+1)
-\log 3$.
\end{proof}

\subsection{Quantum capacity}

The coherent information~\cite{schumacher-1996} for a channel
$\Phi: \mathcal{B}(\mathcal{H}) \mapsto \mathcal{B}(\mathcal{H})$
and state $\rho \in \mathcal{B}(\mathcal{H})$ is defined through
$I_{\rm c}(\rho,\Phi) = S(\Phi[\rho]) -
S(\widetilde{\Phi}[\rho])$. Maximizing coherent information over
states $\rho$ we get a ``single-letter'' quantum capacity
$Q_1(\Phi) = \max_{\rho} I_{\rm c}(\rho,\Phi)$. Quantum capacity
is known~\cite{devetak-2-2005} to be a regularized version of
$Q_1$, namely, $Q(\Phi) = \lim_{n \rightarrow \infty} \frac{1}{n}
Q_1(\Phi^{\otimes n})$. If $\Phi$ is degradable, than $Q(\Phi) =
Q_1(\Phi)$, Ref.~\cite{devetak-2005}. If $\Phi$ is antidegradable,
then $Q(\Phi)=0$, Ref.~\cite{giovannetti-fazio-2005}. If
$j=\frac{1}{2}$ or $j=1$, then the Landau--Streater channels
$\Phi: \mathcal{B}(\mathcal{H}_2) \rightarrow
\mathcal{B}(\mathcal{H}_2)$ and $\Phi: \mathcal{B}(\mathcal{H}_3)
\rightarrow \mathcal{B}(\mathcal{H}_3)$ are antidegradable, so
$Q(\Phi) = 0$. Since the Landau--Streater channel is not
antidegradable if $j \geqslant \frac{3}{2}$ by
proposition~\ref{prop-not-antidegradable}, one can expect that
$Q(\Phi) > 0$ if $j \geqslant \frac{3}{2}$. Note that $I_{\rm
c}(\rho^{\otimes n},\Phi^{\otimes n}) = n I_{\rm c}(\rho,\Phi)$,
Ref.~\cite{barnum-1998}, and therefore $Q_1(\Phi^{\otimes n}) =
\max_{\rho: \rho \in \mathcal{B}(\mathcal{H}^{\otimes n})} I_{\rm
c}(\rho,\Phi^{\otimes n}) \geq \max_{\rho: \rho \in
\mathcal{B}(\mathcal{H})} I_{\rm c}(\rho^{\otimes n},\Phi^{\otimes
n}) = n Q_1(\Phi)$. Consequently, $Q(\Phi) \geqslant Q_1(\Phi)
\geq I_{\rm c}(\rho_0,\Phi)$ for any density operator $\rho_0$.
This means that one can estimate the quantum capacity of the
Landau--Streater channel $\Phi: \mathcal{B}(\mathcal{H}_{2j+1})
\mapsto \mathcal{B}(\mathcal{H}_{2j+1})$ from below by $I_{\rm
c}(\rho_0,\Phi)$. In fact, if we fix the state $\rho_0 =
\frac{1}{2j+1} I_{2j+1}$, then $Q(\Phi) \geqslant I_{\rm
c}(\frac{1}{2j+1}I_{2j+1},\Phi) = \log(2j+1) - \log 3$. Thus, we
have just proved the following result.

\begin{proposition} \label{prop-q-capacity}
$Q(\Phi) = 0$ for the Landau--Streater channels $\Phi:
\mathcal{B}(\mathcal{H}_2) \mapsto \mathcal{B}(\mathcal{H}_2)$ and
$\Phi: \mathcal{B}(\mathcal{H}_3) \mapsto
\mathcal{B}(\mathcal{H}_3)$. If $j \geqslant \frac{3}{2}$, then
$Q(\Phi) \geqslant Q_1(\Phi) \geqslant \log(2j+1) - \log 3$ for
the Landau--Streater channel $\Phi:
\mathcal{B}(\mathcal{H}_{2j+1}) \mapsto
\mathcal{B}(\mathcal{H}_{2j+1})$.
\end{proposition}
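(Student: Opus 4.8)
The plan is to split the argument according to the value of $j$, handling the vanishing of $Q$ at small spins and the strictly positive lower bound at larger spins by entirely different mechanisms. For $j=\frac{1}{2}$ and $j=1$ I would simply invoke the antidegradability of $\Phi$ already established in Sections~\ref{subsection-complementary-1/2} and~\ref{subsection-complementary-1}: the qubit channel is entanglement breaking and hence antidegradable, while the qutrit channel is unitarily equivalent to the Werner--Holevo channel, which is antidegradable. Since the quantum capacity of any antidegradable channel is zero, $Q(\Phi)=0$ follows at once in these two cases, with no further computation.

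For $j \geqslant \frac{3}{2}$ the plan is to produce an explicit strictly positive lower bound on $Q$ by evaluating the coherent information at a single convenient input. First I would note the elementary chain $Q(\Phi) \geqslant Q_1(\Phi) \geqslant I_{\rm c}(\rho_0,\Phi)$, valid for any fixed state $\rho_0$: the second inequality is immediate because $Q_1(\Phi) = \max_{\rho} I_{\rm c}(\rho,\Phi)$, while the first rests on the additivity of coherent information under product channels and product states, $I_{\rm c}(\rho^{\otimes n},\Phi^{\otimes n}) = n\, I_{\rm c}(\rho,\Phi)$. This additivity forces $Q_1(\Phi^{\otimes n}) \geqslant n\, Q_1(\Phi)$, so that after dividing by $n$ and passing to the limit in the regularization $Q(\Phi) = \lim_{n\to\infty} \frac{1}{n} Q_1(\Phi^{\otimes n})$ one recovers $Q(\Phi) \geqslant Q_1(\Phi)$. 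Thus it suffices to exhibit one input with positive coherent information.

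The remaining step is to compute $I_{\rm c}(\rho_0,\Phi) = S(\Phi[\rho_0]) - S(\widetilde{\Phi}[\rho_0])$ at the maximally mixed state $\rho_0 = \frac{1}{2j+1} I_{2j+1}$. Here unitality of $\Phi$ gives $\Phi[\rho_0] = \rho_0$, hence $S(\Phi[\rho_0]) = \log(2j+1)$, while Proposition~\ref{prop-complementary-on-mixed} gives $\widetilde{\Phi}[\rho_0] = \frac{1}{3} I_3$, hence $S(\widetilde{\Phi}[\rho_0]) = \log 3$. Substituting into the chain above yields the advertised bound $Q(\Phi) \geqslant Q_1(\Phi) \geqslant \log(2j+1) - \log 3$, which is strictly positive precisely when $2j+1 > 3$, i.e.\ when $j \geqslant \frac{3}{2}$.

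The computations here are all routine once the ingredients are assembled, so the genuine difficulty lies upstream rather than in this proposition: the two inputs it consumes are the antidegradability results for $j=\frac{1}{2},1$ and the evaluation of the complementary channel on the maximally mixed state (Proposition~\ref{prop-complementary-on-mixed}). A conceptual point worth flagging is that this argument only bounds $Q$ from below and does not pin down the exact quantum capacity for $j \geqslant \frac{3}{2}$: in that regime $\Phi$ is neither degradable nor antidegradable (consistent with Proposition~\ref{prop-not-antidegradable}), so the single-letter coherent information need not be tight and regularization cannot be avoided by the present method.
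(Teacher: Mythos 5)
Your proposal is correct and takes essentially the same route as the paper: $Q(\Phi)=0$ for $j=\frac{1}{2},1$ via the antidegradability established in Sections~\ref{subsection-complementary-1/2} and~\ref{subsection-complementary-1}, and the lower bound for $j \geqslant \frac{3}{2}$ via the chain $Q(\Phi) \geqslant Q_1(\Phi) \geqslant I_{\rm c}(\rho_0,\Phi)$ (using additivity of coherent information on product inputs) evaluated at $\rho_0 = \frac{1}{2j+1}I_{2j+1}$, where unitality and Proposition~\ref{prop-complementary-on-mixed} give $\log(2j+1)-\log 3$. Your closing remark that the bound is not claimed to be tight, since $\Phi$ is neither degradable nor antidegradable in this regime, is consistent with the paper as well.
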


\section{Entanglement annihilation and preservation} \label{section-entanglement}

A state $\rho \in \mathcal{B}(\mathcal{H}^{A} \otimes
\mathcal{H}^{B})$ is called separable with respect to bipartition
$A|B$ if it can be represented as the closure of convex
combination $\sum_i p_i \rho_i^{A} \otimes \rho_i^{B}$, where
$\{p_i\}$ is a probability distribution, $\rho_i^{A} \in
\mathcal{B}(\mathcal{H}^{A})$, and $\rho_i^{B} \in
\mathcal{B}(\mathcal{H}^{B})$, Ref.~\cite{werner-1989}. The
channel $\Phi: \mathcal{B}(\mathcal{H}^{A}) \mapsto
\mathcal{B}(\mathcal{H}^{A})$ is called entanglement breaking if
$\Phi \otimes {\rm Id}_k^B [\rho]$ is separable for all input
states $\rho \in \mathcal{B}(\mathcal{H}^A \otimes
\mathcal{H}^B)$, with $k = {\rm dim} \mathcal{H}^B$ being
arbitrary~\cite{horodecki-2003}. Entanglement-breaking channels
are exactly measure-and-prepare ones, and their structure is well
known~\cite{horodecki-2003}. The channel $\Lambda:
\mathcal{B}(\mathcal{H}^{A} \otimes \mathcal{H}^B) \mapsto
\mathcal{B}(\mathcal{H}^{A} \otimes \mathcal{H}^B)$ is called
entanglement annihilating if $\Lambda [\rho]$ is separable for all
input states $\rho \in \mathcal{B}(\mathcal{H}^A \otimes
\mathcal{H}^B)$~\cite{moravcikova-2010}. The structure of
entanglement annihilating channels is fully studied for local
qubit channels $\Lambda = \Phi_1 \otimes \Phi_2$~\cite{ffk-2018}
and partially studied for other classes of
channels~\cite{lami-huber-2015}. We focus on
entanglement-annihilating properties of the map $\Lambda = \Phi
\otimes \Phi$, where $\Phi$ is the Landau--Streater channel. As we
show below, $\Phi \otimes \Phi$ is not entanglement annihilating
if $j \geqslant 1$, from which it will follow that $\Phi$ is not
entanglement breaking and $\Phi \otimes \Phi$ is not absolutely
separating~\cite{fmj-2017}.

\begin{proposition} \label{prop-ent-ann}
The second tensor power of the Landau--Streater channel $\Phi:
\mathcal{B}(\mathcal{H}_{2j+1}) \mapsto
\mathcal{B}(\mathcal{H}_{2j+1})$, $\Phi \otimes \Phi$, is
entanglement annihilating if $j=\frac{1}{2}$ and is not
entanglement annihilating for all $j \geqslant 1$.
\end{proposition}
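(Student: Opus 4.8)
The plan is to treat the two regimes $j=\tfrac12$ and $j\geq 1$ separately. For $j=\tfrac12$ I would simply invoke the fact, already established in Subsection~\ref{subsection-complementary-1/2}, that $\Phi$ is entanglement breaking. An entanglement-breaking channel admits a measure-and-prepare form $\Phi[\,\cdot\,]=\sum_k \mathrm{tr}[M_k\,\cdot\,]\,\rho_k$, so for any bipartite input $\rho$ one has $\Phi\otimes\Phi[\rho]=\sum_{k,l}\mathrm{tr}[(M_k\otimes M_l)\rho]\,\rho_k\otimes\rho_l$, which is manifestly separable. Hence $\Phi\otimes\Phi$ annihilates entanglement when $j=\tfrac12$, and all the real work concerns $j\geq 1$.

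For $j\geq 1$ the goal is to exhibit a single pure input of Schmidt rank $2$ whose image under $\Phi\otimes\Phi$ violates the positive-partial-transpose (PPT) criterion and is therefore entangled. I would take the antisymmetric state $|\Psi\rangle=\tfrac{1}{\sqrt2}\big(|j,j\rangle\otimes|j,j-1\rangle-|j,j-1\rangle\otimes|j,j\rangle\big)$ and compute $\sigma=\Phi\otimes\Phi[|\Psi\rangle\langle\Psi|]$. Writing $|b\rangle:=|j,j-1\rangle$ and $|c\rangle:=|j,j-2\rangle$ (which exists for every $j\geq 1$), the required ingredients are the images of $|j,j\rangle\langle j,j|$, $|b\rangle\langle b|$, $|j,j\rangle\langle b|$ and its adjoint. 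The diagonal ones follow directly from~\eqref{Phi-on-jm}; the off-diagonal coherence is obtained by extending the same computation to $|j,m\rangle\langle j,m'|$ through $\Phi[X]=[j(j+1)]^{-1}\big(\tfrac12 J_+XJ_-+\tfrac12 J_-XJ_++J_zXJ_z\big)$ together with~\eqref{J-pm}, giving $\Phi[|j,j\rangle\langle b|]=\tfrac{j-1}{j+1}|j,j\rangle\langle b|+\tfrac{\sqrt{j(2j-1)}}{j(j+1)}|b\rangle\langle c|$.

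The decisive step is to examine the partial transpose $\sigma^{T_B}$ on the $2\times2$ principal block spanned by $|b\rangle\otimes|b\rangle$ and $|c\rangle\otimes|c\rangle$. Assembling $\sigma$ from the four tensor products and transposing the second factor, I expect the two diagonal entries of this block to be $\tfrac{(j-1)^2}{j(j+1)^2}$ and $0$, while the off-diagonal entry equals $\big(\langle b|\otimes\langle c|\big)\,\sigma\,\big(|c\rangle\otimes|b\rangle\big)=-\tfrac{2j-1}{2j(j+1)^2}$. The vanishing of the $|c\rangle|c\rangle$ diagonal entry is the crucial feature: it forces the determinant of the block to equal $-\big(\tfrac{2j-1}{2j(j+1)^2}\big)^2$, strictly negative for every $j\geq 1$. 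A symmetric $2\times2$ submatrix with negative determinant has a negative eigenvalue, so $\sigma^{T_B}\not\geq 0$, $\sigma$ is entangled, and $\Phi\otimes\Phi$ does not annihilate entanglement. As a consistency check, at $j=1$ the block reduces to $\left(\begin{smallmatrix}0&-1/8\\-1/8&0\end{smallmatrix}\right)$, in agreement with the Werner--Holevo reduction of Subsection~\ref{subsection-complementary-1}.

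The main obstacle I anticipate is twofold. First, one must correctly track how the $J_-XJ_+$ term of $\Phi$ routes the input coherence \emph{down} to the $|c\rangle=|j,j-2\rangle$ level, because this generated coherence — rather than the surviving $|j,j\rangle\leftrightarrow|b\rangle$ coherence — is what ultimately produces negativity. Second, and more importantly, one must choose the block of $\sigma^{T_B}$ whose determinant is negative \emph{uniformly} in $j$: the more obvious blocks built from $|j,j\rangle|j,j\rangle$ and $|b\rangle|b\rangle$ turn out to become NPT only for sufficiently large $j$ and fail at $j=1$, whereas the block anchored at the empty $|c\rangle|c\rangle$ diagonal entry succeeds for all $j\geq 1$ simultaneously.
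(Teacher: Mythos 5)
Your proof is correct, but it takes a genuinely different route from the paper's on both halves of the statement. For $j=\tfrac12$ the paper simply cites Morav\v{c}\'{i}kov\'{a}--Ziman for entanglement annihilation of the depolarizing channel with $q=-\tfrac13$, whereas you derive it from the entanglement-breaking property (established independently in subsection~\ref{subsection-complementary-1/2}) via the measure-and-prepare factorization of $\Phi\otimes\Phi$; both are valid, and yours is slightly more self-contained. For $j\geqslant 1$ the paper takes the \emph{symmetric} witness $|\phi\rangle=\tfrac{1}{\sqrt2}\left(|j,j\rangle|j,j\rangle+|j,-j\rangle|j,-j\rangle\right)$ built on the two extremal weight vectors: since $J_+|j,j\rangle=J_-|j,-j\rangle=0$, the input coherence maps rigidly, $\Phi[|j,\pm j\rangle\langle j,\mp j|]=-\tfrac{j}{j+1}|j,\pm j\rangle\langle j,\mp j|$, with no leakage to other levels; after partial transposition (streamlined via the identity $\top\circ\Phi[X]=\Phi[X^\top]$) the diagonal and off-diagonal parts have orthogonal supports, so the negative eigenvalue $-\tfrac{j^2}{(j+1)^2}$ (of twice the partially transposed state) is read off immediately, uniformly in $j$, with no search for a suitable block. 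Your antisymmetric state on the \emph{adjacent} levels $|j,j\rangle,|j,j-1\rangle$ also works: I checked your three channel outputs, including $\Phi[|j,j\rangle\langle b|]=\tfrac{j-1}{j+1}|j,j\rangle\langle b|+\tfrac{\sqrt{j(2j-1)}}{j(j+1)}|b\rangle\langle c|$, the block entries $\tfrac{(j-1)^2}{j(j+1)^2}$, $0$, and $-\tfrac{2j-1}{2j(j+1)^2}$ of $\sigma^{T_B}$ in the $\{|b\rangle|b\rangle,|c\rangle|c\rangle\}$ sector, and the resulting negative principal minor; and your cautionary remark about the obvious blocks is accurate (the $\{|j,j\rangle|j,j\rangle,|b\rangle|b\rangle\}$ block has determinant $\tfrac{(j-1)^2}{(j+1)^4}\left(1-\tfrac{(j-1)^2}{4}\right)$, negative only for $j>3$). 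What the paper's choice buys is that support-orthogonality makes the negativity manifest without any minor-hunting; what yours buys is a more delicate but instructive witness showing that the extremal states $|j,\pm j\rangle$ are not essential and that the coherence routed down to $|j,j-2\rangle$ by the $J_-XJ_+$ term suffices to preserve entanglement.
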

\begin{proof}
The case $j=\frac{1}{2}$ corresponds to the qubit depolarizing
channel with depolarization parameter $q=-\frac{1}{3}$.
Entanglement annihilation by $\Phi \otimes \Phi$ in this case is
proved in Ref.~\cite{moravcikova-2010}.

Let $j \geqslant 1$. In what follows, we prove that $\Phi \otimes
\Phi$ is not entanglement annihilating by presenting a bipartite
entangled state, which remains entangled after the action of $\Phi
\otimes \Phi$. Consider the vector $| \phi \rangle \in
\mathcal{H}_{2j+1} \otimes \mathcal{H}_{2j+1}$ of the form
\begin{equation} \label{schmidt-2}
|\phi\rangle = \frac{1}{\sqrt{2}} \left( |j,j\rangle |j,j\rangle +
|j,-j\rangle |j,-j\rangle \right),
\end{equation}

\noindent where $|j,m\rangle$ denotes the spin-$j$ state vector
corresponding to the definite spin projection $m$ onto $z$ axis,
$J_z |j,m\rangle = m |j,m\rangle$, $m=j,j-1,\ldots,-j$. Let $\top$
be the transposition in the basis $\{ | j,m \rangle \}$. Since
$\top \circ \Phi[X] = \Phi[X^{\top}]$ for the Landau--Streater
channel $\Phi$, then the partially transposed output state
$\Phi\otimes(\top \circ \Phi) [|\phi\rangle\langle\phi|]$ is given
by formula
\begin{eqnarray}
&& 2 \, \Phi\otimes(\top \circ \Phi) [|\phi\rangle\langle\phi|] =
\Phi[|j,j\rangle\langle j,j|]\otimes\Phi[|j,j\rangle\langle j,j|]
+ \Phi[|j,-j\rangle\langle
j,-j|]\otimes\Phi[|j,-j\rangle\langle j,-j|] \nonumber\\
&& + \Phi[|j,j\rangle\langle j,-j|]\otimes\Phi[|j,-j\rangle\langle
j,j|] + \Phi[|j,-j\rangle\langle
j,j|]\otimes\Phi[|j,j\rangle\langle j,-j|].
\end{eqnarray}

\noindent Using the channel representation $\Phi[X] =
[j(j+1)]^{-1} \left( \frac{1}{2}J_- X J_+ + \frac{1}{2} J_+ X J_-
+ J_z X J_z \right)$ and formula~\eqref{J-pm}, we get
\begin{eqnarray}
\Phi[|j,\pm j\rangle \langle j, \pm j|] &=& \frac{j}{j+1} |j,\pm
j\rangle \langle j, \pm j| + \frac{1}{j+1} |j, \pm j \mp 1\rangle
\langle j,
\pm j \mp 1|, \\
\Phi[|j, \pm j\rangle\langle j, \mp j|] &=& -\frac{j}{j+1} |j, \pm
j \rangle \langle j,\mp j|.
\end{eqnarray}

\noindent If $j \geqslant 1$, then the supports of operators
$\Phi[|j,j\rangle\langle j,j|]\otimes\Phi[|j,j\rangle\langle j,j|]
+ \Phi[|j,-j\rangle\langle j,-j|]\otimes\Phi[|j,-j\rangle\langle
j,-j|]$ and $\Phi[|j,j\rangle\langle
j,-j|]\otimes\Phi[|j,-j\rangle\langle j,j|] +
\Phi[|j,-j\rangle\langle j,j|]\otimes\Phi[|j,j\rangle\langle
j,-j|]$ are orthogonal. Moreover, the operator
\begin{eqnarray}
&& \Phi[|j,j\rangle\langle j,-j|] \otimes \Phi[|j,-j\rangle\langle
j,j|] + \Phi[|j,-j\rangle\langle
j,j|] \otimes \Phi[|j,j\rangle\langle j,-j|] \nonumber\\
&& = \frac{j^2}{(j+1)^2} \left( |j, j \rangle \langle j, -j|
\otimes |j, -j \rangle \langle j, j| + |j, -j \rangle \langle j,
j| \otimes |j, j \rangle \langle j, -j| \right)
\end{eqnarray}

\noindent is not positive semidefinite as it has a the negative
eigenvalue $-\frac{j^2}{(j+1)^2}$. Therefore, the partially
transposed state $\Phi\otimes(\top \circ \Phi)
[|\phi\rangle\langle\phi|]$ is not positive semidefinite and
$\Phi\otimes\Phi [|\phi\rangle\langle\phi|]$ is entangled by the
Peres--Horodecki criterion~\cite{peres-1996,horodecki-1996}.
\end{proof}

\begin{proposition}
The Landau--Streater channel $\Phi:
\mathcal{B}(\mathcal{H}_{2j+1}) \mapsto
\mathcal{B}(\mathcal{H}_{2j+1})$ is entanglement breaking if
$j=\frac{1}{2}$ and is not entanglement breaking if $j \geqslant
1$.
\end{proposition}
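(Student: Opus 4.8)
The plan is to dispatch the two cases by entirely different means, each leaning on a result already in place. For $j=\frac12$ the channel coincides, as noted in subsection~\ref{subsection-complementary-1/2}, with the qubit depolarizing channel of depolarization parameter $-\frac13$; this value sits at the extreme admissible point of the depolarizing family, which is entanglement breaking by the criterion of Ref.~\cite{ruskai-2003}. Hence the first half of the statement requires nothing beyond citing that fact.

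For $j\geqslant 1$ the strategy is to argue by contraposition from the entanglement-annihilation result of Proposition~\ref{prop-ent-ann}. The key observation I would establish first is that if $\Phi$ were entanglement breaking, then $\Phi\otimes\Phi$ would necessarily be entanglement annihilating. To see this I would invoke the measure-and-prepare structure of entanglement-breaking maps (Ref.~\cite{horodecki-2003}): $\Phi$ is entanglement breaking precisely when it admits a representation $\Phi[X]=\sum_k \mathrm{tr}[M_k X]\,\sigma_k$ with a POVM $\{M_k\}$ and output states $\{\sigma_k\}$. Applying this to each tensor factor gives
\begin{equation}
(\Phi\otimes\Phi)[\rho] = \sum_{k,l} \mathrm{tr}\!\left[(M_k\otimes M_l)\,\rho\right]\,\sigma_k\otimes\sigma_l
\end{equation}
for every bipartite input $\rho$, which is a convex combination of product states and is therefore separable; thus $\Phi\otimes\Phi$ is entanglement annihilating.

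I would then close the argument immediately: Proposition~\ref{prop-ent-ann} asserts that $\Phi\otimes\Phi$ is \emph{not} entanglement annihilating when $j\geqslant 1$. Combining this with the implication just established yields, by contraposition, that $\Phi$ cannot be entanglement breaking for $j\geqslant 1$. I expect no genuine obstacle here, since the substantive work---exhibiting the explicit Schmidt-rank-two state of \eqref{schmidt-2} whose image under $\Phi\otimes\Phi$ stays entangled---was already carried out in Proposition~\ref{prop-ent-ann}; the only new content to supply is the elementary implication ``entanglement breaking $\Rightarrow$ tensor square entanglement annihilating.'' The one point worth checking carefully is the logical direction and the use of the \emph{full} entanglement-breaking property (separability of $\Phi\otimes\mathrm{Id}$ on all inputs, of arbitrary ancilla dimension), which the measure-and-prepare form delivers automatically and which is exactly what makes the product decomposition above valid.
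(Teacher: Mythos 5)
Your proposal is correct and takes essentially the same route as the paper: the $j=\frac{1}{2}$ case is handled by citing Ref.~\cite{ruskai-2003} for the depolarizing channel with parameter $-\frac{1}{3}$, and the $j\geqslant 1$ case by contradiction with Proposition~\ref{prop-ent-ann}. The only difference is that you prove the implication ``entanglement breaking $\Rightarrow$ $\Phi\otimes\Phi$ entanglement annihilating'' explicitly from the measure-and-prepare representation, whereas the paper simply cites Ref.~\cite{moravcikova-2010} for this fact; your inlined argument is valid.
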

\begin{proof}
If $j=\frac{1}{2}$, then the Landau--Streater channel reduces to a
depolarizing qubit channel with depolarization parameter
$q=-\frac{1}{3}$. Such a channel is known to be entanglement
breaking~\cite{ruskai-2003}.

Let $j \geqslant 1$. Suppose that the Landau--Streater channel
$\Phi: \mathcal{B}(\mathcal{H}_{2j+1}) \mapsto
\mathcal{B}(\mathcal{H}_{2j+1})$ is entanglement breaking, then
$\Phi \otimes \Phi$ must be entanglement annihilating by
construction~\cite{moravcikova-2010}. By
proposition~\ref{prop-ent-ann} it is not the case. This
contradiction implies that $\Phi$ is not entanglement breaking.
\end{proof}

\section{Conclusions} \label{section-conclusions}

The channel~\eqref{LS-map} has been originally constructed as an
example of a unital completely positive and trace preserving map,
which is extremal in the set of channels if $j \geqslant 1$ and,
consequently, is not random unitary. By construction, the example
of Landau and Streater is $SU(2)$ covariant for all $j$ and,
surprisingly, is globally unitarily covariant if $j=\frac{1}{2}$
and $j=1$. We have proved that for $j>1$ the Landau--Streater
channels is not $U(2j+1)$ covariant, so global unitary covariance
is a peculiar property of spin-$\frac{1}{2}$ and spin-$1$ maps.
Using the theory of angular momentum, we have explicitly found the
spectrum of the Landau--Streater map in
proposition~\ref{prop-spectrum-LS} and pointed out that $\Phi$
always has negative eigenvalues. Negativity of those eigenvalues
indicates that $\Phi$ cannot be obtained as a result of Hermitian
Markovian quantum dynamics.

We have found the Stinespring dilation of the Landau--Streater
channel, which reveals its physical realization. The
Landau--Streater channel can be implemented as a result of the
controlled interaction between a spin-$j$ particle (system) and a
spin-$1$ particle (environment). The partial trace over
environment results in the Landau--Streater channel $\Phi$,
whereas the partial trace over system results in the complementary
channel $\widetilde{\Phi}$. The important property of the
complementary channel is its action on the maximally mixed input
state, which we have established in
proposition~\ref{prop-complementary-on-mixed}. If $j=\frac{1}{2}$,
then the Landau--Streater channel is antidegradable but not
degradable. If $j=1$, the Landau--Streater channel is unitary
equivalent to the Werner--Holevo channel, so in this case $\Phi$
is both degradable and antidegradable. For larger spins ($j
\geqslant \frac{3}{2}$) the Landau--Streater channel is neither
degradable nor antidegradable.

Using the theory of angular momentum, we find the minimal output
entropy of the Landau--Streater channel in
proposition~\ref{prop-p-norm}. Combining this result with $SU(2)$
covariance, we have managed to calculate the $\chi$-capacity
(proposition~\ref{prop-chi-capacity}) and the
entanglement-assisted capacity
(proposition~\ref{prop-ea-capacity}). Also, we have estimated the
lower bound on quantum capacity of the Landau--Streater channel
(proposition~\ref{prop-q-capacity}).

We have explored the entanglement dynamics induced by the
Landau--Streater channel. The channel is shown to be entanglement
breaking if and only if $j=\frac{1}{2}$. The channel's second
tensor power $\Phi \otimes \Phi$ does not annihilate entanglement
for any $j \geqslant 1$. We have constructed the state with
Schmidt rank $2$, formula~\eqref{schmidt-2}, which remains
entangled when affected by $\Phi \otimes \Phi$.

Finally, we have discussed the multiplicativity property of the
maximal $p$-norms for the Landau--Streater channel and conjectured
multiplicativity of the maximal $2$-norms with respect to the
second tensor power of the channel.

\begin{acknowledgements}
The authors thank the anonymous referee for helpful suggestions to
improve the quality of the paper, pointing out misprints, and
proposing a proof for the fact that ${\rm
rank}\Omega_{\widetilde{\Phi}} = 2j+1$ in Proposition~8. The study
in Sec. II was supported by Russian Science Foundation under
Project No. 16-11-00084. The results of Secs. III--VI were
obtained by S.N.F., supported by Russian Science Foundation under
Project No. 17-11-01388, and performed at the Steklov Mathematical
Institute of Russian Academy of Sciences.
\end{acknowledgements}


\begin{thebibliography}{99}

\bibitem{holevo-2012}
A. S. Holevo, {\it Quantum Systems, Channels, Information} (Walter
de Gruyter, Berlin, 2012).

\bibitem{Landau-Streater}
L. J. Landau and R. F. Streater, On Birkhoff's theorem for doubly
stochastic completely positive maps of matrix algebras, Lin.
Algebra Appl. {\bf 193}, 107 (1993).

\bibitem{Varshalovich}
D. A. Varshalovich, A. N. Moskalev, and V. K. Khersonskii, {\it
Theory of Angular Momentum} (World Scientific, Singapore, 1988).

\bibitem{buzek-1999}
V. Bu\v{z}ek, M. Hilery and R. F. Werner, Optimal manipulations
with qubits: Universal-NOT gate, Phys. Rev. A {\bf 60}, R2626
(1999).

\bibitem{fm-2018}
S. N. Filippov and K. Y. Magadov, Spin polarization-scaling
quantum maps and channels, Lobachevskii Journal of Mathematics
{\bf 39}, 65 (2018).

\bibitem{wh-2002}
R. F. Werner and A. S. Holevo, Counterexample to an additivity
conjecture for output purity of quantum channels, J. Math. Phys.
{\bf 43}, 4353 (2002).

\bibitem{holevo-1993}
A. S. Holevo, A note on covariant dynamical semigroups, Rep. Math.
Phys. {\bf 32}, 211 (1993).

\bibitem{holevo-1996}
A. S. Holevo, Covariant quantum Markovian evolutions, J. Math.
Phys. {\bf 37}, 1812 (1996).

\bibitem{datta-2016}
N. Datta, M. Tomamichel, and M. M. Wilde, On the second-order
asymptotics for entanglement-assisted communication, Quantum Inf.
Process. {\bf 15}, 2569 (2016).

\bibitem{mozrzymas-2017}
M. Mozrzymas, M. Studzi\'{n}ski, and N. Datta, Structure of
irreducibly covariant quantum channels for finite groups, J. Math.
Phys. {\bf 58}, 052204 (2017).

\bibitem{nuwairan-2014}
M. Al Nuwairan, The extreme points of SU(2)-irreducibly covariant
channels, Int. J. Math. {\bf 25}, 1450048 (2014).

\bibitem{datta-2004}
N. Datta, Multiplicativity of maximal $p$-norms in Werner-Holevo
channels for $1 \leq p \leq 2$, arXiv:quant-ph/0410063.

\bibitem{datta-2006}
N. Datta, A. S. Holevo, and Y. Suhov, Additivity for transpose
depolarizing channels, Int. J. Quantum Inform. {\bf 04}, 85
(2006).

\bibitem{king-depol-2003}
C. King, The capacity of the quantum depolarizing channel, IEEE
Trans. Inf. Theory {\bf 49}, 221 (2003).

\bibitem{wolf-2008}
M. M. Wolf and J. I. Cirac, Dividing Quantum Channels, Commun.
Math. Phys. {\bf 279}, 147 (2008).

\bibitem{fpmz-2017}
S. N. Filippov, J. Piilo, S. Maniscalco, and M. Ziman,
Divisibility of quantum dynamical maps and collision models, Phys.
Rev. A {\bf 96}, 032111 (2017).

\bibitem{fm-2010}
S. N. Filippov and V. I. Man'ko, Inverse spin-s portrait and
representation of qudit states by single probability vectors, J.
Russ. Laser Res. {\bf 31}, 32 (2010).

\bibitem{fm-2009}
S. N. Filippov and V. I. Man'ko, Spin tomography and star-product
kernel for qubits and qutrits, J. Russ. Laser Res. {\bf 30}, 129
(2009).

\bibitem{chruscinski-macchiavello-maniscalco-2017}
D. Chru\'{s}ci\'{n}ski, C. Macchiavello, and S. Maniscalco,
Detecting non-Markovianity of quantum evolution via spectra of
dynamical maps, Phys. Rev. Lett. {\bf 118}, 080404 (2017).

\bibitem{Tong}
J. E. Pe\v{c}ari\'{c}, F. Proschan, and Y. L. Tong, {\it Convex
functions, partial orderings, and statistical applications},
Mathematics in Science and Engineering, vol. 187 (Academic Press,
Boston, 1992).

\bibitem{michalakis-2007}
S. Michalakis, Multiplicativity of the maximal output 2-norm for
depolarized Werner-Holevo channels, J. Math. Phys. {\bf 48},
122102 (2007).

\bibitem{king-2002}
C. King, Additivity for unital qubit channels, J. Math. Phys. {\bf
43}, 4641 (2002).

\bibitem{nielsen-2000}
M. A. Nielsen and I. L. Chuang, {\it Quantum Computation and
Quantum Information} (Cambridge University Press, Cambridge,
2000).

\bibitem{holevo-2005}
A. S. Holevo, Complementary channels and the additivity problem,
Theory Probab. Appl. {\bf 51}, 92 (2007).

\bibitem{king-2007}
C. King, K. Matsumoto, M. Nathanson, and M. B. Ruskai, Properties
of conjugate channels with applications to additivity and
multiplicativity, Markov Process and Related Fields {\bf 13}, 391
(2007).

\bibitem{Cubitt}
T. S. Cubitt, M. B. Ruskai, and G. Smith, The structure of
degradable quantum channels, J. Math. Phys. {\bf 49}, 102104
(2008).

\bibitem{ruskai-2003}
M. B. Ruskai, Qubit entanglement breaking channels, Rev. Math.
Phys. {\bf 15}, 643 (2003).

\bibitem{Choi}
M.-D. Choi, Completely positive linear maps on complex matrices,
Lin. Algebra Appl. {\bf 10}, 285 (1975).

\bibitem{schumacher-1997}
B. Schumacher and M. D. Westmoreland, Sending classical
information via noisy quantum channels, Phys. Rev. A {\bf 56}, 131
(1997).

\bibitem{holevo-chi-1998}
A. S. Holevo, The capacity of quantum channel with general signal
states, IEEE Trans. Inform. Theory {\bf 44}, 269 (1998).

\bibitem{holevo-arxiv-2002}
A. S. Holevo, Remarks on the classical capacity of quantum
channel, arXiv:quant-ph/0212025.

\bibitem{bennett-1999}
C. H. Bennett, P. W. Shor, J. A. Smolin, A. V. Thapliyal,
Entanglement-assisted classical capacity of noisy quantum channel,
Phys. Rev. Lett. {\bf 83}, 3081 (1999).

\bibitem{bennett-2002}
C. H. Bennett, P. W. Shor, J. A. Smolin and A. V. Thapliyal,
Entanglement-assisted capacity of a quantum channel and the
reverse Shannon theorem, IEEE Trans. Inf. Theory {\bf 48}, 2637
(2002).

\bibitem{holevo-2002}
A. S. Holevo, On entanglement-assisted classical capacity, J.
Math. Phys. {\bf 43}, 4326 (2002).

\bibitem{barnum-1998}
H. Barnum, M. A. Nielsen, and B. Schumacher, Information
transmission through a noisy quantum channel, Phys. Rev. A {\bf
57}, 4153 (1998).

\bibitem{schumacher-1996}
B. Schumacher and M. A. Nielsen, Quantum data processing and error
correction, Phys. Rev. A {\bf 54}, 2629 (1996)

\bibitem{devetak-2-2005}
I. Devetak, The private classical capacity and quantum capacity of
a quantum channel, IEEE Trans. Inf. Theory {\bf 51}, 44 (2005).

\bibitem{devetak-2005}
I. Devetak and P. W. Shor, The capacity of a quantum channel for
simultaneous transmission of classical and quantum information,
Commun. Math. Phys. {\bf 256}, 287 (2005).

\bibitem{giovannetti-fazio-2005}
V. Giovannetti and R. Fazio, Information-capacity description of
spin-chain correlations, Phys. Rev. A {\bf 71}, 032314 (2005).

\bibitem{werner-1989}
R. F. Werner, Quantum states with Einstein-Podolsky-Rosen
correlations admitting a hidden-variable model, Phys. Rev. A {\bf
40}, 4277 (1989).

\bibitem{horodecki-2003} M. Horodecki, P. W. Shor, and M. B. Ruskai,
Entanglement breaking channels, Rev. Math. Phys. {\bf 15}, 629
(2003).

\bibitem{moravcikova-2010}
L. Morav\v{c}\'{i}kov\'{a} and M. Ziman, Entanglement-annihilating
and entanglement-breaking channels, J. Phys. A: Math. Theor. {\bf
43}, 275306 (2010).

\bibitem{ffk-2018}
S. N. Filippov, V. V. Frizen, and D. V. Kolobova, Ultimate
entanglement robustness of two-qubit states against general local
noises, Phys. Rev. A {\bf 97}, 012322 (2018).

\bibitem{lami-huber-2015}
L. Lami and M. Huber, Bipartite depolarizing maps, J. Math. Phys.
{\bf 57}, 092201 (2016).

\bibitem{fmj-2017}
S. N. Filippov, K. Y. Magadov, and M. A. Jivulescu, Absolutely
separating quantum maps and channels, New J. Phys. {\bf 19},
083010 (2017).

\bibitem{peres-1996}
A. Peres, Separability criterion for density matrices, Phys. Rev.
Lett. {\bf 77}, 1413 (1996).

\bibitem{horodecki-1996}
M. Horodecki, P. Horodecki, and R. Horodecki, Separability of
mixed states: necessary and sufficient conditions, Phys. Lett. A
{\bf 223}, 1 (1996).

\end{thebibliography}
\end{document}